\documentclass[12pt]{article}
\usepackage {palatino}
\usepackage[OT1]{fontenc} 

\usepackage{amsfonts,amsmath,amsthm,amssymb,verbatim,nicefrac,mathtools}
\usepackage[linesnumbered,ruled,vlined]{algorithm2e}

\SetCommentSty{mycommfont}
\usepackage{aliascnt}

\usepackage[pdftex,letterpaper,margin=1in]{geometry} 
\usepackage{afterpage} 
\geometry{letterpaper} 
\usepackage[round,longnamesfirst]{natbib}
\usepackage{graphicx, color}
\usepackage{setspace}
\usepackage[colorlinks,pagebackref=false]{hyperref}
\usepackage{epsf}
\usepackage{latexsym, stmaryrd, bm}
\usepackage[titletoc,title]{appendix}
\usepackage{titlesec}
\usepackage{marginnote}
\setlength {\marginparwidth }{2cm} 
\usepackage{graphicx, subfigure, ftnxtra}
\usepackage{todonotes}
\usepackage{enumitem}
\setlist[enumerate]{topsep=0pt,itemsep=0pt} 
\setlist[itemize]{topsep=0pt,itemsep=0pt} 


\definecolor{dark-red}{rgb}{0.60,0.15,0.15}
\definecolor{dark-blue}{rgb}{0.15,0.15,0.65}
\definecolor{light-blue}{rgb}{.2,1,1}
\definecolor{medium-blue}{rgb}{0,0,0.5}
\definecolor{dark-green}{rgb}{0,.5,0}

\hypersetup{
    pdftitle={Lemonade from Lemons},
    pdfauthor={Kartik and Zhong},
    citecolor={dark-blue},
    bookmarksnumbered=true,
    urlcolor=dark-red,
    linkcolor=dark-red,}

\newenvironment{theoremast}[1]
  {
   \begin{theoremst}}
  {\end{theoremst}}

\theoremstyle{plain}
\newtheorem{theorem}{Theorem}
\newtheorem{theoremst}{Theorem}

\newtheorem{lemma}{Lemma}

\newtheorem{proposition}{Proposition}

\newtheorem{condition}{Condition}
\newtheorem{corollary}{Corollary}

\theoremstyle{definition}
\newtheorem{definition}{Definition}

\theoremstyle{remark}
\newtheorem{remark}{Remark}

\newcommand{\citepos}[1]{\citeauthor*{#1}'s (\citeyear{#1})}
\newcommand{\citeauthorpos}[1]{\citeauthor*{#1}'s}
\makeatletter
  \renewcommand\@seccntformat[1]{\csname the#1\endcsname.{\hskip.7em\relax}} 
\makeatother

\onehalfspacing
\parindent.2in 
\setlength{\parskip}{6pt plus 1pt minus 1pt}
\setlength{\footnotesep}{0.15in}
\titlespacing\section{0pt}{10pt plus 2pt minus 2pt}{4pt plus 2pt minus 2pt} 
\titlespacing\subsection{0pt}{6pt plus 2pt minus 2pt}{2pt plus 2pt minus 2pt} 
\titlespacing\subsubsection{0pt}{6pt plus 2pt minus 2pt}{0pt plus 2pt minus 2pt} 
\titlespacing\paragraph{0pt}{8pt plus 2pt minus 2pt}{8pt plus 2pt minus 2pt} 

\newcommand{\overbar}[1]{\mkern 1.5mu\overline{\mkern-1.5mu#1\mkern-1.5mu}\mkern 1.5mu}

\renewcommand{\epsilon}{\varepsilon}
\renewcommand{\phi}{\varphi}
\renewcommand{\bar}{\overline}
\DeclareMathOperator*{\Supp}{Supp}

\DeclareMathOperator*{\argmax}{arg\,max}

\newcommand{\mailto}[1]{\href{mailto:#1}{\texttt{#1}}}
\def\bi{\begin{itemize}}
\def\ei{\end{itemize}}

\def\E{\mathbb{E}}

\def\Reals{\mathbb{R}}

\def\TP2{\text{TP}_2}

\def\Pset{\mathbf{\Pi}}
\def\piall{\ubar{\pi}_s}
\def\pius{\ubar{\pi}_s^{us}}
\def\pifb{\ubar{\pi}_s^{fb}}
\def\d{\mathrm{d}}

\let\oldfootnote\footnote
\renewcommand\footnote[1]{\oldfootnote{\hspace{.4mm}#1}}

\makeatletter
\renewenvironment{proof}[1][\proofname] {\par\pushQED{\qed}\normalfont\topsep6\p@\@plus6\p@\relax\trivlist\item[\hskip\labelsep\bfseries#1\@addpunct{.}]\ignorespaces}{\popQED\endtrivlist\@endpefalse}
\makeatother

\let\oldFootnote\footnote
\newcommand\nextToken\relax

\renewcommand\footnote[1]{%
    \oldFootnote{#1}\futurelet\nextToken\isFootnote}

\newcommand\isFootnote{%
    \ifx\footnote\nextToken\textsuperscript{,}\fi}

\newcommand{\ubar}[1]{\mkern 1.5mu\underline{\mkern-1.5mu#1\mkern-1.5mu}\mkern 1.5mu}

\begin{document}

\onehalfspacing 

\begin{titlepage}

\title{Lemonade from Lemons:\\ Information Design and Adverse Selection\footnote{We thank Laura Doval, David Kim, Wenhao Li, Pietro Ortoleva, Alessandro Pavan, and anonymous referees for useful comments. We also received helpful feedback from a number of audiences. Tianhao Liu and Yangfan Zhou provided excellent research assistance.}}
\date{\today}
\author{Navin Kartik\thanks{Department of Economics, Yale University. Email: \mailto{nkartik@gmail.com}. The author was affiliated with Columbia University during most of the work on this paper.} 
\and Weijie Zhong\thanks{Graduate School of Business, Stanford University. Email: \mailto{weijie.zhong@stanford.edu}.}}

\date{\today\\[5pt] {\small (First Draft: December 2, 2019)}}
\maketitle

\begin{abstract}
	A seller posts a price for a single object. The seller's and buyer's values may be interdependent. We characterize the set of payoff vectors across all information \mbox{structures}. Simple feasibility and individual-rationality constraints identify the \mbox{payoff} set. The buyer can obtain the entire surplus; often, non-informational mechanisms cannot enlarge the payoff set. We also study payoffs when the buyer is more informed than the seller, and when the buyer is fully informed. All three payoff sets coincide (only) in notable special cases---in particular, when there is complete breakdown in a ``lemons market'' with an uninformed seller and fully-informed buyer.
\end{abstract}

\bigskip

\thispagestyle{empty}
\end{titlepage}

\section{Introduction}

\paragraph{Motivation.}
Asymmetric information affects market outcomes, both in terms of efficiency and distribution. For example, adverse selection can generate dramatic market failure \citep{akerlof1970} or skew wages in labor markets \citep{greenwald86}, while consumers can secure information rents from a monopolist \citep{MussaRosen78}. Much of the work in information economics prior to the last decade took the market participants' information as given and studied properties of a particular market structure or mechanism, or tackled these properties across various mechanisms.

Our paper joins a recent wave of research---elaborated subsequently---by instead asking: what is the scope for different market outcomes as the participants' information varies? We are motivated by the fact that in the digital age, the nature of information that sellers (e.g., Amazon) have about consumers is ever-changing. Consumers and regulators do have some control over this information, of course. In some cases, it is plausible that a seller's information is a subset of the consumer's. But in other cases, the seller may well know \emph{more} about the consumer's value for a product, or at least have some information the consumer herself does not. This is especially relevant for products the consumer is not already familiar with. Indeed, numerous firms make tailored recommendations to consumers about the products they carry. With social media and other sources of information diffusion, the possible correlation in information across two sides of a market seems truly limitless. It is this variety of possible information that our paper focuses on.

Our paper fixes a simple, canonical market mechanism and studies the possible market outcomes across a variety of information structures, including \emph{all} of them. We model two parties, Buyer and Seller, who can trade a single object. Buyer's value for the object is a random $v\in [\underline v,\overline v]\subset \Reals$. Seller's cost of providing the object, or equivalently, her value from not trading, is $c(v)\leq v$. Thus, values may be interdependent, but trade is always efficient.\footnote{We describe here our baseline model presented in \autoref{sec:model}. \autoref{sec:discussion} discusses extensions, including cases when Buyer's value does not pin down Seller's cost and when trade is not always efficient.} The environment, i.e., the function $c(\cdot)$ and the distribution of $v$, is commonly known. Seller posts a price $p\in \Reals$, and Buyer decides whether to buy.

This stylized setting subsumes a variety of possibilities, depending on the shape of the cost function $c(\cdot)$ and the parties' information about the value $v$. With an informed Buyer and an uninformed Seller, there is adverse selection when $c(\cdot)$ is increasing, while there is favorable or advantageous selection when $c(\cdot)$ is decreasing.\footnote{\citet{Jovanovic82} uses the term `favorable selection'. \citet{EinavFinkelstein11} use `advantageous', and discuss both adverse and advantageous selection in the context of insurance markets, with references to empirical evidence on both.} If, on the other hand, Seller is better informed than Buyer, signaling becomes relevant; the price can serve as a credible signal if the two parties' information is suitably correlated \citep[e.g.,][]{BagwellRiordan91}. A constant $c(\cdot)$ captures an environment in which there is no uncertainty about Seller's cost; this is the canonical monopoly pricing problem when Seller is uninformed about $v$, and third-degree price discrimination when Seller has some partial information while Buyer is better informed.

\paragraph{Summary of results.} For any given environment (i.e., Seller's cost function and the distribution of Buyer's values), we seek to identify the possible market outcomes. Specifically, we  are interested in the ex-ante expected payoffs that obtain, given sequentially rational behavior, in an equilibrium under \emph{some} information structure.\footnote{As detailed in \autoref{sec:model}, an information structure specifies a joint distribution of private signals for each party conditional on the value $v$. This induces an extensive-form game of incomplete information. Our primary solution concept is weak Perfect Bayesian equilibrium; we also address refinements for our constructive arguments.} We provide three results, each of which covers a different class of information structures. Our main theorems are Theorems \ref{thm:joint}/\ref{thm:joint:discrete}, which impose no restrictions on information, and \autoref{thm:seller}, which applies when Buyer is better informed than Seller in the sense of \citet{Blackwell53}; in fact, \autoref{thm:seller} applies more broadly, as elaborated later. \autoref{thm:buyer} concerns a fully-informed Buyer who knows his value $v$.  We view each of these three cases as intellectually salient and economically relevant. Plainly, these payoff sets must be ordered by set inclusion: \autoref{thm:joint}'s is the largest; \autoref{thm:seller}'s is intermediate; and \autoref{thm:buyer}'s the smallest. Figure \ref{fig:illustration} below summarizes.

\begin{figure}[htbp]
	\centering

\tikzset{every picture/.style={line width=0.75pt}} 

\begin{tikzpicture}[x=0.75pt,y=0.75pt,yscale=-1,xscale=1]

\draw    (168.03,39.75) -- (190.59,39.75) ;
\draw [shift={(192.59,39.75)}, rotate = 180] [fill={rgb, 255:red, 0; green, 0; blue, 0 }  ][line width=0.08]  [draw opacity=0] (7.2,-1.8) -- (0,0) -- (7.2,1.8) -- cycle    ;
\draw  [draw opacity=0][fill={rgb, 255:red, 0; green, 0; blue, 255 }  ,fill opacity=0.5 ] (168.03,39.75) -- (296.82,169.46) -- (168.03,169.46) -- cycle ;
\draw    (168.03,259.35) -- (399.95,259.35) ;
\draw [shift={(401.95,259.35)}, rotate = 180] [fill={rgb, 255:red, 0; green, 0; blue, 0 }  ][line width=0.08]  [draw opacity=0] (12,-3) -- (0,0) -- (12,3) -- cycle    ;
\draw    (168.03,259.35) -- (168.03,22.85) ;
\draw [shift={(168.03,20.85)}, rotate = 90] [fill={rgb, 255:red, 0; green, 0; blue, 0 }  ][line width=0.08]  [draw opacity=0] (12,-3) -- (0,0) -- (12,3) -- cycle    ;
\draw [color={rgb, 255:red, 0; green, 149; blue, 255 }  ,draw opacity=1 ]   (168.03,39.75) -- (385.47,259.35) ;
\draw    (186.28,169.46) .. controls (192.2,158.26) and (183.05,159.25) .. (225.72,159.47) ;
\draw [shift={(227.7,159.48)}, rotate = 180.25] [fill={rgb, 255:red, 0; green, 0; blue, 0 }  ][line width=0.08]  [draw opacity=0] (7.2,-1.8) -- (0,0) -- (7.2,1.8) -- cycle    ;
\draw    (309.1,181.75) -- (351.67,181.75) ;
\draw [shift={(353.67,181.75)}, rotate = 180] [fill={rgb, 255:red, 0; green, 0; blue, 0 }  ][line width=0.08]  [draw opacity=0] (7.2,-1.8) -- (0,0) -- (7.2,1.8) -- cycle    ;
\draw  [draw opacity=0][fill={rgb, 255:red, 0; green, 42; blue, 255 }  ,fill opacity=0.28 ] (333.67,206.31) -- (167.86,206.31) -- (168.03,169.46) -- (296.82,169.46) -- cycle ;
\draw  [draw opacity=0][fill={rgb, 255:red, 0; green, 42; blue, 255 }  ,fill opacity=0.11 ] (358.23,230.87) -- (167.86,230.87) -- (167.86,206.31) -- (333.67,206.31) -- cycle ;
\draw  [draw opacity=0][fill={rgb, 255:red, 0; green, 0; blue, 255 }  ,fill opacity=0.5 ] (295.85,57.17) -- (308.14,69.45) -- (295.85,69.45) -- cycle ;
\draw  [draw opacity=0][fill={rgb, 255:red, 0; green, 42; blue, 255 }  ,fill opacity=0.28 ] (308.14,94.02) -- (295.85,94.02) -- (295.85,81.74) -- cycle ;
\draw  [draw opacity=0][fill={rgb, 255:red, 0; green, 42; blue, 255 }  ,fill opacity=0.11 ] (295.85,130.86) -- (295.85,118.58) -- (308.14,130.86) -- cycle ;
\draw  [draw opacity=0][fill={rgb, 255:red, 251; green, 0; blue, 0 }  ,fill opacity=1 ] (184.21,169.46) .. controls (184.21,168.36) and (185.14,167.47) .. (186.28,167.47) .. controls (187.43,167.47) and (188.35,168.36) .. (188.35,169.46) .. controls (188.35,170.57) and (187.43,171.46) .. (186.28,171.46) .. controls (185.14,171.46) and (184.21,170.57) .. (184.21,169.46) -- cycle ;
\draw  [draw opacity=0][fill={rgb, 255:red, 251; green, 0; blue, 0 }  ,fill opacity=1 ] (165.96,39.75) .. controls (165.96,38.64) and (166.88,37.75) .. (168.03,37.75) .. controls (169.17,37.75) and (170.1,38.64) .. (170.1,39.75) .. controls (170.1,40.85) and (169.17,41.74) .. (168.03,41.74) .. controls (166.88,41.74) and (165.96,40.85) .. (165.96,39.75) -- cycle ;

\draw (148.58,13.63) node [anchor=north west][inner sep=0.75pt]  [font=\footnotesize]  {$\pi _{s}$};
\draw (404.65,252.29) node [anchor=north west][inner sep=0.75pt]  [font=\footnotesize]  {$\pi _{b}$};
\draw (163.09,266.8) node [anchor=north west][inner sep=0.75pt]  [font=\footnotesize]  {$0$};
\draw (227.13,150.3) node [anchor=north west][inner sep=0.75pt]  [font=\footnotesize] [align=left] {Akerlof};
\draw (312.19,57.9) node [anchor=north west][inner sep=0.75pt]  [font=\footnotesize] [align=left] {Fully-Informed Buyer};
\draw (310.71,80.9) node [anchor=north west][inner sep=0.75pt]  [font=\footnotesize] [align=left] {Uninformed Seller\\(More Informed Buyer)};
\draw (311.57,119.44) node [anchor=north west][inner sep=0.75pt]  [font=\footnotesize] [align=left] {All info. Structures};
\draw (37.68,224.03) node [anchor=north west][inner sep=0.75pt]  [font=\footnotesize]  {$\max\{\underline{v} -\mathbb{E}[ c( v)] ,0\}$};
\draw (154.87,34.88) node [anchor=north west][inner sep=0.75pt]  [font=\scriptsize] [align=left] {A};
\draw (155.36,165.07) node [anchor=north west][inner sep=0.75pt]  [font=\scriptsize] [align=left] {B};
\draw (155.36,202.41) node [anchor=north west][inner sep=0.75pt]  [font=\scriptsize] [align=left] {D};
\draw (155.36,225.99) node [anchor=north west][inner sep=0.75pt]  [font=\scriptsize] [align=left] {F};
\draw (303.72,163.6) node [anchor=north west][inner sep=0.75pt]  [font=\scriptsize] [align=left] {C};
\draw (340.08,201.91) node [anchor=north west][inner sep=0.75pt]  [font=\scriptsize] [align=left] {E};
\draw (365.62,224.51) node [anchor=north west][inner sep=0.75pt]  [font=\scriptsize] [align=left] {G};
\draw (261.98,58.04) node [anchor=north west][inner sep=0.75pt]  [font=\footnotesize] [align=left] {ABC};
\draw (261.27,82.96) node [anchor=north west][inner sep=0.75pt]  [font=\footnotesize] [align=left] {ADE};
\draw (261.4,120.15) node [anchor=north west][inner sep=0.75pt]  [font=\footnotesize] [align=left] {AFG};
\draw (194.39,30.9) node [anchor=north west][inner sep=0.75pt]  [font=\footnotesize] [align=left] {Full or No Information};
\draw (355,167.1) node [anchor=north west][inner sep=0.75pt]  [font=\footnotesize] [align=left] {Expected Surplus\\Frontier};

\end{tikzpicture}
	\caption{Outcome under different restrictions on information structures}
	\label{fig:illustration}
\end{figure}
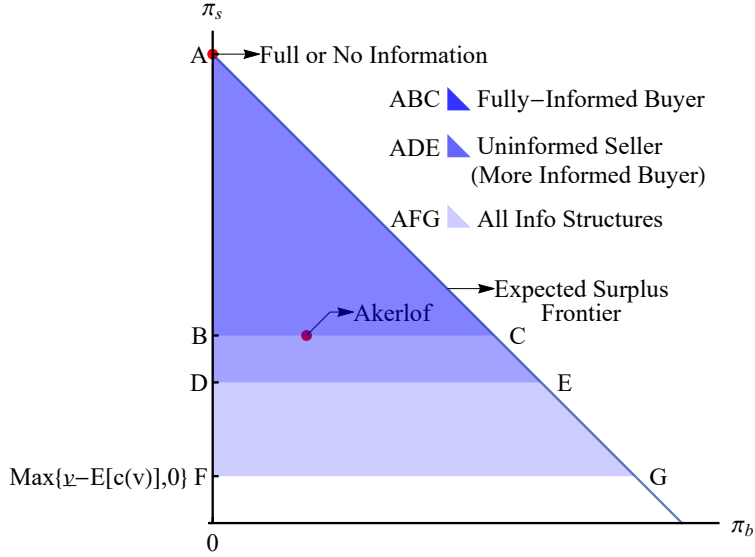

In the figure's axes, $\pi_b$ and $\pi_s$ represent respectively Buyer's and Seller's ex-ante expected utilities or payoffs (for readability, we often drop the ``expected'' qualifier). The no-trade payoffs are normalized to zero. The three triangles, $AFG$, $ADE$, and $ABC$, depict Theorems \ref{thm:joint}--\ref{thm:buyer} respectively. That payoffs must lie within $AFG$ is straightforward: Buyer can guarantee himself a payoff of zero by not purchasing; Seller can guarantee herself not only a payoff of zero (by posting any price $p>\overline v$, which will not be accepted) but also $\underline v-\E[c(v)]$ (by pricing at or just below $\underline v$, which will be accepted); and the sum of payoffs cannot exceed the trading surplus $\E[v-c(v)]$. We refer to the first two constraints as \emph{individual rationality} and the third as \emph{feasibility}.

\autoref{thm:joint} says that {every} feasible and individually rational payoff pair can be implemented, i.e., obtains in an equilibrium under some information structure. It is immediate that point $A$ obtains when both parties learn $v$ (full information) or neither party has any information (no information). More interestingly, at the point $G$ trade occurs with probability one and Buyer obtains the entire surplus, despite Seller posting the price. While perhaps surprising, this outcome obtains with sparse information structures. For simplicity, suppose $\E[c(v)|v>\underline v]\leq \max\{\underline v,\E[c(v)]\}$. Then Buyer can be uninformed while Seller learns whether $v=\underline v$ or $v>\underline v$. In equilibrium, Seller prices at $p=\max\{\underline v,\E[c(v)]\}$ regardless of her signal and Buyer purchases. If Seller were to deviate to a higher price, Buyer would reject because he believes $v=\underline v$. \autoref{sec:joint} explains how a single information structure in fact implements every point in the triangle $AFG$. \autoref{thm:joint:discrete} there discusses how a richer information structure using imperfectly-correlated signals ensures implementation in \citepos{KW82} sequential equilibrium in discretized versions of the model.

Turning to \autoref{fig:illustration}'s triangle $ADE$, \autoref{thm:seller} establishes that the payoff pair in any equilibrium when Buyer is better informed than Seller arises in an equilibrium of an(other) information structure in which Seller is uninformed.\footnote{We stipulate that a better-informed Buyer does not update his value from Seller's price, even off the equilibrium path, in line with the ``no signaling what you don't know'' requirement \citep{FT91} that is standard in versions of Perfect Bayesian Equilibrium and implied by sequential equilibrium.} In other words, there is no loss of generality in studying an uninformed Seller so long as Buyer is better informed. When $c(\cdot)$ is increasing, such information generates a game with adverse selection; when $c(\cdot)$ is decreasing there is favorable selection. Seller's payoff along the line segment $DE$ is the lowest payoff she can get in any information structure in which she is uninformed.\footnote{It is because Seller cannot commit to the price as a function of her signal that she can be harmed (i.e., receive a payoff lower than that on the $DE$ segment) with more information. However, \autoref{thm:seller} assures that Seller is not harmed so long as Buyer is better informed.} \autoref{thm:seller} further establishes that any point within the $ADE$ triangle can be implemented with some such information structure by suitably varying Buyer's information. In fact, we show that higher slices of the triangle (i.e., those corresponding to larger Seller's payoff) can always be implemented by reducing Buyer's information in the sense of \citet{Blackwell53}. We also explain in \autoref{sec:seller} why the triangle $ADE$ actually characterizes all payoffs that can obtain when Buyer does not update from Seller's price, even if Buyer is not better informed than Seller.

Finally, the $ABC$ triangle in \autoref{fig:illustration} depicts \autoref{thm:buyer}, which characterizes all payoff pairs when Buyer is fully informed, i.e., learns $v$. We use the term ``Akerlof'' to describe a fully-informed Buyer and an uninformed Seller, as these information structures are standard in the adverse-selection literature; the corresponding payoff pair is marked as such in the figure. Depending on the environment's primitives, the Akerlof point can be anywhere on the segment $BC$, including at the extreme points. Any feasible payoff pair that satisfies Buyer's individual rationality and gives Seller at least her Akerlof payoff can be implemented with a fully-informed Buyer by suitably varying Seller's information.

An implication of Theorems \ref{thm:joint}--\ref{thm:buyer} is that it is without loss, in terms of ex-ante equilibrium payoffs, to focus on information structures in which Buyer is fully informed if and only if Seller's Akerlof payoff coincides with her individual rationality constraint. This coincidence occurs if and only if the Akerlof market can have full trade (Seller prices at $p=\underline v$ and gets payoff $\underline v-\E[c(v)]\geq 0$) or no trade (the price is $p \geq \overline v$ and both parties' payoffs are $0$). As detailed in \autoref{r:FBsufficient} of \autoref{sec:buyer}, in all other cases the point $B$ in \autoref{fig:illustration} is distinct from the point $D$ (and hence also $F$), which means that Buyer can obtain a higher payoff with less-than-full information, while keeping Seller uninformed.\footnote{This substantially broadens the message of \citet{RS17} on the benefits to restricting Buyer's information.} Furthermore, under a reasonable condition, if Seller's individual rationality constraint is zero (i.e., $\underline v\leq \E[c(v)]$), then point $D$ is also distinct from $F$; see \autoref{r:USsufficient} in \autoref{sec:seller}. When $D$ and $F$ are distinct, maximizing Buyer's payoff, i.e., achieving point $G$, requires Seller to have some information Buyer does not and an equilibrium with {price-dependent beliefs}: after conditioning on his signal, Buyer must update about $v$ from the price either on or off the equilibrium path.

While our results do not speak directly to the economics of privacy, recently reviewed by \citet{ATW16}, they do offer a notable twist. Consumer welfare can be higher when a monopolist has information about a consumer's valuation that the consumer does not; indeed, maximizing consumer welfare in our single-unit setting frequently necessitates that. This is an implication of our Theorems \ref{thm:joint}/\ref{thm:joint:discrete} and \autoref{thm:seller}. 

Relatedly, we view those results as cautioning against assuming away the possibility that sellers have information buyers do not---not only does this seem relevant in practice, as discussed earlier, but it has significant payoff consequences. An alternative perspective on \autoref{thm:joint}/\autoref{thm:joint:discrete} is that they are negative results: ``anything goes'' without restrictions on information structures or equilibria. From that perspective, \autoref{thm:seller} reveals that what is essential to restrain payoffs to its smaller set (triangle $ADE$ in \autoref{fig:illustration}) is that prices do not provide Buyer with information. Whether this is a consequence of Buyer being better informed than Seller or a principle of equilibrium selection does not matter. \autoref{thm:buyer} characterizes the additional payoff restrictions that obtain from Buyer being fully informed.

\paragraph{Related literature.}Our questions and results are most closely related to \cite*{BBM15}, \cite{RS17}, and \citet[Section 4]{MakrisRenou18}. These papers---only the relevant section of the third paper---study the monopoly pricing problem in which there is uncertainty only about Buyer's valuation. This is the special case of our interdependent-values model with a constant function $c(v) \leq \underline v$.\footnote{Related to \citet{RS17} are also \citet{Du18} and \citet{LM19}, who consider worst-case profit guarantees for Seller in static and dynamic environments, respectively. \citet{TW19} qualify \citet{RS17} by allowing Seller to supply Buyer with additional information, although Seller cannot  have any private information of her own.}
We study interdependent values because of its importance in many economic environments; substantively and methodologically, we explore whether and how insights from the monopoly-pricing problem hold more generally.

\cite*{BBM15} assume Buyer is fully informed, and hence can only vary Seller's information.\footnote{Less directly related to our work, there are also recent papers extending the approach of \citet{BBM15} to monopolistic markets with multiple products \citep[e.g.,][]{Ichihashi20,HS22,TV24}, oligopolistic markets with differentiated products \citep[e.g.,][]{EGKL22,EGKL23,BBM24}, and profit-maximizing information design by intermediaries \cite[e.g.,][]{Yang22}.} Our \autoref{thm:buyer}, corresponding to triangle $ABC$ in \autoref{fig:illustration}, is a generalization of their main result to our environment; the key step in our methodology is to construct the ``isoprofit distributions'', which reduces to their ``extremal markets'' in monopoly pricing. An economic lesson from our analysis is that unlike in monopoly pricing, there are salient interdependent-value environments in which a  fully-informed Buyer can achieve all implementable payoffs, even when the Akerlof market has inefficiency; see \autoref{r:FBsufficient}.

\citet{RS17} assume Seller is uninformed and only vary Buyer's information. For the monopoly-pricing environment, they derive one part of our \autoref{thm:seller}, viz., they identify the triangle $ADE$ in \autoref{fig:illustration} as the implementable set when Seller is uninformed. Even for this result, our methodology is quite different from theirs because we do not assume a linear $c(\cdot)$ function; our methodology delivers new insights, including that noted in \autoref{r:garbling}, and also sheds a different light on \citeauthor{RS17}' payoff triangle.
When we specialize to a linear $c(\cdot)$, we can obtain a sharper characterization of the point $E$, which extends \citeauthor{RS17}' characterization of Buyer-optimal information to an interdependent-values environment; see \autoref{thm:seller_linear}. 

Our Theorems \ref{thm:joint}--\ref{thm:buyer} establish that the ``alignment'' principle highlighted by \citet{BBM24}---Buyer surplus/payoff is maximized when total surplus (the sum of Buyer and Seller payoffs) is maximized---extends with a single seller beyond the settings of \citet{BBM15} and \citet{RS17}, both in terms of the information structures considered and to interdependent values. However, we qualify this point in \autoref{ssec:negative:surplus} when trade may be inefficient, as was also illustrated by example in \citet[][online appendix]{RS17}.

While our main interest is in interdependent values, our results provide new insights even for monopoly pricing. \autoref{thm:seller} implies that the \citet{RS17} bounds are without loss so long as Buyer is better informed than Seller; or, more generally, in equilibria in which Buyer's belief is price independent after conditioning on his own signal. On the other hand, \autoref{thm:joint} establishes that any feasible and individually rational payoff pair can be implemented absent these restrictions: in particular, Buyer may even get all the surplus. This latter point has a parallel with \citet{MakrisRenou18}. As an application of their general results on ``revelation principles'' for information design in multi-stage games, \citepos{MakrisRenou18} Proposition 1 deduces an analog of our \autoref{thm:joint} for the (independent values) monopoly pricing problem. We share with \citeauthor{MakrisRenou18} an emphasis on sequential rationality;\footnote{\citeauthor{MakrisRenou18} use an apparatus of ``sequential Bayes correlated equilibrium'', which we do not. In our approach, note that Seller's individual rationality constraint described earlier hinges, when $\E[c(v)]<\underline v$, on Buyer's behavior being sequentially rational even off the equilibrium path.} we go further by establishing in \autoref{thm:joint:discrete} off-the-equilibrium-path belief consistency in the sense of sequential equilibrium \citep{KW82}.  We also show in \autoref{thm:joint}/\ref{thm:joint:discrete} that a single information structure implements all payoffs in the relevant triangle.

Other authors have studied different aspects of more specific changes of information in adverse-selection settings, maintaining that one side of the market is better informed than the other. \citet{Levin01} identifies conditions under which the volume of trade decreases when one party is kept uninformed and the other's information become more effective in the sense of \citet{Lehmann98}; see also \citet{Kessler01}. Assuming a linear payoff structure, \citet{BIJL21} consider how certain changes in Gaussian information affect the volume of trade, surplus, and a certain quantification of adverse selection. 

\citet{Dang08}, \citet{PavanTirole23}, and \citet{Thereze23} study endogenous costly information acquisition with interdependent values, using different assumptions about the nature and timing of information acquisition and the underlying economic environment. By contrast, we do not have strategic or costly information acquisition; rather, the informational environment is exogenously (and costlessly) varied.

In a model with interdependent values where they hold fixed a partially-informed buyer's information, \citet{DPR24} characterize the outcome---including what information the seller should have---that maximizes the buyer's payoff. They highlight that the solution typically involves the seller being partially informed. \citet{Garcia18} solve for socially optimal information provision in an insurance setting with adverse selection; owing to a cross-subsidization motive, full information disclosure is typically not optimal. \citet{PollrichStrausz23} study a third-party certifier in an adverse-selection environment. Their environment corresponds to our Buyer being fully informed and facing a competitive market of sellers. Among other things, they discuss implementable payoff vectors for Buyer (conditional on type) and their certifier.

\par
The rest of our paper proceeds as follows. We introduce our model, equilibrium concept(s), and certain classes of information structures in \autoref{sec:model}. \autoref{sec:results} presents the main results: implementable payoffs when the information structure is arbitrary or varies within canonical classes. \autoref{sec:discussion} contains discussion and extensions. All formal proofs are in the \hyperref[appendices]{Appendices}.

\section{Model}
\label{sec:model}
\subsection{Primitives}
\label{ssec:model:primitives}
There are two players, Seller and Buyer; given the assumptions that follow, Buyer can be viewed as representing a market of buyers. Seller may sell an indivisible good to Buyer. Buyer's value for the good is $v\in V\subset\Reals$, where $V$ is a compact (finite or infinite) set with $\ubar{v}\equiv \min V<\max V  \equiv \overbar{v}$. The value $v$ is drawn from a probability measure $\mu$ with support $V$. Seller's cost of production is given by a function $c(v)$.  We assume $c:V\to \Reals$ is continuous, $v- c(v)\ge 0$ for all $v$, and $\E[v-c(v)]>0$. Hence, the trading surplus is nonnegative for all $v$ and positive for a positive measure of $v$. (Throughout, expectations are with respect to the prior measure $\mu$ unless indicated otherwise; `positive' means `strictly positive' and similarly elsewhere.) Note that the function $c(v)$ need not be monotonic. \autoref{sec:discussion} extends the model to Seller's cost being stochastic even conditional on $v$, and considers the possibility of negative trading surplus. We call $\Gamma\equiv (c,\mu)$ an \emph{environment}. We refer to an environment with a constant $c(\cdot)$ function as that of \emph{monopoly pricing}.

An information structure consists of signal spaces for each party and a joint signal distribution. (We abuse terminology and refer to `distribution' even though `measure' would sometimes be more precise.) Formally, there is a probability space $(\Omega, \mathcal{F}, P)$, complete and separable metric spaces $T_s$ and $T_b$ (equipped with their Borel sigma algebras), and an integrable function $X:\Omega\to T_s\times T_b\times V$. 
We hereafter suppress the probability space and define, with an abuse of notation, $P(D)=P(X^{-1}(D))$ for any measurable $D\subset T_s\times T_b\times V$.\footnote{We write $\subset$ for ``weak subset''.} Each realization of random variable $X$ is a triplet $(t_b,t_s,v)$, where $t_b\in T_b$ is Buyer's  signal and $t_s\in T_s$ is Seller's signal. For $i\in\{s,b,v\}$, let $P_i$ denote the corresponding marginal distribution of $P$ on dimension $T_i$, with the convention $T_v\equiv V$. We require $P_v= \mu$; this is an iterated expectation or ``Bayes plausibility'' requirement. Denote an information structure by $\tau$.

The environment $\Gamma$ and information structure $\tau$ define the following \emph{game}:
\begin{enumerate}
	\item The random variables $(t_b,t_s,v)$ are realized. Signal $t_b$ is privately observed by Buyer and signal $t_s$ privately observed by Seller. Neither party observes $v$.
	\item Seller posts a price $p\in \Reals$. 
	\item Buyer accepts or rejects the price. If Buyer accepts, his von-Neumann Morgenstern payoff is $v-p$ and Seller's is $p-c(v)$. If Buyer rejects, both parties' payoffs are normalized to $0$.
\end{enumerate}

Note that because the signal spaces $T_b$ and $T_s$ are abstract and the two parties' signals can be arbitrarily correlated conditional on $v$, there is no loss of generality in assuming that each party privately observes their own signal. For example, public information can be captured by perfectly correlating (components of) $t_b$ and $t_s$.

We highlight that our notion of an information structure involves parties receiving information only at the outset. A more permissive notion would also allow Buyer to receive information after Seller posts her price, as in the literature on multi-stage information design \citep{MakrisRenou18,DovalEly20}. Permitting that would not change some of our results, in particular Theorems \ref{thm:joint}/\ref{thm:joint:discrete} and \autoref{thm:buyer}, but would expand the implementable set characterized in \autoref{thm:seller}. Methodologically, our interest in only ex-ante information means that existing ``revelation principles'' do not directly apply. 

Our assumption that Seller simply posts a price---rather than using more complicated mechanisms---is not restrictive for our main results.  \autoref{rem:prices} elaborates later.

\subsection{Strategies and Equilibria}
\label{ssec:PBE}
In the game defined by $(\Gamma,\tau)$, denote Seller's strategy by $\sigma$ and Buyer's by $\alpha$. Following \cite{milgrom1985distributional}, we define $\sigma$ as a distributional strategy: $\sigma$ is a joint distribution on $\mathbb{R}\times T_s$ whose marginal distribution on $T_s$ must be the Seller's signal distribution. So $\sigma(\cdot|t_s)$ is Seller's price distribution given her signal $t_s$.\footnote{Here $\sigma(\cdot|t_s)$ is the regular conditional distribution, which exists and is unique almost everywhere because $T_s$ is a standard Borel space \citep[pp.~229--230]{Durrett95}. Similarly for subsequent such notation; we drop ``almost everywhere'' qualifiers unless essential.} Buyer's strategy $\alpha: \mathbb{R}\times T_b\to [0,1]$ maps each price-signal pair $(p,t_b)$ into a trading probability. A strategy profile $(\sigma,\alpha)$ induces expected utilities for Buyer and Seller $(\pi_b,\pi_s)$ in the natural way:
\begin{align*}
		\pi_b&=\int (v-p)\alpha(t_b,p)\sigma(\d p| t_s)P(\d t_s,\d t_b,\d v),\\
		\pi_s&=\int (p-c(v))\alpha(t_b,p)\sigma(\d p| t_s)P(\d t_s,\d t_b,\d v).
\end{align*}

Our baseline equilibrium concept is weak Perfect Bayesian equilibrium. Since Seller's action is not preceded by Buyer's we can dispense with specifying beliefs for Seller. For Buyer, it suffices to focus on his belief about the value $v$ given his signal and the price; we denote this distribution by $\nu(v|p,t_b)$.
\begin{definition}
	\label{defi:PBE}
	A strategy profile $(\sigma,\alpha)$ and beliefs $\nu(v|p,t_b)$ is a \emph{weak perfect Bayesian equilibrium} (wPBE) of game $(\Gamma,\tau)$ if:
		\begin{enumerate}
		\item Buyer plays optimally at every information set given his belief:
			$$\alpha(p,t_b)=
			\begin{cases}
				1&\text{if }\E_{\nu(v|p,t_b)}[v]> p\\
				0&\text{if }\E_{\nu(v|p,t_b)}[v]< p;
			\end{cases}
			$$
		\item Seller plays optimally:
			$$\sigma\in \argmax_{\hat \sigma}\int (p-c(v))\alpha(p,t_b)\hat \sigma(\d p| t_s)P(\d t_s,\d t_b,\d v);$$
		\item Beliefs satisfy Bayes rule on path: for every measurable $D \subset \mathbb{R}\times T_s\times T_b \times V$,
			$$ \int_{D} \nu(\d v|p,t_b) \sigma(\d p|t_s)P(\d t_s,\d t_b, V)=\int_D \sigma(\d p|t_s)P(\d t_s,\d t_b,\d v).$$
	\end{enumerate}
\end{definition}

We have formulated Seller's optimality requirement ex ante, but Buyer's at each information set. The latter is needed to capture sequential rationality. The former is for (notational) convenience; this choice is inconsequential because Seller moves before Buyer.

Hereafter, ``equilibrium'' without qualification refers to a wPBE. As is well understood, wPBE permits significant latitude in beliefs off the equilibrium path. We will subsequently discuss refinements.

\subsection{Implementable Payoffs and Canonical Information Structures}
\label{ssec:info:set}
We now define the set of \emph{implementable} equilibrium outcomes---that is, the payoffs that obtain in some equilibrium under some information structure---and some canonical classes of information structures.

For a game $(\Gamma,\tau)$, let the equilibrium payoff set be
$$\Pi(\Gamma,\tau)\equiv \left\{ (\pi_b,\pi_s):\exists\text{ wPBE of $(\Gamma,\tau)$ with payoffs $(\pi_b,\pi_s)$} \right\}.$$ 
Denote the class of all information structures by $\mathbf{T}$ and define $$\Pset(\Gamma)\equiv \bigcup_{\tau\in \mathbf{T}}\Pi(\Gamma,\tau).$$ 
That is, for environment $\Gamma$, $\Pset(\Gamma)$ is the set of all equilibrium payoff pairs that obtain under some information structure.

\paragraph{Uninformed Seller.} An information structure has \emph{uninformed Seller} if $T_s$ is a singleton: Seller's own signal contains no information about Buyer's value $v$, and hence neither about Seller's cost $c(v)$.\footnote{Among all reasonable notions of uninformed Seller (e.g., one might only require Seller to have no information about $\E[v]$, while permitting information about $c(v)$), we take the most restrictive one. Our results will imply that a more permissive notion would not change the relevant implementable sets---in particular, that characterized in \autoref{thm:seller}. This point also applies to our notion of more-informed Buyer defined shortly.} When discussing such information structures, we write the associated distribution as just $P(t_b,v)$ and Seller's strategy as just $\sigma(p)$, omitting the argument $t_s$ in both cases. The class of all uninformed-Seller information structures is denoted $\mathbf{T}_{us}$. 

\paragraph{Fully-informed Buyer.} An information structure has \emph{fully-informed Buyer} if Buyer's signal fully reveals his value $v$. Formally, this holds if $T_b=V$ and the conditional distribution on $V$, $P(\cdot|t_b)$, satisfies $P(\{t_b\}|t_b)=1$. We denote the class of fully-informed-Buyer information structures by $\mathbf{T}_{fb}$. Note that a fully-informed Buyer need not know Seller's signal; but that is irrelevant to Buyer, because his optimal action after any price only depends on his known value.

\paragraph{More-informed Buyer.} An information structure has \emph{more-informed Buyer} if Buyer has more information than Seller. Formally, this holds when $v$ and $t_s$ are independent conditional on $t_b$, i.e., for any measurable $D_s\subset T_s$ and $D_v \subset V$, $P(D_s\times D_v|t_b)=P(D_s|t_b)P(D_v|t_b)$. Another way to interpret this requirement is that random variable $t_b$ must be statistically sufficient for $t_s$ with respect to $v$, i.e., $t_b$ is more informative than $t_s$ about $v$ in the sense of \citet{Blackwell53}. We denote the class of more-informed-Buyer information structures by $\mathbf{T}_{mb}$. Naturally, information structures with uninformed Seller or fully-informed Buyer are cases of more-informed Buyer: both $\mathbf{T}_{us}$ and $\mathbf{T}_{fb}$ are subclasses of $\mathbf{T}_{mb}$.

\paragraph{No updating from price.}
For more-informed-Buyer information structures, it is desirable to impose further requirements on Buyer's equilibrium belief. Since Seller's price can only depend on her own signal, and this signal contains no additional information about $v$ given Buyer's signal, the price is statistically uninformative about $v$ given Buyer's signal. Consequently, Buyer's posterior belief should be price independent once his signal has been conditioned upon. Formally, regardless of the price $p$, the equilibrium belief $\nu(\cdot|p,t_b)$ must satisfy $$\int_D \nu(\d v|p,t_b)P(\d t_b,T_s,V)=\int_D P(\d t_b,T_s,\d v)$$ for any measurable $D\subset T_b\times V$. We refer to this condition as \emph{price-independent beliefs}.\footnote{The condition is distinct from ``passive beliefs'', which is typically used to restrict beliefs after off-the-equilibrium-path events.} Note that although we have motivated the condition by Buyer being more informed than Seller, the condition is meaningful even otherwise, capturing the notion of equilibria in which there is no signaling by Seller, or, more precisely, that Buyer does not learn anything about his value $v$ from the price that he does not already learn from his own signal.  In a more-informed-Buyer information structure, price-independent beliefs would be implied by the ``no signaling what you don't know'' requirement \citep{FT91} frequently imposed in versions of perfect Bayesian equilibrium, and the concept of sequential equilibrium \citep{KW82} in finite versions of our setting.\footnote{An example clarifying our terminology may be helpful. If both Seller and Buyer are fully informed of $v$, then the natural equilibrium---the unique sequential equilibrium in a finite version of the game---has Seller pricing at $p=v$ and Buyer's belief being degenerate on $v$ regardless of Seller's price. This equilibrium has price-independent beliefs, even though Seller's price and Buyer's belief are perfectly correlated ex ante. The point is that Buyer's belief does not depend on price conditional on his signal.}
 
Some more notation will be helpful. Define
\begin{align*}
	&\Pi^*(\Gamma,\tau)\equiv \left\{ (\pi_b,\pi_s):\exists\text{ wPBE of $(\Gamma,\tau)$ with price-independent beliefs and payoffs} (\pi_b,\pi_s) \right\},\\[5pt]
	&\Pset^*(\Gamma)\equiv \bigcup_{\tau\in \mathbf{T}}\Pi^*(\Gamma,\tau), \text{ and }\\[5pt]
	&\Pset^*_i(\Gamma)\equiv \bigcup_{\tau\in \mathbf{T}_i}\Pi^*(\Gamma,\tau) \text{ for } i=us,fb,mb.
\end{align*}
So $\Pi^*$ and $\Pset^*$ are analogous to the implementable payoff sets $\Pi$ and $\Pset$ defined earlier, but restricted to equilibria with price-independent beliefs. $\Pset^*_{us}$, $\Pset^*_{fb}$ and $\Pset^*_{mb}$ are the implementable payoff sets when further restricted to uninformed-Seller, fully-informed-Buyer, and more-informed-Buyer information structures. Plainly, for any environment $\Gamma$, $$\Pset^*_{us}(\Gamma) \cup \Pset^*_{fb}(\Gamma) \subset \Pset^*_{mb}(\Gamma) \subset \Pset^*(\Gamma) \subset \Pset(\Gamma).$$

\section{Main Results}
\label{sec:results}
Our goal is to characterize equilibrium payoff pairs across information structures in an arbitrary environment $\Gamma$. In particular, we seek to characterize the five sets $\Pset(\Gamma)$, $\Pset^*(\Gamma)$, $\Pset^*_{mb}(\Gamma)$, $\Pset^*_{us}(\Gamma)$, and $\Pset^*_{fb}(\Gamma)$. Let $$S(\Gamma)\equiv \E[v-c(v)]$$ be the (expected) surplus from trade in environment $\Gamma$. This quantity will play an important role.

\subsection{All Information Structures}
\label{sec:joint}

Define Seller's payoff guarantee as
\begin{align*}
\piall(\Gamma)\equiv \max\left\{  \ubar{v}-\E\left[c(v) \right],0 \right\}.
\end{align*}
To interpret this quantity, observe that it is optimal for Buyer to accept the price $\ubar v$ no matter his belief. Therefore, Seller can guarantee herself the (expected) profit $\underline v-\E[c(v)]$ no matter what the information structure is. More precisely, she can guarantee $v-\E[c(v)]-\epsilon$ for any $\varepsilon>0$, since sequential rationality requires Buyer to accept any price $\ubar v-\epsilon$. Similarly, Seller can also guarantee zero profit offering a price $p>\overbar{v}$. Hence Seller's payoff in any equilibrium with any information structure must be at least $\piall(\Gamma)$.

On the other hand, Buyer can guarantee himself the payoff $\pi_b=0$ by rejecting all prices. It follows that the implementable set $\Pset(\Gamma)$ must satisfy three simple constraints: (1) Seller's ``individual rationality'' constraint $\pi_s\ge \piall(\Gamma)$; (2) Buyer's ``individual rationality'' constraint $\pi_b\ge 0$; and (3) the feasibility constraint $\pi_b+\pi_s\le S(\Gamma)$. 

Our first result is that these individual rationality and feasibility constraints are also sufficient for a payoff pair to be implementable.

\begin{theorem}
\label{thm:joint}
The set of implementable outcomes under all information structures and equilibria is
\begin{align*}
	\Pset(\Gamma)=
\left\{ 
			\begin{array}{ll}
				&\pi_b\ge0\\
				(\pi_b,\pi_s):&\pi_s\ge \piall(\Gamma)\\
											&\pi_b+\pi_s\le S(\Gamma) 
			\end{array}
		\right\}.
\end{align*}
\end{theorem}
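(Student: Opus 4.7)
Necessity of the three inequalities is established in the preamble. For sufficiency, my plan is to exhibit, for each target payoff $(\pi_b^*,\pi_s^*)$ in the stated set, an information structure and wPBE with those ex-ante payoffs. The approach is to first implement the three vertices of the triangle,
\[
A = (0,\, S(\Gamma)),\quad G = \bigl(S(\Gamma) - \ubar{\pi}_s(\Gamma),\, \ubar{\pi}_s(\Gamma)\bigr),\quad F = \bigl(0,\, \ubar{\pi}_s(\Gamma)\bigr),
\]
and then realize any convex combination by a public-randomization information structure that first draws a label $\omega \in \{A, F, G\}$ (independent of $v$) with the appropriate weights and then uses the corresponding corner structure conditional on $\omega$. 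Playing the corresponding corner equilibrium conditional on $\omega$ is a wPBE whose ex-ante payoffs are the target convex combination by linearity.

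For corner $A$, I take the no-information structure with Seller pooling at $p = \E[v]$ and Buyer accepting under the prior (indifference resolved as acceptance on path); off-path belief $\delta_{\ubar{v}}$ at $p > \E[v]$ deters upward deviations. For corner $G$, I reuse the no-information structure but with a different wPBE: Seller posts $p^* = \max\{\ubar{v},\, \E[c(v)]\}$ (so $\E[v]\ge p^*$ and Buyer accepts), with off-path belief $\delta_{\ubar{v}}$ at $p > p^*$ to deter upward deviations and downward deviations strictly worse for Seller. A short calculation shows payoffs equal $G$ in both cases defining $\ubar{\pi}_s(\Gamma)$. In Case 2 ($\ubar{v} \le \E[c(v)]$), $F = (0,0)$ is trivially implemented via a no-trade equilibrium.

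The Case 1 implementation of $F = (0,\, \ubar{v} - \E[c(v)])$ is where all of the work sits, and I expect it to be the main obstacle. Seller's sequential rationality forces $\pi_s \ge \ubar{\pi}_s(\Gamma) > 0$, so attaining equality requires Seller's equilibrium revenue function to be \emph{flat} at $\ubar{\pi}_s(\Gamma)$ over a range of prices that includes every undercut down to $\ubar{v}$---a Myerson-style revenue-equalization requirement. My plan is to take an uninformed-Seller structure in which Buyer's signal is designed so that, letting $G$ denote the CDF of the posterior mean $\E[v\mid t_b]$,
\[
\bigl(1 - G(p^-)\bigr)\Bigl(p - \E\bigl[c(v)\,\big|\,\E[v\mid t_b]\ge p\bigr]\Bigr) = \ubar{\pi}_s(\Gamma) \qquad \text{for all } p \in [\ubar{v},\, p^*],
\]
with $G$ placing an atom at $p^*$ and no mass above. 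Seller is then indifferent across $[\ubar{v}, p^*]$; selecting $p^*$ means only the atomic-posterior Buyer signal accepts, and at indifference, delivering $\pi_b = 0$ and $\pi_s = \ubar{\pi}_s(\Gamma)$. The hard part is constructing such a Buyer signal as a Bayes-plausible splitting of $\mu$: this is clean when $c(\cdot)$ is constant (the equation reduces to $G(p) = 1 - \ubar{\pi}_s(\Gamma)/p$ on an interval whose right endpoint is pinned by Bayes plausibility) but requires joint control of the posterior laws of $v$ and $c(v)$ in general, especially when $c$ is non-monotone. Where the Bayes-plausibility constraint proves binding, I would augment the structure by also giving Seller a signal whose realizations reduce her effective maximum per-signal revenue, trading added complexity for the extra flexibility needed to attain exactly $\ubar{\pi}_s(\Gamma)$.
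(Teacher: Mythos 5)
Your necessity argument and your constructions for corners $A$ and $G$ match the paper's, and the public-randomization convexification is a legitimate (if slightly different) way to fill in the triangle; the paper instead sweeps the whole set with one two-parameter family of strategy profiles on the trivial information structure. The genuine gap is your treatment of corner $F=(0,\piall(\Gamma))$ in Case 1 ($\ubar{v}>\E[c(v)]$), which you correctly flag as unfinished but misdiagnose as requiring a Myerson-style revenue-equalization design. That premise is wrong: attaining $\pi_s=\piall(\Gamma)$ does \emph{not} require Seller's revenue to be flat over an interval of prices, because off-path prices in $(\ubar{v},\E[v])$ can simply be met with the belief $\delta_{\ubar{v}}$ and rejected (exactly the device you already use for $A$ and $G$), and the one on-path price can be held down to $\piall(\Gamma)$ by letting \emph{Buyer mix}. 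Concretely, under the no-information structure have Seller post $p_h=\E[v]$ with probability one and Buyer accept it with probability $\alpha(p_h)=\bigl(\ubar{v}-\E[c(v)]\bigr)/\bigl(\E[v]-\E[c(v)]\bigr)\in(0,1]$; Buyer is indifferent at $p_h=\E[v]$, so this is sequentially rational, and it yields $\pi_s=\alpha(p_h)\bigl(\E[v]-\E[c(v)]\bigr)=\ubar{v}-\E[c(v)]=\piall(\Gamma)$ and $\pi_b=\alpha(p_h)\bigl(\E[v]-p_h\bigr)=0$. Deviations to $p\le\ubar{v}$ earn at most $\ubar{v}-\E[c(v)]$, and all other prices are rejected, so this is a wPBE implementing $F$ with no new information structure at all.

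Your proposed substitute---an uninformed-Seller posterior-mean distribution satisfying $\bigl(1-G(p^-)\bigr)\bigl(p-\E[c(v)\mid\E[v\mid t_b]\ge p]\bigr)=\piall(\Gamma)$ on $[\ubar{v},p^*]$---is not only unnecessary but is essentially the machinery the paper reserves for the \emph{price-independent-beliefs} results (Theorem 2 and the affine-cost formula), where the relevant Seller floor is $\pius(\Gamma)\ge\piall(\Gamma)$, generally with strict inequality. Constructing such a splitting with joint control of the conditional law of $c(v)$ for non-monotone $c$ is exactly the hard problem you acknowledge not solving, so as written the sufficiency proof is incomplete at the one vertex that cannot be omitted. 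Replacing that step with the mixed-acceptance equilibrium above closes the gap and, incidentally, collapses your three corner constructions into a single information structure.
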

\begin{proof}
See \autoref{appendix:B}.
\end{proof}
\autoref{thm:joint} says that the set $\Pset(\Gamma)$ corresponds to the triangle $AFG$ in \autoref{fig:illustration}. In particular, Buyer can receive the entire surplus beyond Seller's payoff guarantee. This is perhaps surprising, as Seller has substantial bargaining power. Note that when $\ubar v\le \E[c(v)]$, a reasonable condition, Seller's payoff guarantee is zero; in that case, \autoref{thm:joint} implies that Buyer can obtain the entire surplus.\footnote{In monopoly pricing with $c(\cdot)=\ubar v$, Seller's payoff guarantee of zero is lower than the revenue guarantee identified by \citet[Section 5]{Du18}, which is typically positive. \citeauthorpos{Du18} notion is different from ours.}

The proof of \autoref{thm:joint} is in fact straightforward. Suppose, for expositional simplicity, $\underline v\geq \E[c(v)]$. Fix the trivial information structure in which neither player receives any information and consider the following family of strategy profiles. Seller randomizes between two prices, some $p_l\in[\ubar{v},\E[v]]$ and $p_h=\E[v]$, with probability $\sigma(p_l)\in[0,1]$. Buyer accepts $p_l$ with probability one and accepts $p_h$ with probability $\alpha(p_h)$, where $\alpha(p_h)\in[0,1]$ is specified to make Seller indifferent between the two prices. That is, $\alpha(p_h)(p_h-\E[c(v)])=p_l-\E[c(v)]$. The expected payoffs from this strategy profile are
$$		\pi_b=\sigma(p_l)(\E[v]-p_l) \text{ and }  \pi_s=p_l-\E[c(v)].$$

As $p_l$ traverses the interval $[\ubar{v},\E[v]]$, Seller's payoff $\pi_s$ traverses $[\piall(\Gamma),S(\Gamma)]$. Given any $p_l$, Buyer's payoff $\pi_b$ traverses $[0,S(\Gamma)-\pi_s]$ as $\sigma(p_l)$ traverses $[0,1]$. Therefore, the proposed strategy profiles induce all the payoff pairs stated in \autoref{thm:joint}.

We are left to specify beliefs $\nu$ for Buyer. After prices $p_l$ and $p_h$ Buyer holds the prior belief $\mu$. After any other (necessarily off-path) price Buyer's belief is that $v=\ubar{v}$, and so Buyer rejects all prices $p\in [\underline v,\infty)\setminus \{p_l,p_h\}$. It is straightforward to confirm that the specified $(\sigma,\alpha,\nu)$ constitute a wPBE.

To get more insight into the construction above, consider its implication for monopoly pricing with $c(\cdot)=\ubar v$. The equilibrium with $p_l=\ubar v$ (hence $\alpha(p_h)=0$, i.e., the buyer rejects the higher price) and $\sigma(p_l)=1$ corresponds to the monopolist deterministically pricing at $\ubar v$ and Buyer purchasing. Given that both sides of the market receive no information, why doesn't the monopolist deviate to any price in $(\ubar v,\E[v])$? The reason is that in this equilibrium, the consumer will then not buy because he updates his belief to $v=\ubar v$. Such updating is compatible with wPBE because the equilibrium concept places no restrictions on off-path beliefs. This may seem like a game-theoretic misdirection: Buyer's beliefs are not consistent with ``no signaling what you don't know''. Put differently, since we have a (weakly) more-informed Buyer information structure, we ought to impose the price-independent beliefs condition described in \autoref{ssec:info:set}; that would imply Buyer must purchase at any price $p<\E[v]$.

But the message of \autoref{thm:joint} does not rely on the permissiveness of wPBE. To illustrate, continue with the above monopoly-pricing environment, and suppose $\underline v$ has positive prior probability. Consider Buyer remaining uninformed but Seller learning whether $v=\underline v$ or $v>\underline v$. Now Buyer's off-path belief that $v=\underline v$ is consistent with ``no signaling what you don't know''. More generally, using richer information structures, we can prove that any payoff pair identified in \autoref{thm:joint} can be approximately implemented as a \emph{sequential equilibrium} \citep{KW82} in a suitably discretized game.

\begin{theoremast}{thm:joint}		
\label{thm:joint:discrete}
		Fix any $\varepsilon>0$. There is $\Delta>0$ such that for any finite price grid with size $\Delta$, there is a finite information structure inducing a game with a set of sequential equilibrium payoffs that is an $\varepsilon$-net of $\Pset(\Gamma)$, the set of implementable outcomes under all information structures and equilibria.\footnote{An information structure is finite if the signal spaces $T_b$ and $T_s$ are finite. A finite price grid of size $\Delta$ means that the set of prices is finite, with minimum price no higher than $\underline v$ and maximum price no lower than $\overline v$, and any two consecutive prices are no more than $\Delta$ apart. Sequential equilibrium is defined in the obvious way for the ``induced'' finite game where Nature directly draws $(t_b,t_s)$, rather than first drawing $v$, and players' payoffs from trading are defined directly as $\left(\E[v|t_b,t_s]-p,p-\E[c(v)|t_b,t_s]\right)$.\par	For $Y\subset \Reals^2$ and $\epsilon>0$, the set $A\subset Y$ is an $\epsilon$-net of $Y$ if for each $y\in Y$ there is $a\in A$ such that $\lVert y-a\rVert < \epsilon$, where $\lVert \cdot \rVert$ is the Euclidean distance.}
\end{theoremast}
\begin{proof}
See \autoref{appendix:thm:1:discrete}.
\end{proof}
In fact, the proof of \autoref{thm:joint:discrete} establishes even more: the sequential equilibria in the discretized games satisfy a natural version of the D1 refinement \citep{CK87}. We relegate the  logic to the Appendix, but mention here that we use imperfectly-correlated signals for Buyer and Seller.\footnote{The idea behind D1 is to ask, for any off-path price, whether 
one type of Seller would deviate for any Buyer mixed response that another type would. Our construction has multiple Buyer types that are imperfectly correlated with Seller types. So different types of Seller have different beliefs about Buyer types. This blunts the power of dominance considerations, to the point where D1 does not exclude any Seller type from the support of Buyer's off-path belief.}

Even when our environment is specialized to monopoly pricing, it is worth highlighting two contrasts between \autoref{thm:joint}/\ref{thm:joint:discrete} and results of \citet*{BBM15} and \cite{RS17}. First, we find that by not restricting the monopolist to be uninformed, the implementable payoff set typically expands rather dramatically: trade can be efficient with the monopolist securing none of the surplus beyond her payoff guarantee, $\piall$, which may be zero. (\citeauthor{RS17} establish, implicitly, that the implementable set with an uninformed monopolist is a superset of \citeauthorpos{BBM15}, where the consumer is fully informed.) We will see in \autoref{sec:seller} that what is crucial to this expansion is price-\emph{dependent} beliefs. In particular, the proof of \autoref{thm:joint:discrete} uses an information structure in which Buyer is not better informed than Seller --- if he were, then sequential equilibrium would imply price-independent beliefs. Second, \autoref{thm:joint:discrete} establishes that for a given $\varepsilon>0$, a single information structure (and price grid) can be used to approximate the entire payoff set $\Pset(\Gamma)$, analogously to the construction described after \autoref{thm:joint} that used a single information structure.\footnote{In fact, if one lets the price grid vary with $\epsilon$, then a single information structure implements exactly, rather than approximately, in sequential equilibrium all payoffs $\epsilon$-away from the boundary of $\Pset(\Gamma)$.  See \autoref{prop:joint:discrete} in the Appendix for a formal statement.} \citet*{BBM15} and \cite{RS17}, on the other hand, vary information structures to span their payoff sets.

\subsection{More-informed Buyer and Price-independent Beliefs}
\label{sec:seller}

In some economic settings it is plausible that Buyer is more informed than Seller. How does a restriction to such information structures, i.e., $\tau \in \mathbf{T}_{mb}$, affect the implementable payoff set? It turns out that what is in fact crucial is price-independent beliefs. We have explained earlier why it is desirable to impose this condition when Buyer is more informed than Seller, but that the condition is well defined even otherwise. If Buyer is not more informed than Seller, then price-independent beliefs ought to be viewed as an equilibrium restriction. Readers should be bear in mind that, to reduce repetition, the qualifier ``with price-independent beliefs'' applies for the rest of this subsection unless stated explicitly otherwise.

It is useful to define 
\begin{equation}
	\pius(\Gamma)\equiv \inf\left\{ \pi_s:\exists (\pi_b,\pi_s)\in \Pset^*_{us}(\Gamma) \right\}\label{e:ubar_pi}	
\end{equation}
as the infimum payoff that an \emph{uninformed} Seller can obtain, no matter Buyer's information (among equilibria with price-independent beliefs, we stress). Plainly, $\pius(\Gamma)\geq \piall(\Gamma)$. In monopoly pricing with $V=[\underline v,\overline v]$ and $c(\cdot)=\underline v$, \citepos{RS17} characterization of the consumer-optimal information structure identifies $\pius$, establishing that $\pius>\piall$. If there is no trade due to adverse selection when Seller is uninformed and Buyer has some information, then $\pius=\piall=0$. We do not have a general explicit formula for $\pius$; \autoref{sec:linear} provides it for linear $c(\cdot)$. Nonetheless, we establish next that (i) the only additional restriction on equilibrium payoffs imposed by price-independent beliefs is a lower bound of $\pius$ for Seller, and (ii) uninformed-Seller information structures implement all such payoffs.

\begin{theorem}
\label{thm:seller}
The set of implementable outcomes under all information structures in equilibria with price-independent beliefs is
the same as the set of implementable outcomes under uninformed-Seller information structures in equilibria with price-independent beliefs. Moreoever:
\begin{enumerate}
\item \label{seller1}$\Pset^*(\Gamma)=\Pset^*_{mb}(\Gamma)=\Pset^*_{us}(\Gamma)$.
\item \label{seller2}$\Pset^*_{us}(\Gamma)=\{(\pi_b,\pi_s)\in \Pset(\Gamma) : \pi_s\geq \pius(\Gamma)\}$.
\item \label{seller3}For any $(\pi_b,\pi_s)\in \Pset^*_{us}(\Gamma)$ with $\pi_s>\pius(\Gamma)$, there is $\tau\in \mathbf{T}_{us}$ with $\Pi(\Gamma,\tau)=\left\{ (\pi_b,\pi_s) \right\}$.
\end{enumerate}
\end{theorem}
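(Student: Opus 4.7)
The plan is to handle the three parts in order. For Part~\ref{seller1}, the containments $\Pset^*_{us}\subset\Pset^*_{mb}\subset\Pset^*$ are immediate from $\mathbf{T}_{us}\subset\mathbf{T}_{mb}\subset\mathbf{T}$, and the substantive inclusion is $\Pset^*\subset\Pset^*_{us}$. Given any wPBE $(\sigma,\alpha,\nu)$ in some $(\Gamma,\tau)$ with price-independent beliefs, $\nu(v\mid p,t_b)=P(v\mid t_b)$ fixes $\alpha(p,t_b)$ by the sign of $\E[v\mid t_b]-p$, and Bayes consistency on the equilibrium path implies that under $P(\d t_s,\d t_b,\d v)\sigma(\d p\mid t_s)$ the posted price $p$ is conditionally independent of $v$ given $t_b$. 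I would build $\tau'\in\mathbf{T}_{us}$ in which Seller is uninformed and Buyer's signal is $(t_b,p^*)$, generated by first drawing $(t_s,t_b,v)\sim P$ and then $p^*\sim\sigma(\cdot\mid t_s)$; conditional independence keeps Buyer's posterior at $P(v\mid t_b)$ even after observing $p^*$ and the posted price, so Buyer's original $\alpha$ remains optimal. One then takes Seller's unconditional strategy to be the marginal $\sigma^*(p)=\int\sigma(p\mid t_s)P(\d t_s)$, and the ex ante payoffs coincide with those of the original equilibrium.

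For Part~\ref{seller2}, $\subset$ is immediate from the definition of $\pius$ together with $\Pset^*_{us}\subset\Pset$. For $\supset$, I would first show the infimum defining $\pius(\Gamma)$ is attained, by extracting a weak-$*$-convergent subsequence from a payoff-minimizing sequence in $\mathbf{T}_{us}$ and verifying via upper semicontinuity that the limit supports an equilibrium with $\pi_s=\pius$. Any feasible $(\pi_b,\pi_s)$ with $\pi_s\ge\pius$ and $\pi_b+\pi_s\le S(\Gamma)$ is then realized as a convex combination of (i) the attained $\pius$-equilibrium and (ii) a Pareto-efficient uninformed-Seller equilibrium on the line $\pi_b+\pi_s=S(\Gamma)$ (constructed along the lines of the strategy profile used in the proof sketch of \autoref{thm:joint}, restricted to uninformed Seller); the convex combination is then realized by a public randomization embedded into Buyer's signal, which preserves uninformedness of Seller.

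For Part~\ref{seller3}, the strict inequality $\pi_s>\pius$ provides slack to perturb the Part~\ref{seller2} construction into one with a unique equilibrium payoff vector. I would modify Buyer's signal distribution by a small amount in a direction that (i) makes Buyer strictly non-indifferent between accepting and rejecting at every on-path price and (ii) makes Seller's unconditional expected payoff function in $p$ strictly maximized only on the prescribed support of $\sigma^*$. These strict inequalities rule out alternative equilibria while, by continuity, the perturbation can be calibrated to hit $(\pi_b,\pi_s)$ exactly.

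The main obstacle is Part~\ref{seller1}'s verification that Seller's marginal $\sigma^*$ is in fact an unconditional best reply in $\tau'$: naive marginalization need not work because different realizations of $t_s$ generically have different conditional arg-maxes in the original. Price-independent beliefs, through the conditional independence $p\perp v\mid t_b$, supply exactly the cross-signal compatibility needed to engineer the $p^*$-component of Buyer's signal so that every price in $\Supp(\sigma^*)$ remains unconditionally optimal for Seller. I expect this to require a careful change-of-variables between $(t_s,p)$ and $(t_b,p^*)$ under the joint measure, together with the observation that price-independent beliefs reduce the Seller's relevant aggregate primitive to the joint distribution of $(\E[v\mid t_b],c(v))$ alone.
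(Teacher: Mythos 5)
There is a genuine gap, and it is concentrated in Part~\ref{seller1}. Your plan is to keep Buyer's information (augmented by a price recommendation $p^*$) and have the uninformed Seller play the marginal $\sigma^*$ of the original strategy, claiming the payoffs coincide. This cannot work in general, and you cannot ``engineer'' the $p^*$-component to rescue it. Concretely, let both parties be fully informed of $v$: this is a more-informed-Buyer structure with price-independent beliefs whose equilibrium gives Seller $\pi_s=S(\Gamma)$. In your $\tau'$, Buyer still knows $v$, so an uninformed Seller's payoff against any sequentially rational Buyer response is capped by $\pifb(\Gamma)$, which is strictly below $S(\Gamma)$ for nondegenerate $\mu$. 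The point $(0,S(\Gamma))$ \emph{is} in $\Pset^*_{us}(\Gamma)$, but only via a completely different Buyer information structure (an uninformed Buyer facing price $\E[v]$). The correct use of marginalization is weaker than what you attempt: dropping $t_s$ while keeping Buyer's strategy $\alpha$ (whose optimality survives precisely because beliefs are price-independent) shows only that Seller's optimal \emph{unconditional} payoff against $\alpha$ is at most her original $\pi_s$; since any uninformed-Seller equilibrium payoff is at least $\pius(\Gamma)$, one concludes $\pi_s\geq\pius(\Gamma)$, and then Part~\ref{seller2}'s implementation result (with a freshly constructed Buyer information structure) delivers the payoff pair. The set equality is not established pointwise by transporting the original equilibrium.

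Part~\ref{seller2} also has gaps. Convexifying the attained $\pius$-equilibrium (which sits at the efficient corner $E$) with points on the efficient frontier only spans the hypotenuse of the triangle, not its interior or bottom edge (e.g., $(0,\pius(\Gamma))$ is not in that convex hull). Moreover, a randomization ``embedded into Buyer's signal'' does not preserve Seller's incentives: an uninformed Seller facing a mixture of two Buyer populations need not find any mixture of the two component-optimal prices optimal; and a genuinely public randomization puts a nondegenerate signal in $T_s$, exiting $\mathbf{T}_{us}$ as defined. What is needed---and what the paper constructs---is a single garbling of the $\pius$-attaining structure that pools all Buyer posterior means in $(z^*,p^*)$ with just enough mass from above $p^*$ so that the pooled signal has posterior mean exactly $p^*$; this makes $p^*$ optimal for Seller, yields exactly the target $(\pi_b,\pi_s)$, and (since every deviation price then earns at most $\pius(\Gamma)<\pi_s$ under \emph{any} sequentially rational tie-breaking) delivers Part~\ref{seller3}'s unique implementation directly, without the indifference-removing perturbation you propose (which would in fact disturb the pooled type's posterior mean and hence the payoffs). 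Your compactness argument for attainment of the infimum does match the paper's approach, but note the required semicontinuity is delicate because Seller's profit is not the expectation of a Lipschitz functional of the posterior.
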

\begin{proof}
See \autoref{sec:proof_seller}.
\end{proof}
\begin{remark}
We believe the substance of \autoref{thm:seller} would hold using discretizations and sequential equilibria, analogous to \autoref{thm:joint:discrete}. As previously noted, sequential equilibrium implies price-independent beliefs when Buyer is more informed than Seller.
\end{remark}

To digest \autoref{thm:seller}, note that $\Pset^*(\Gamma)\supset \Pset^*_{mb}(\Gamma)\supset \Pset^*_{us}(\Gamma)$ is trivial. So part \ref{seller1} of the theorem amounts to establishing the reverse inclusions. The intuition for those---given part \ref{seller2}'s characterization of $\Pset^*_{us}$---is fairly straightforward: with price-independent beliefs, additional information cannot harm Seller, even though it could alter the set of equilibria. So Seller's lowest payoff obtains when she is uninformed.

The characterization in part \ref{seller2} of payoffs with an uninformed Seller corresponds to the triangle $ADE$ in \autoref{fig:illustration}. Part \ref{seller3} of the theorem assures ``unique implementation'' of all implementable payoffs satisfying $\pi_s>\pius(\Gamma)$. That is, for any such payoff pair, there is an uninformed-Seller information structure such that all equilibria (with price-independent beliefs) induce exactly that payoff pair. Unique implementation is appealing for multiple reasons, one of which is that it obviates concerns about which among multiple payoff-distinct equilibria is more reasonable.

Let us describe how we obtain the characterization of $\Pset^*_{us}(\Gamma)$ and  unique implementation. There are two steps. The first ensures that there is some information structure, call it $\tau^* \in \mathbf T_{us}$, that implements Seller's payoff $\pius(\Gamma)$. That is, we ensure that the infimum in \eqref{e:ubar_pi} is in fact a minimum.\footnote{The difficulty is in establishing suitable continuity. Uninformed-Seller information structures can be viewed as probability measures over Buyer's beliefs, with convergence in the sense of the weak* topology. This topology ensures continuity, with respect to probability measures, of expectations of continuous or at least Lipschitz (and bounded) functions. However, Seller's expected payoff is not the expectation of a Lipschitz function, as Seller's profit is truncated at the price she charges.} While this argument is technical, knowing $\tau^*$ exists is useful in what follows. The second, and economically insightful, step is to construct information structures that implement every point in the triangle $\Pset^*_{us}(\Gamma)$ by suitably garbling the information structure $\tau^*$. The construction is illustrated in \autoref{fig:thm:seller}. Consider the distribution of Buyer's posterior mean of his valuation $v$ in information structure $\tau^*$. (Given price-independent beliefs, Buyer's posterior mean is a sufficient statistic for his decision.) For simplicity, suppose this posterior-mean distribution has a density, as depicted by the red curve in \autoref{fig:thm:seller}. Fix any $(\pi_b,\pi_s) \in \Pset^*_{us}(\Gamma)$.

First, there is some number $z^*$ such that $\pi_b+\pi_s$ is the total surplus from trading only when Buyer's posterior mean is greater than $z^*$. Next, there is some price $p^*\geq z^*$ such that Seller's payoff is $\pi_s$ if all these trades were to occur at price $p^*$.\footnote{That $p^*\geq z^*$ follows from $\pi_s\geq \pius(\Gamma)$, as $\pius(\Gamma)$ itself is weakly larger than Seller's payoff from posting price $z^*$ (and thus trading with the same set of Buyer posterior means) under information structure $\tau^*$.}  Note that $p^*$ must be no larger than the expected Buyer posterior mean conditional on that being above $z^*$, for otherwise $\pi_b < 0$. We claim that the information structure $\tau^*$ can be garbled so that $p^*$ is an equilibrium price and trade occurs only when Buyer's posterior mean is greater than $z^*$.
		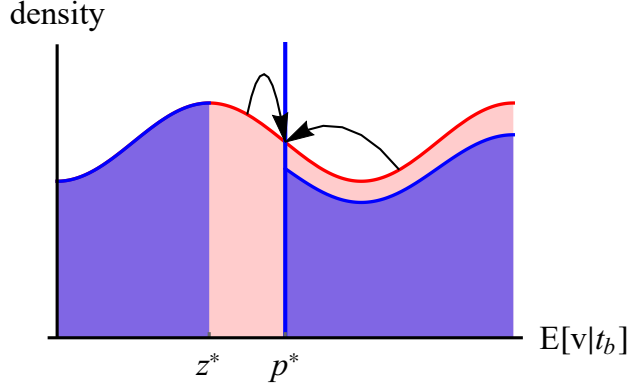
\begin{figure}[htbp]
		\centering

\tikzset{every picture/.style={line width=0.75pt}} 

\begin{tikzpicture}[x=0.75pt,y=0.75pt,yscale=-1,xscale=1]

\draw  [draw opacity=0][fill={rgb, 255:red, 255; green, 0; blue, 0 }  ,fill opacity=0.28 ] (192.98,142.92) -- (205.7,142.44) -- (220.88,146.15) -- (232.98,151.62) -- (257.18,163.38) -- (274.62,170.31) -- (294.31,174.01) -- (310.72,172.73) -- (326.1,168.05) -- (338.61,161.61) -- (359.95,146.39) -- (359.95,237.97) -- (147.61,238.23) -- (147.85,168.94) -- (156.47,167.73) -- (164.26,163.22) -- (174.11,154.2) -- (182.93,146.63) -- cycle ;
\draw  [draw opacity=0][fill={rgb, 255:red, 0; green, 0; blue, 255 }  ,fill opacity=0.35 ] (199.95,142.2) -- (199.95,238.26) -- (147.61,238.23) -- (147.85,168.94) -- (153.6,168.67) -- (162.01,164.83) -- (169.8,158.39) -- (178.62,149.85) -- (184.16,145.99) -- (192.98,142.92) -- cycle ;
\draw    (147.61,238.23) -- (380.02,238.23) ;
\draw [shift={(382.02,238.23)}, rotate = 180] [fill={rgb, 255:red, 0; green, 0; blue, 0 }  ][line width=0.08]  [draw opacity=0] (12,-3) -- (0,0) -- (12,3) -- cycle    ;
\draw    (147.61,238.23) -- (147.61,113.59) ;
\draw [shift={(147.61,111.59)}, rotate = 90] [fill={rgb, 255:red, 0; green, 0; blue, 0 }  ][line width=0.08]  [draw opacity=0] (12,-3) -- (0,0) -- (12,3) -- cycle    ;
\draw [color={rgb, 255:red, 255; green, 0; blue, 0 }  ,draw opacity=1 ]   (199.95,142.2) .. controls (245.08,140.27) and (282.41,210.5) .. (359.95,146.39) ;
\draw [color={rgb, 255:red, 0; green, 0; blue, 255 }  ,draw opacity=1 ]   (147.85,168.94) .. controls (171.24,167.65) and (173.29,142.2) .. (199.95,142.2) ;
\draw  [draw opacity=0][fill={rgb, 255:red, 0; green, 0; blue, 255 }  ,fill opacity=0.35 ] (251.85,171.76) -- (273.8,179.17) -- (295.95,182.23) -- (318.92,180.14) -- (333.9,174.98) -- (345.18,169.34) -- (359.95,158.23) -- (359.95,237.97) -- (229.9,237.97) -- (229.9,161.29) -- cycle ;
\draw [color={rgb, 255:red, 0; green, 0; blue, 255 }  ,draw opacity=1 ]   (230.31,161.53) .. controls (254.93,172.16) and (304.77,203.17) .. (360.15,158.39) ;
\draw [color={rgb, 255:red, 0; green, 0; blue, 255 }  ,draw opacity=1 ]   (229.9,114.33) -- (229.9,237.97) ;
\draw    (212.67,143.11) .. controls (216.09,128.88) and (223.03,132.87) .. (228.17,147.08) ;
\draw [shift={(228.81,148.91)}, rotate = 251.66] [fill={rgb, 255:red, 0; green, 0; blue, 0 }  ][line width=0.08]  [draw opacity=0] (7.2,-1.8) -- (0,0) -- (7.2,1.8) -- cycle    ;
\draw    (308.67,171.68) .. controls (278.11,145.16) and (252.81,129.36) .. (232.51,148.15) ;
\draw [shift={(231.27,149.34)}, rotate = 314.85] [fill={rgb, 255:red, 0; green, 0; blue, 0 }  ][line width=0.08]  [draw opacity=0] (7.2,-1.8) -- (0,0) -- (7.2,1.8) -- cycle    ;

\draw (125.97,96.68) node [anchor=north west][inner sep=0.75pt]  [font=\small] [align=left] {density};
\draw (192,242) node [anchor=north west][inner sep=0.75pt]  [font=\small]  {$z^*$};
\draw (224,242) node [anchor=north west][inner sep=0.75pt]  [font=\small]  {$p^*$};
\draw (390.8,229.8) node [anchor=north west][inner sep=0.75pt]  [font=\small]  {$\mathbb{E}[ v|t_{b}]$};

\end{tikzpicture}
		\caption{Construction of garbling of $\tau^*$}
		\label{fig:thm:seller}
	\end{figure}
	The garbling is illustrated in \autoref{fig:thm:seller} as the distribution depicted by the blue curve and line. There is one signal that Buyer receives when the original posterior mean is between $z^*$ and $p^*$, and also receives with some probability when the original posterior mean is above $p^*$. The probability is chosen to make the posterior mean from this signal exactly $p^*$. Apart from this one new signal, Buyer receives the original signal in $\tau^*$. Plainly, this is a garbling of $\tau^*$ and hence is feasible. 
	
	\autoref{fig:thm:seller} makes clear why the new information structure has an equilibrium with price $p^*$ and Buyer breaking indifference in favor of trading: (i) Seller's profit from posting any price below $z^*$ is the same as under $\tau^*$ and hence no larger than $\pius(\Gamma)$; (ii) similarly, Seller's profit from posting any price above $p^*$ is no higher than some fraction of $\pius(\Gamma)$; and (iii) any price between $z^*$ and $p^*$ is worse that price $p^*$. Moreover, since Seller's profit from offering any price other than $p^*$ is no more than $\pius(\Gamma)$, it follows that when $\pi_s>\pius(\Gamma)$, Buyer must break indifference as specified for Seller to have an optimal price, and the equilibrium payoffs are unique.

\begin{remark}\label{r:garbling}
The above logic establishes that given any $\tau \in \mathbf T_{us}$ that implements some $(\pi_b,\pi_s)$, $\tau$ can be garbled to uniquely implement any $(\pi'_b,\pi'_s)\in \Pset(\Gamma)$ such that $\pi'_s > \pi_s$. That is, an uninformed Seller's payoff can always be strictly raised, and Buyer's payoff reduced (strictly, so long as it was not already zero), by garbling Buyer's information. Even when specialized to the case of monopoly pricing, this provides a different perspective on why \citet{RS17} obtain a payoff triangle. More importantly, our methodology also handles the case of interdependent values---specifically, a nonlinear cost function $c(\cdot)$.

\end{remark}

\begin{remark}
\label{r:USsufficient}
According to Theorems \ref{thm:joint}/\ref{thm:joint:discrete} and \autoref{thm:seller}, uninformed-Seller information structures cannot implement all implementable payoff pairs in an environment $\Gamma$ if and only if $\pius(\Gamma)>\piall(\Gamma)$. This inequality fails if $\pius(\Gamma)=0$, since that implies $\pius(\Gamma)=\piall(\Gamma)=0$. An example is when there is no trade due to adverse selection when Seller is uninformed and Buyer has some information. On the other hand, $\pius(\Gamma)>\piall(\Gamma)$ if
\begin{equation}
\label{e:USinsufficient}
	\ubar v\leq \E[c(v)] \text{ and } \forall v\in V, \ c(v)<v.
\end{equation}
To see why, notice that in any uninformed-Seller information structure, Seller can price at slightly less than Buyer's highest posterior mean valuation and guarantee trade with only (a neighborhood of) that Buyer type. If $c(v)<v$ for all $v$, this gives Seller a positive expected payoff, and hence $\pius(\Gamma)>0$.\footnote{More precisely: as $V$ is compact, $c(v)<v$ for all $v$ implies there exists $\varepsilon>0$ such that $v-c(v)>\varepsilon$. Given any uninformed-Seller information structure, let $\overbar{m}_v$ be the highest posterior mean valuation in the support of the posterior means induced by Buyer's signals. So there is positive probability of Buyer signals with posterior mean valuations at least $\overbar{m}_v-{\varepsilon}/{2}$. By pricing at $\overbar{m}_v-{\varepsilon}/{2}$, Seller's expected cost conditional on trade is bounded above by $\overbar{m}_v-\varepsilon$, and hence Seller's profit conditional on trade is at least $\epsilon/2>0$. It follows that $\pius>0$.} But the first inequality in \eqref{e:USinsufficient} is equivalent to $\piall(\Gamma)=0$. Hence \eqref{e:USinsufficient} implies $\pius(\Gamma)>\piall(\Gamma)$. We observe that Condition \eqref{e:USinsufficient} is compatible with severe adverse selection resulting in very little trade when Seller is uninformed and Buyer is (partially or fully) informed.
\end{remark}

\subsection{Fully-Informed Buyer}
\label{sec:buyer}

We now turn to the third canonical class of information structures: Buyer is fully informed of his value $v$. As this is a special case of a more-informed Buyer, we maintain price-independent beliefs throughout this subsection. 

Faced with a fully informed Buyer and any sequentially rational Buyer strategy, an uninformed Seller can guarantee the profit level
\begin{align*}
	\pifb(\Gamma)\equiv \sup_{p}\int_p^{\overbar{v}}(p-c(v))\mu(\d v)
\end{align*}
regardless of her information. Plainly, $\pifb(\Gamma) \geq \pius(\Gamma)$. In monopoly pricing with $c(\cdot)=\underline v$, \citet{RS17} have shown that $\pifb>\pius$; if there is no trade due to adverse selection when Buyer is fully informed and Seller is uninformed, then $\pifb=\pius=0$. We establish below that when Buyer is fully informed, $\pifb$ is the only additional constraint on equilibrium payoffs.

\begin{theorem}
\label{thm:buyer}
The set of implementable outcomes under fully-informed--Buyer information structures and equilibria with price-independent beliefs is
$$	\Pset^*_{fb}(\Gamma)= \{(\pi_b,\pi_s)\in \Pset(\Gamma) : \pi_s\geq \pifb(\Gamma)\}.$$
\end{theorem}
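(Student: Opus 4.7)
The plan is to establish the two set inclusions separately and to isolate where the real work lies.

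\textbf{Necessity} ($\Pset^*_{fb}(\Gamma) \subseteq \{(\pi_b,\pi_s) \in \Pset(\Gamma): \pi_s \geq \pifb(\Gamma)\}$). Since $\Pset^*_{fb}(\Gamma) \subseteq \Pset(\Gamma)$, only the lower bound on $\pi_s$ requires argument. Fix any $\tau \in \mathbf{T}_{fb}$ and any wPBE. With a fully-informed Buyer, his belief is degenerate on his realized $v$ and hence trivially price-independent, so sequential rationality dictates acceptance whenever $v>p$ for \emph{any} posted price $p$. For any $\epsilon>0$ consider Seller's deviation to a constant price $p$ slightly below a maximizer of $\int_p^{\overbar v}(p-c(v))\mu(\d v)$. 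This deviation yields expected profit within $\epsilon$ of $\pifb(\Gamma)$ regardless of Seller's signal, so equilibrium optimality gives $\pi_s \geq \pifb(\Gamma) - \epsilon$; letting $\epsilon\downarrow 0$ finishes the direction.

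\textbf{Sufficiency} via incentive-compatible distributions (ICDs). Adapting the extremal-markets methodology of \citet*{BBM15} to interdependent values, I call $\rho \in \Delta(V)$ an \emph{ICD} if $\ubar p(\rho)\equiv \inf \Supp(\rho)$ is an optimal uninformed price against $\rho$ when facing a fully-informed Buyer. An \emph{ICD segmentation of $\mu$} is a measurable kernel $\{\rho_{t_s}\}$ with $\int \rho_{t_s}\,\d P_s = \mu$ and each $\rho_{t_s}$ an ICD; I interpret $t_s$ as Seller's signal. Each such segmentation supports a ``full-trade'' wPBE in which every Seller-type posts $\ubar p(\rho_{t_s})$ and Buyer accepts whenever $v \geq \ubar p(\rho_{t_s})$, with payoffs
\begin{align*}
\pi_s^{\mathrm{ft}} = \int \big[\ubar p(\rho_{t_s}) - \E_{\rho_{t_s}}[c(v)]\big]\,\d P_s, \qquad \pi_b^{\mathrm{ft}} = S(\Gamma) - \pi_s^{\mathrm{ft}}.
\end{align*}
The key lemma to prove is that ICD segmentations can be chosen so that $\pi_s^{\mathrm{ft}}$ attains every value in $[\pifb(\Gamma),\,S(\Gamma)]$: the finest segmentation (each $v$ in its own Dirac ICD) yields $\pi_s^{\mathrm{ft}} = \E[v-c(v)] = S(\Gamma)$, while a maximally-coarse ICD decomposition---the analogue of BBM's extremal-market decomposition, in which the uninformed Seller is indifferent between $\ubar p(\rho_{t_s})$ and the overall Akerlof-optimal price---yields $\pi_s^{\mathrm{ft}} = \pifb(\Gamma)$. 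A continuity/intermediate-value argument along a path of segmentations interpolating between the two extremes covers the whole interval, tracing out the hypotenuse $\pi_b + \pi_s = S(\Gamma)$ of the target triangle $ABC$.

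To reach interior points with $\pi_b + \pi_s < S(\Gamma)$, I modify the construction in one of two payoff-equivalent ways. Either (i) in selected ICD segments I select the non-full-trade optimal price (guaranteed available by the ICD indifference), which preserves Seller's per-segment profit while shrinking Buyer's surplus; or (ii) I append an auxiliary ``null'' signal prompting Seller to price above $\overbar v$, shedding both surpluses in fixed proportion without violating $\pi_s \geq \pifb(\Gamma)$, since that bound is convex in the signal mixture. Tuning the proportion and the choice of (i)/(ii) delivers any feasible $(\pi_b,\pi_s)$ in the claimed set.

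\textbf{Main obstacle.} The technical heart is the existence of a maximally-coarse ICD decomposition realizing $\pi_s^{\mathrm{ft}} = \pifb(\Gamma)$. BBM handle the monopoly-pricing case (constant $c(\cdot)$) via an explicit inductive algorithm that exploits linearity of the cost functional; with non-constant $c(\cdot)$, the segment-wise full-trade profit $\ubar p(\rho) - \E_\rho[c(v)]$ depends on the conditional cost, and the ICD indifference becomes considerably more delicate. My plan is an iterative peeling argument that at each stage strips from the residual a sub-measure whose normalized restriction is an ICD with full-trade profit equal to $\pifb(\Gamma)$, combined with a standard measurable-selection theorem (Kuratowski--Ryll-Nardzewski) to assemble the kernel measurably. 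A complementary route uses compactness of $\Delta(\Delta(V))$ in the weak* topology together with upper semicontinuity of Seller's optimum profit---analogous to the continuity argument sketched for \autoref{thm:seller}---to realize any target in $[\pifb(\Gamma),S(\Gamma)]$ directly via an intermediate-value step.
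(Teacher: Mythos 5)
Your overall architecture matches the paper's: prove the lower bound on $\pi_s$ by a deviation to (just below) the Akerlof-optimal price, and prove sufficiency by decomposing $\mu$ into incentive-compatible distributions \`a la \citet*{BBM15}, then convexifying. The necessity direction is fine. But the sufficiency direction has a genuine gap: the entire mathematical content of the theorem sits in your ``key lemma'' --- the existence of an ICD segmentation of $\mu$ with $\pi_s^{\mathrm{ft}}=\pifb(\Gamma)$ --- and you leave it as a plan (``iterative peeling \ldots combined with a measurable-selection theorem'') rather than a proof. The property that must be established, and that your sketch never articulates, is that each peeled ICD can be chosen so that \emph{the optimal price set of $\mu$ is contained in the optimal price set of the segment and of the residual}. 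This is what guarantees both that Seller's profit in the segmentation is exactly $\pifb(\Gamma)$ (not more) and that the peeling can be iterated. The paper's \autoref{lem:idicd} does precisely this by an explicit induction on $|\Supp(\mu)|$: recursively defining masses $\hat\nu_i$ so that Seller is indifferent across \emph{all} support points, with a separate case when $v_i=c(v_i)$ for some interior $i$ (where the ICD has zero profit and its support need not reach $\ubar v$, yet every price remains optimal because profits vanish). That second case is exactly the ``delicacy'' of non-constant $c(\cdot)$ you flag but do not resolve; and the extension from finite to general $\Supp(\mu)$ requires the semicontinuity/weak-convergence apparatus (the paper's Lemmas \ref{lem:icd:weak:converge}--\ref{lem:converge:2}), not just an appeal to compactness. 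Relatedly, your definition of an ICD (only $\inf\Supp(\rho)$ is optimal) is too weak for your own argument: step (i) for interior points and the residual-optimality accounting both require the full indifference across $\Supp(\rho)$ that the paper builds into \autoref{defi:ICD}.

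A second, smaller error: your option (ii) for reaching interior points --- appending a null signal on which Seller prices above $\overbar v$ --- does not work. Mixing the ICD segmentation with a zero-profit event scales Seller's payoff to $(1-\lambda)\pifb(\Gamma)<\pifb(\Gamma)$, violating the very bound you are trying to respect, and pricing above $\overbar v$ on the null signal is not optimal for Seller (she would deviate to a serious price). The correct convexification, used in the paper, is public randomization between the extremal ICD segmentation (which yields both $B=(0,\pifb)$ and $C=(S-\pifb,\pifb)$ via the two tie-breaking rules) and \emph{full information} (which yields $A=(0,S)$); this also renders your interpolation-of-segmentations lemma unnecessary, since only the single extremal segmentation needs to be constructed.
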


\begin{proof}
See \autoref{sec:proof:fb}.
\end{proof}

The payoff set characterized in \autoref{thm:buyer} corresponds to the triangle $ABE$ in \autoref{fig:illustration}. 
Here is the idea behind the result. When Buyer is fully informed, an information structure can be viewed as dividing $v$'s prior distribution, $\mu$, into a set of $\mu_i$ that average to $\mu$, with Seller informed of which $\mu_i$ she faces. \autoref{thm:buyer} is proven by establishing that we can divide $\mu$ suitably so that against each $\mu_i$, Seller is indifferent between pricing at all prices in the support of $\mu_i$, including the price corresponding to $\pifb$ in that environment. {Such a $\mu_i$ is analogous to an ``extremal market'' introduced by \cite{BBM15} in the context of monopoly pricing. To highlight the profit implication of such a distribution and because that implication is relevant across multiple information structures in our paper, we call such a $\mu_i$ an \emph{isoprofit distribution} or IPD.}

\begin{definition}
		\label{defi:IPD}
		$\nu$ is an \emph{isoprofit distribution} (IPD) if
	\begin{align*}
		\int_p^{\overbar{v}}(p-c(s))\nu(\d s)=\text{constant}\ge 0, \ \forall p\in\mathrm{Supp}(\nu).
	\end{align*}
\end{definition}

The Appendix provides a ``greedy'' algorithm to compute IPDs; the algorithm is defined for finite $V$, and we take limits to handle the infinite case. We can sketch how the algorithm works and construct a set of IPDs that average to the prior. Suppose $V=\{v_1,v_2,\ldots,v_K\}$, with $v_i<v_{i+1}$ for $i\in \{1,\ldots,K-1\}$ and $c(v)<v$ for all $v$. Given any small-enough mass of $v_K$, there is a unique mass of type $v_{K-1}$ that makes Seller indifferent between charging price $v_K$ and $v_{K-1}$. (If the mass is too low, Seller prefers $v_K$; if it is too high, she prefers $v_{K-1}$.) Iterating down to keep Seller indifferent between all prices pins down an IPD. Choose the maximum mass of type $v_K$ for which this works. Remove that IPD---i.e., take the conditional distribution after removing the masses of each type according to that IPD--and then repeat the procedure to construct the next IPD. 

Crucially, whenever an IPD is removed, the price corresponding to $\pifb(\Gamma)$ remains optimal in the remaining ``market''; this follows from the IPD's defining property of Seller indifference and an accounting identity. Therefore, Seller's profit in this segmentation of IPDs remains $\pifb(\Gamma)$. Moreover, it is also optimal for Seller to always (i.e., for each $\mu_i$) price so that there is full trade or no trade. Hence, Buyer's expected payoff can be either $0$ or the entire surplus less $\pifb(\Gamma)$. It follows that the fully-informed Buyer information structure defined by this set of IPDs implements point $B$ and $C$ in \autoref{fig:illustration}.  The entire triangle $ABC$ can then be implemented by convexification: randomizing over this information structure (and the two equilibria) and full information (where Seller obtains all the surplus).

\begin{remark}
In the same vein as part \ref{seller3} of \autoref{thm:seller}, one can also establish approximately unique implementation for \autoref{thm:buyer}'s payoff set: for any $(\pi_b,\pi_s) \in \Pset^*_{fb}(\Gamma)$ and any $\varepsilon>0$, there is $\tau\in \mathbf{T}_{fb}$ with $\Pi(\Gamma,\tau)\subset B_{\varepsilon}(\pi_b,\pi_s)$.
\end{remark}

\begin{remark}
\label{r:FBsufficient}
Theorems \ref{thm:joint}--\ref{thm:buyer} imply that fully-informed-Buyer information structures implement all implementable payoff pairs if and only if $\pifb(\Gamma)=\piall(\Gamma)$. In that case, triangles $AFG$ and $ABC$ coincide in \autoref{fig:illustration}. It follows that $\pifb(\Gamma)=\piall(\Gamma)$ only when a fully-informed Buyer and uninformed Seller can result in full trade ($\ubar v\geq \E[c(v)]$ and Seller prices at $\underline v$) or no trade ($\ubar v\leq \E[c(v)]$ and Seller prices at some $p\geq \overline v$).
  Interestingly, when $\pifb(\Gamma)>\piall(\Gamma)$, fully-informed-Buyer information structures cannot even implement all payoff pairs implementable by uninformed-Seller information structures; i.e., triangles $ABC$ and $ADE$ in \autoref{fig:illustration} are distinct if and only if triangles $ABC$ and $AFG$ are distinct. Or to put it another way, when (and only when) $\pifb(\Gamma)>\piall(\Gamma)$ there is an uninformed-Seller information structure that implements some $\pi_s<\pifb(\Gamma)$.\footnote{Pick any $p'>\underline{v}$ such that $p'-\E[c(v)]<\pifb(\Gamma)$. Following the construction described after \autoref{thm:seller}, we can mix all valuations $v\le p'$ with a fraction $\lambda>0$ of valuations $v>p'$ so that the mixture has posterior mean exactly $p'$. The remaining fraction $1-\lambda$ of valuations above $p'$ are revealed to Buyer. With this uninformed-Seller information structure, consider any equilibrium in which Buyer purchases when indifferent. (Such an equilibrium with price-independent belief exists.) Seller's profit is at most $(1-\lambda)\pifb(\Gamma)$ from any price $p>p'$, and $p'-\E[c(v)]$ from price $p=p'$. Hence, Seller's profit is strictly less than $\pifb(\Gamma)$.} Theorems \ref{thm:seller}--\ref{thm:buyer} further imply that this property also characterizes when Buyer can
  benefit from not being fully informed. In the context of monopoly pricing, that can be viewed as characterizing when the buyer can benefit from strategic learning \citep{RS17} rather than market segmentation \citep{BBM15}.
\end{remark}
\begin{remark}
    \label{rem:prices}
Restricting attention to posted prices is
without loss for Theorems \ref{thm:joint}--\ref{thm:buyer}. Since we have implemented all payoffs that are feasible
and individually rational for Seller, allowing Seller to use more complicated mechanisms cannot enlarge the implementable payoff sets. To see why our all our payoffs can still be obtained as well, note that for \autoref{thm:joint}, Seller’s deviation to any other mechanism can simply be deterred by a pessimistic belief. For Theorems \ref{thm:seller}--\ref{thm:buyer}, since Buyer is more informed than Seller, a posted price is
optimal for Seller among all mechanisms \citep{myerson81}.
\end{remark}

\section{Discussion}
\label{sec:discussion}

This section discusses some extensions and refinements of our results.

\subsection{Multidimensionality}
\label{sec:noscalar}
Suppose Buyer and Seller's cost and valuation pair $(c,v)$ is a two-dimensional random variable distributed according to joint distribution $\mu$ with a compact support in $\Reals^2$. The extension of our maintained assumption of commonly known gains from trade is:
for all $(c,v)\in \mathrm{Supp}(\mu)$, $v\ge c$; and $\E[v-c]>0$.
An information structure is now a joint distribution $P(t_b,t_s,c,v)$ whose marginal distribution on $(c,v)$ is $\mu$. 

The substance of Theorems \ref{thm:joint}/\ref{thm:joint:discrete}, \autoref{thm:seller} and \autoref{thm:buyer} still hold.\footnote{A caveat is that in this multidimensional setting, we do not know whether our maintained assumption that Seller only posts a price is without loss---i.e., we do not rule out that certain payoff pairs are not implementable when Seller can use non-posted-price mechanisms (which she might use to improve her payoff). By contrast, in our baseline one-dimensional setting, our results would not be affected if we had allowed Seller to use arbitrary mechanisms. See \cite{che2021robust} and \cite{deb2023multi} for work on information design in multidimensional screening problems.} To see why, let $\underline{v}$ be the lowest valuation in the support of $\mu$. Seller's individual rationality constraint is now $\max\{\underline{v}-\E[c],0\}$, as she can guarantee this profit by setting either a sufficiently high price or a price (arbitrarily close to) $\underline{v}$, regardless of her signal. Abusing notation, we can define a cost function $c(v')\equiv \E_\mu[c|v=v']\le v'$. This results in an environment satisfying all the maintained assumptions of our baseline model, except that $c(\cdot)$ may not be continuous. Such continuity plays no role in proving \autoref{thm:joint} nor \autoref{thm:joint:discrete}. Both \autoref{thm:seller} and \autoref{thm:buyer} use continuity of $c(\cdot)$ to guarantee that $\E_{\nu}[c(v)]$ is a continuous function of $\nu\in\Delta(V)$ for certain convergence arguments. However, in the two-dimensional type environment, $\E_{\nu}[c]$ is still a continuous function of $\nu\in \Delta(C\times V)$. \autoref{thm:buyer} uses upper semi-continuity of Seller's profit in price; boundedness of $c$ and $c\le v$ is sufficient for such upper semi-continuity.

\subsection{Negative Trading Surplus}\label{ssec:negative:surplus}

Returning to our baseline model, we next discuss what happens when trade sometimes generates negative surplus. That is, we drop the assumption that $c(v)\leq v$; we do not require $\E[v-c(v)]>0$ either. Define $S_{\lambda}(\Gamma)$ for $\lambda\in[1,\infty)$ as
\begin{align*}
	S_{\lambda}(\Gamma)\equiv \int_{\underline{v}}^{\overbar{v}}\left[ \underline{v}-c(v)+\lambda(v-\underline{v}) \right]^+\mu(\d v),
\end{align*}
where $[\cdot]^+\equiv \max\left\{ \cdot,0 \right\}$. The function $S_\lambda(\Gamma)$ is a weighted sum of Buyer and Seller payoff assuming that trade occurs at price $p=\underline{v}$ whenever trade creates a positive weighted total payoff, and there is no trade otherwise. It is readily verified that $S_{1}(\Gamma)=\E\left[[v-c(v)]^+\right]$ and $\lim_{\lambda\to \infty}{S_{\lambda}(\Gamma)}/{\lambda}= \E[v]-\underline{v}$. Allowing negative trading surplus does not affect our definition of wPBE. So the notation $\Pset(\Gamma)$ and $\piall(\Gamma)$ still have the same meanings as before. The next proposition shows that $\Pset(\Gamma)$ is now characterized by three constraints: as before, the two individual rationality constraints, $\pi_s\ge\piall(\Gamma)$ and $\pi_b\ge0$; and different now, a Pareto frontier defined by all $S_{\lambda}(\Gamma)$. 
\begin{proposition}
\label{prop:negative:frontier}
Consider all information structures and equilibria when trade can generate negative surplus.
	\begin{align*}
	\Pset(\Gamma)=
\left\{ 
			\begin{array}{ll}
				&\pi_b\ge0\\
				(\pi_b,\pi_s):&\pi_s\ge \piall(\Gamma)\\
				& \lambda\pi_b+\pi_s\le S_{\lambda}(\Gamma),\ \forall \lambda\ge 1 
			\end{array}
		\right\}.	
	\end{align*}
\end{proposition}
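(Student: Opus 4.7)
I would prove the two inclusions separately. The individual-rationality bounds $\pi_b\ge 0$ and $\pi_s\ge\piall(\Gamma)$ continue to hold by exactly the guarantee arguments recalled before \autoref{thm:joint}, so all the novelty on the necessity side lies in the family of Pareto-type inequalities $\lambda\pi_b+\pi_s\le S_\lambda(\Gamma)$ indexed by $\lambda\ge 1$.

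For those I would first argue that in any wPBE every on-path price $p$ satisfies $p\ge\ubar v$ almost surely: if Seller posted some $p<\ubar v$ with positive probability given $t_s$, Buyer would strictly prefer to accept (because $\E[v\mid t_b]\ge\ubar v$ for every $t_b$), so deviating to any $p'\in(p,\ubar v)$ would still be accepted and would strictly raise Seller's profit, contradicting optimality. Given that trade occurs only at $p\ge\ubar v$ and $\lambda\ge 1$, the decisive calculation is
\begin{align*}
\lambda\pi_b+\pi_s &=\E\bigl[\mathbf{1}_{\text{trade}}\bigl(\lambda v-(\lambda-1)p-c(v)\bigr)\bigr]\\
&\le \E\bigl[\mathbf{1}_{\text{trade}}\bigl(\ubar v-c(v)+\lambda(v-\ubar v)\bigr)\bigr]\\
&\le \E\bigl[[\ubar v-c(v)+\lambda(v-\ubar v)]^+\bigr]=S_\lambda(\Gamma),
\end{align*}
where the first step uses $p\ge\ubar v$ and $\lambda-1\ge 0$ to bound $-(\lambda-1)p\le -(\lambda-1)\ubar v$, and the second uses $\mathbf{1}_{\text{trade}}\cdot x\le x^+$ pointwise.

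\textbf{Sufficiency.} I would implement every payoff in the set by a family of constructions generalizing that in the proof of \autoref{thm:joint}. The building block is an information structure in which Buyer is uninformed and Seller observes a binary signal $s\in\{T,N\}$ with $\Pr(s=T\mid v)=\tilde q(v)\in[0,1]$, together with a wPBE in which Seller posts price $p_l\ge\ubar v$ on signal $T$ and some $p_h>\overbar v$ on $N$; Buyer accepts $p_l$ with some probability $\alpha\in[0,1]$ and rejects $p_h$; off-path beliefs concentrate on $v=\ubar v$, rendering every deviation weakly suboptimal. The resulting payoffs are $\pi_s=\alpha\,\E[\tilde q(v)(p_l-c(v))]$ and $\pi_b=\alpha\,\E[\tilde q(v)(v-p_l)]$, subject to $\E[c(v)\mid T]\le p_l\le\E[v\mid T]$ (for Seller- and Buyer-optimality on $T$) and $\E[c(v)\mid N]\ge p_l$ (for Seller-optimality on $N$). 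To hit a point with binding constraint $\lambda\pi_b+\pi_s=S_\lambda(\Gamma)$, I would take $\tilde q(v)=\mathbf{1}_{\{\ubar v-c(v)+\lambda(v-\ubar v)\ge 0\}}$, $p_l=\ubar v$ and $\alpha=1$, which delivers that equality by the necessity calculation run in reverse; where this choice would violate Seller-IR because $\E[c(v)\mid T]>\ubar v$, I would shrink $\tilde q$ on high-$c$ types in the trade set until $\pi_s=\piall(\Gamma)$, moving along the IR boundary while remaining feasible. Interior points are then reached by public randomization between such extremal constructions and the trivial no-trade outcome $(0,\piall(\Gamma))$, using convexity of the payoff set.

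\textbf{Main obstacle.} The hard part will be verifying that this two-parameter family in $(\tilde q,p_l)$, with $\alpha$ and off-path beliefs as auxiliary tools, actually sweeps out the full boundary of the set without gaps. The boundary is the concave envelope of the lines $\lambda\pi_b+\pi_s=S_\lambda(\Gamma)$, so different $\lambda$'s bind in different regions, and the transition between the Pareto regime and the Seller-IR regime must be handled by a simultaneous adjustment of $\tilde q$ and $p_l$. I would close this by a continuity argument tracing a connected boundary arc from $(0,S_1(\Gamma))$ along the concave envelope down to the IR-binding corner, and then convexifying by public randomization to obtain the interior, in close analogy to the role of varying $p_l$ over $[\ubar v,\E[v]]$ in the proof of \autoref{thm:joint}.
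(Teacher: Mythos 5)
Your necessity direction is essentially the paper's argument: the same observation that on-path prices are at least $\ubar v$, followed by the same pointwise bound $\lambda v-(\lambda-1)p-c(v)\le \ubar v-c(v)+\lambda(v-\ubar v)$ and truncation at zero. Your extremal constructions are also essentially the paper's: the paper trades exactly on the set $\{v:\ubar v-c(v)+\lambda(v-\ubar v)\ge 0\}$ at price $\ubar v$ (revealing the complementary event publicly so that no-trade there is common knowledge), which is your $\tilde q$ with $p_l=\ubar v$; the cosmetic difference that your signal is private to Seller rather than public is immaterial under wPBE.

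The genuine gap is exactly the step you label the main obstacle and then defer to an unspecified ``continuity argument'': verifying that the implemented points, after convexification, cover the entire region bounded by the envelope of the lines $\lambda\pi_b+\pi_s=S_\lambda(\Gamma)$. The paper closes this with a concrete computation you are missing: for each $\lambda$ it implements a whole segment $\{(\pi_b^\alpha,\pi_s^\alpha)\}_{\alpha\in[0,1]}$ of the line $\lambda\pi_b+\pi_s=S_\lambda(\Gamma)$ (where $\alpha$ is the tie-breaking mass placed on the boundary set $\{\ubar v-c(v)+\lambda(v-\ubar v)=0\}$ --- note this is a different parameter from your Buyer-acceptance probability $\alpha$, which only scales payoffs toward the origin and does not move along the line), and then shows via the identity $\lim_{\lambda'\to\lambda}\frac{S_{\lambda'}(\Gamma)-S_\lambda(\Gamma)}{\lambda'-\lambda}=\int\beta_1(v)(v-\ubar v)\mu(\d v)=\pi_b^1$ that any point on the line beyond either endpoint of that segment violates a neighboring constraint $S_{\lambda'}$. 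This is what certifies that the segments exactly tile the exposed faces of the envelope, with no uncovered extreme points; without it, your sweep of the boundary is asserted rather than proved. Your treatment of the Seller-IR corner (``shrink $\tilde q$ on high-$c$ types'') is also not what the paper does --- it simply implements $(0,\piall(\Gamma))$ separately and convexifies --- and as stated it risks leaving the frontier, since removing types from the trade set changes the weighted surplus away from $S_\lambda(\Gamma)$.
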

\begin{proof}
See \autoref{appendix:negative}.
\end{proof}
\autoref{fig:negative} depicts \autoref{prop:negative:frontier}. The blue triangle's frontier corresponds to total surplus under full trade.\footnote{The figure is drawn assuming $\E[v-c(v)]>0$, which ensures the blue triangle in the figure is nondegenerate. If instead $\E[v-c(v)]\leq 0$, then $\piall(\Gamma)=\max\{\ubar v-\E[c(v)],0\}=0$, and the blue triangle would be the singleton $(0,0)$.}  The union of the blue and red regions is the set $\Pset(\Gamma)$. Each outer blue line has a slope $-\lambda$, with $\lambda\geq 1$, and represents a frontier $\lambda\pi_b+\pi_s=S_{\lambda}(\Gamma)$; the frontier of the red region is defined by their envelope. 

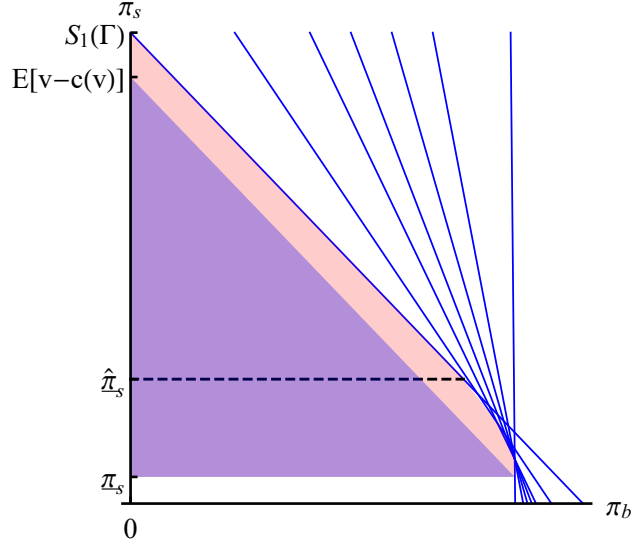
\begin{figure}[htbp]
	\centering

\tikzset{every picture/.style={line width=0.75pt}} 

\begin{tikzpicture}[x=0.75pt,y=0.75pt,yscale=-1,xscale=1]

\draw  [draw opacity=0][fill={rgb, 255:red, 255; green, 0; blue, 0 }  ,fill opacity=0.22 ] (319.63,240.59) -- (321.32,245.65) -- (150.38,245.65) -- (149.82,63.61) -- (303.05,211.07) -- (308.95,218.66) -- (313.45,225.97) -- (317.1,233) -- cycle ;
\draw    (149.41,261.18) -- (361.61,261.18) ;
\draw [shift={(363.61,261.18)}, rotate = 180] [fill={rgb, 255:red, 0; green, 0; blue, 0 }  ][line width=0.08]  [draw opacity=0] (12,-3) -- (0,0) -- (12,3) -- cycle    ;
\draw    (149.41,261.18) -- (149.41,44.8) ;
\draw [shift={(149.41,42.8)}, rotate = 90] [fill={rgb, 255:red, 0; green, 0; blue, 0 }  ][line width=0.08]  [draw opacity=0] (12,-3) -- (0,0) -- (12,3) -- cycle    ;
\draw  [draw opacity=0][fill={rgb, 255:red, 0; green, 0; blue, 255 }  ,fill opacity=0.35 ] (149.26,82.87) -- (321.32,245.65) -- (149.26,245.65) -- cycle ;
\draw  [dash pattern={on 4.5pt off 4.5pt}]  (149.26,211.07) -- (303.05,211.07) ;
\draw [color={rgb, 255:red, 0; green, 0; blue, 255 }  ,draw opacity=1 ][line width=0.75]    (149.82,63.61) -- (354.5,259.99) ;
\draw [color={rgb, 255:red, 0; green, 0; blue, 255 }  ,draw opacity=1 ][line width=0.75]    (213.36,64.87) -- (336.5,261.4) ;
\draw [color={rgb, 255:red, 0; green, 0; blue, 255 }  ,draw opacity=1 ][line width=0.75]    (245.69,64.87) -- (329.76,261.12) ;
\draw [color={rgb, 255:red, 0; green, 0; blue, 255 }  ,draw opacity=1 ][line width=0.75]    (282.52,64.59) -- (324.41,260.55) ;
\draw [color={rgb, 255:red, 0; green, 0; blue, 255 }  ,draw opacity=1 ][line width=0.75]    (321.04,64.59) -- (321.04,260.55) ;

\draw (138.01,30.08) node [anchor=north west][inner sep=0.75pt]  [font=\footnotesize]  {$\pi _{s}$};
\draw (365.44,254.25) node [anchor=north west][inner sep=0.75pt]  [font=\footnotesize]  {$\pi _{b}$};
\draw (144.94,266.78) node [anchor=north west][inner sep=0.75pt]  [font=\footnotesize]  {$0$};
\draw (106.55,58.29) node [anchor=north west][inner sep=0.75pt]  [font=\footnotesize]  {$S_{1}( \Gamma )$};
\draw (75.3,79.25) node [anchor=north west][inner sep=0.75pt]  [font=\footnotesize]  {$\mathbb{E}[ v-c( v)]$};
\draw (122.92,202.33) node [anchor=north west][inner sep=0.75pt]  [font=\footnotesize]  {$\widehat{\underline{\pi }}_{s}$};
\draw (123.17,236.94) node [anchor=north west][inner sep=0.75pt]  [font=\footnotesize]  {$\underline{\pi }_{s}$};

\end{tikzpicture}
	\caption{Outcome when trading surplus can be negative}
\label{fig:negative}
\end{figure}

Let us explain some of the logic underlying \autoref{prop:negative:frontier}/\autoref{fig:negative}. Begin by observing that all the payoffs in the figure's blue triangle can be implemented analogously to our discussion of \autoref{thm:joint}. It is also straightforward that some payoffs outside this set can be implemented. In particular, an information structure that publicly reveals only whether trade is efficient (i.e., whether $v\geq c(v)$ or not) can implement efficient trade with all the surplus accruing to Seller: the point $(0,S_1(\Gamma))$ in \autoref{fig:negative}. Why does maximizing Buyer's payoff now generally require some inefficiency (i.e., why is the red region's frontier not linear when $S_1(\Gamma)>\E[v-c(v)]$)? Consider, for simplicity, $\underline v\geq \E[c(v)]$, so that $\piall(\Gamma)=\underline v-\E[c(v)]$. The bottom-right corner of \autoref{fig:negative}'s blue triangle is then achieved by having trade with probability one at the price $\underline v$, with corresponding Buyer payoff $\E[v]-\underline v$. No higher Buyer payoff is implementable because Seller will never sell at a price below $\underline v$, and subject to that constraint, this outcome maximizes $v-p$ for every $v$. In other words, the maximum implementable Buyer's payoff goes hand in hand with implementing all inefficient trade.

It remains to sketch why the frontier of $\Pset(\Gamma)$ is characterized by the lines defined by $\left\{\lambda \pi_b+\pi_s=S_{\lambda}(\Gamma) \right\}_{\lambda\geq 1}$. 
When type $v$ trades at price $p\geq \underline{v}$, the (ex post) {weighted total payoff} is $ p-c(v)+\lambda(v-p) $, which is weakly below $\underline{v}-c(v)+\lambda(v-\underline{v})$ because $\lambda\ge1$ and $\underline{v}\le p$. When trade does not happen, the weighted total payoff is $0$. Therefore, the weighted total payoff is bounded above by $[\underline{v}-c(v)+\lambda(v-\underline{v})]^+$, and hence each $S_{\lambda}(\Gamma)$ is an upper bound for the expected weighted total payoff. The proof of \autoref{prop:negative:frontier} establishes that each of these upper bounds is tight: for each $\lambda$ there exists an information structure implementing expected weighted total payoff equal to $S_{\lambda}(\Gamma)$. The information structure simply publicly reveals whether the weighted total payoff at price $\underline v$ is negative (a ``negative signal'') or not (a ``positive signal''). For all $v$ such that $\underline{v}-c(v)+\lambda(v-\underline{v})<0$, it holds that $v-c(v)<(1-\lambda)(v-\underline{v})\le0$. Thus, the negative signal creates common knowledge that total surplus is negative, and hence there is no trade. After a positive signal, on the other hand, Seller can be induced to sell at price $\underline{v}$ just as in the discussion of \autoref{thm:joint}. Therefore, the equilibrium expected weighted total payoff is $\E\left[\left[ \underline{v}-c(v)+\lambda(v-\underline{v}) \right]^+\right]=S_{\lambda}(\Gamma)$.

We should note that in certain cases there may not be a tradeoff between maximizing Buyer's payoff and efficiency. Specifically, consider the profit level $\widehat{\underline{\pi}}_s$ that is the maximum of $0$ and Seller's profit from efficient trade at price $\underline v$:
$$\widehat{\underline{\pi}}_s\equiv \left[\int \bm{1}_{v\ge c(v)}(\underline{v}-c(v))\mu(\d v)\right]^+.$$ 
For profit levels above $\widehat{\underline{\pi}}_s$, the situation is analogous to our baseline model once we use a public signal to reveal that trade is efficient, and so the implementable equilibrium payoffs with $\pi_s\ge\widehat{\underline{\pi}}_s$ constitute a triangle, as seen in \autoref{fig:negative}. Hence, if $\widehat{\underline{\pi}}_s=0$, then there is no tradeoff between efficiency and maximizing Buyer's payoff. But when $\widehat{\underline{\pi}}_s>0$, then so long as some trades generate negative surplus, $\widehat{\underline{\pi}}_s>\piall$ and there is a tradeoff.

\subsection{Affine Cost Function}
\label{sec:linear}
Returning to our baseline model, recall that Seller's minimum implementable payoff $\pius(\Gamma)$ under price-independent beliefs (\autoref{thm:seller}) is not amenable to a closed-form formula in general. We now provide such a formula when the cost function $c(v)$ is affine. According to our discussion in \autoref{sec:noscalar}, an affine $c(v)$ subsumes richer environments in which conditional expectations are affine, such as under Gaussian primitives \citep*[cf.~][]{BIJL21}.

\begin{condition}
	$c(v)=\lambda v+\gamma$, for some $\lambda,\gamma \in \Reals$.
	\label{ass:linear}
\end{condition}

Let $F(v)$ be the cumulative distribution function (CDF) corresponding to the prior measure $\mu$. Let $D(\mu)$ be the set of all distributions whose CDF $G$ satisfies
\begin{align*}
		\int_{V}v\d G(v)&=\int_{V}v\d F(v) \quad \text{and} \quad 
		\int_{\underline{v}}^{v}G(s)\d s\le \int_{\underline{v}}^vF(s)\d s, \ \forall v\in V.
\end{align*}
That is, $D(\mu)$ contains all distributions that are mean-preserving contractions (MPC) of $\mu$. It is well known that $D(\mu)$ characterizes the set of distributions of Buyer posterior means that can be generated by any (uninformed-Seller) information structure. We focus on a special family of IPDs (see \autoref{defi:IPD}) whose supports are intervals $[v_{*},v^*]$. Such IPDs have an analytical expression under \autoref{ass:linear}: 
\begin{align}
	G(v)=
	\begin{dcases}
		0 & \text{if $v\leq v_*$}\\
		1 & \text{if $v\geq v^*$}\\
		1-\left( \frac{(1-\lambda)v-\gamma}{(1-\lambda)v_*-\gamma} \right)^{\frac{1}{\lambda-1}}&\text{if $v\in (v_*,v^*)$ and $\lambda\neq 1$}\\[5pt]
		1-e^{\frac{v-v_*}{\gamma}}&\text{if $v\in (v_*,v^*)$ and $\lambda=1$}.
	\end{dcases}
	\label{e:IPDlinear}
\end{align}

$G(v)$ is smooth everywhere except for a mass point at $v^*$. Our maintained assumption that $v\geq c(v)$ implies $(1-\lambda) v\geq \gamma$. Therefore, \autoref{e:IPDlinear} defines an increasing function, i.e., a well-defined CDF. Given any $v_*$, a higher $v^*$ corresponds to increasing $G(v)$ in the sense of first-order stochastic dominance; hence, there is a unique $v^*$ determined by the condition $\E_G[v]=\E_F[v]$. So the family of IPDs is parametrized by a single parameter $v_*$; accordingly, we denote such an IPD by CDF $G_{v_*}$ with density $g_{v_*}$ on $(v_*,v^*)$. The domain for $v_*$ is $[\underline{v},\E[v])$. We separately define $G_{\E[v]}(v)=\mathbf{1}_{v\ge \E[v]}$. \par

It can be verified that a higher $v_*$ lowers the corresponding $v^*$ (i.e., the corresponding intervals $[v_*,v^*]$'s are nested) and that $G_{v_*}(v)$ is pointwise decreasing in $v_*$ within the common support. As a result, for any two different $v_*$'s the corresponding $G_{v_*}$'s cross once. Consequently, all IPDs are ordered according to the MPC order (increasing in $v_*\in[\underline{v},\E[v]]$). Not every $G_{v_*}$ is in $D(\mu)$, but $G_{v_*}\in D(\mu)$ for all $v_*$ larger than some threshold. The following proposition shows that this threshold pins down $\pius$.
\begin{proposition}
\label{thm:seller_linear}
Assume \autoref{ass:linear} and let $p_*\equiv \min\left\{v_*\big|{G}_{v_*}\in D(\mu)  \right\}$. It holds that $\pius(\Gamma)=p_*-\E_{\mu}[c(v)]$.
\end{proposition}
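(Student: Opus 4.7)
The plan is to establish two reverse inequalities: $\pius(\Gamma)\le p_*-\E_\mu[c(v)]$ (implementation via $G_{p_*}$) and $\pius(\Gamma)\ge p_*-\E_\mu[c(v)]$ (optimality of the ICD family). I work with Buyer's posterior-mean distribution $G$, which runs over $D(\mu)$ as $\tau$ ranges over $\mathbf{T}_{us}$. Since $c$ is affine, $\E[c(v)\mid m]=\lambda m+\gamma$, so Seller's profit at any price $p$ depends on $G$ only through $\Pi(p,G)\equiv \int_p^{\bar v}(p-c(m))\,dG(m)$. By \autoref{thm:seller}, it thus suffices to show $\inf_{G\in D(\mu)}\sup_p\Pi(p,G)=p_*-\E_\mu[c(v)]$.

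The upper bound is witnessed by $G_{p_*}\in D(\mu)$ itself. A direct check using \eqref{e:ICDlinear} (equivalently, differentiating the defining indifference relation) shows $\Pi(p,G_{p_*})\equiv p_*-\E_\mu[c(v)]$ for $p\in[p_*,v^*]$, the constant being pinned down at $p=p_*$ where trade occurs with probability one under $G_{p_*}$ and the mean of $G_{p_*}$ equals $\E_\mu[v]$. Prices $p<p_*$ yield $p-\E_\mu[c(v)]$, strictly lower, and $p>v^*$ yield zero. Hence the induced game has an equilibrium with $\pi_s=p_*-\E_\mu[c(v)]$.

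For the lower bound, fix $G\in D(\mu)$, set $r=\sup_p\Pi(p,G)$, and consider the ICD $G_{v_*}$ with $v_*=r+\E_\mu[c(v)]$ (upper support $v^*$ chosen to make the mean equal $\E_\mu[v]$). The key claim is that $G_{v_*}$ is a mean-preserving contraction of $G$; combined with $G$ itself being an MPC of $\mu$, transitivity of the MPC order delivers $G_{v_*}\in D(\mu)$, and the minimality defining $p_*$ forces $v_*\ge p_*$, i.e., $r\ge p_*-\E_\mu[c(v)]$. Setting $I(p)\equiv \int_p^{\bar v}(G-G_{v_*})\,dm$, the MPC claim is equivalent to $I(p)\le 0$ for all $p\in V$. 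The boundary $I(\underline v)=I(\bar v)=0$ follows from mean-matching. On $[\underline v,v_*]$, $G_{v_*}\equiv 0$ gives $I'=-G\le 0$, so $I$ decreases from $0$, yielding $I(v_*)\le 0$; symmetrically $I(v^*)\le 0$ on $[v^*,\bar v]$. On $[v_*,v^*]$, the Seller-optimality $\Pi(p,G)\le r=\Pi(p,G_{v_*})$, together with integration by parts against the affine integrand $p-c(m)$, yields the first-order differential inequality $(p-c(p))\,I'(p)+\lambda I(p)\le 0$.

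The main obstacle is closing this ODE-inequality into the sign statement $I\le 0$ on $[v_*,v^*]$. For $\lambda=0$ (pure monopoly pricing, as in Roesler--Szentes) the inequality is simply $I'\le 0$ and single-crossing is immediate; for general $\lambda$ the extra $\lambda I$ term blocks a direct pointwise comparison of CDFs. My approach is to introduce an integrating factor---$f(p)\equiv ((1-\lambda)p-\gamma)^{\lambda/(1-\lambda)}$ when $\lambda\ne 1$ and $f(p)\equiv e^{-p/\gamma}$ when $\lambda=1$---which is strictly positive on $[v_*,v^*]$ since $p-c(p)=(1-\lambda)p-\gamma\ge 0$. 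A direct computation gives $(fI)'=\bigl(f/(p-c(p))\bigr)\bigl[(p-c(p))\,I'+\lambda I\bigr]\le 0$, so $fI$ is non-increasing on $[v_*,v^*]$; combined with $f(v_*)I(v_*)\le 0$ and $f>0$, this yields $I\le 0$ throughout the interval, completing the key claim and hence the proposition.
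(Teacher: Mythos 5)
Your proposal is correct, and its lower-bound step takes a genuinely different route from the paper's. Both arguments share the same skeleton: reduce via \autoref{thm:seller} to posterior-mean distributions $G\in D(\mu)$, exploit affineness of $c$ so that only the one-parameter ICD family \eqref{e:ICDlinear} matters, and relate Seller's profit to integrated CDFs by integration by parts. They diverge on how ``it is without loss to consider only ICDs'' is established. The paper fixes the \emph{most dispersed} ICD in $D(\mu)$, shows its integrated CDF touches the benchmark's at some support point $p$, and uses $F(p-)\le G(p-)$ at that touching point to compare profits there; the bulk of its appendix proof is then devoted to proving that such a touching ICD exists (its Cases 1--3). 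You instead start from an arbitrary $G\in D(\mu)$ with optimal profit $r$, directly name the candidate ICD $G_{v_*}$ at profit level $v_*=r+\E_\mu[c(v)]$, and prove $G_{v_*}$ is a mean-preserving contraction of $G$ by converting the Seller-optimality inequality $\Pi(p,G)\le r=\Pi(p,G_{v_*})$ into the differential inequality $(p-c(p))I'(p)+\lambda I(p)\le 0$ and closing it with an integrating factor (I checked: $f'/f=\lambda/((1-\lambda)p-\gamma)$ gives exactly your $f$, and the sign of the inequality is right). This buys a fully constructive version of the reduction---every $G$'s optimal profit is itself an ICD profit level attained inside $D(\mu)$, so minimality of $p_*$ finishes immediately---and it sidesteps the existence argument for the touching ICD. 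Two boundary points you gloss over, which correspond to the paper's Cases 2--3: you should confirm that $G_{v_*}$ with $v_*=r+\E_\mu[c(v)]$ is a well-defined ICD, i.e.\ $v_*\ge c(\E[v])=\E[c(v)]$ (this is $r\ge0$, which holds since pricing above $\overbar v$ guarantees zero) and $v_*>c(v_*)$ (which can fail only in degenerate cases where $\pius(\Gamma)=0$ and the claim is checked directly), and that $f>0$ is only needed on the open interval where $p>c(p)$ strictly. Neither affects the substance.
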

\begin{proof}
See \autoref{appendix:seller:linear}.
\end{proof}
\autoref{thm:seller_linear} states that an uninformed Seller's minimum payoff (with price-independent beliefs) is characterized by a specific IPD of Buyer posterior means. By the Seller-indifference property of IPDs and that $v_*$ is the minimum of $G_{v_*}$'s support, Seller's profit when facing IPD $G_{v_*}$ is $v_*-\E[c(v)]$. \autoref{thm:seller_linear} thus implies that an uninformed Seller's minimum payoff is implemented by the most dispersed IPD that is a MPC of the prior distribution.

In proving \autoref{thm:seller_linear}, the key step is to show that given the prior $\mu$, garbling Buyer's information so that the posterior-mean distribution becomes an IPD makes Seller weakly worse off. As such, it is without loss to only consider IPDs to implement $\pius(\Gamma)$. This makes the problem one dimensional and tractable. To elaborate on the key step, suppose we find a distribution $G(v)$ such that: (i) $G\in D(\mu)$; (ii) $G$ is an IPD; and (iii) there is a $p\in\Supp(G)$ such that $\int_p^{\overbar{v}}G(s)\d s=\int_p^{\overbar{v}}F(s)\d s$. Such a $G$ is the most-dispersed IPD that is a MPC of the prior $\mu$. Consider the following two identities derived using integration by parts:\footnote{$F(p-)$ is defined as the left limit of $F$ at $p$, and similarly $G(p-)$. The integration by parts formula is for Lebesgue-Stieltjes integral.}
\begin{align}
	\lambda  \int_p^{\overbar{v}}F(s)\d s=&-(1-F(p-))\left( p-c(p) \right)+\lambda(\overbar{v}-p)+\int_p^{\overbar{v}}(p-c(s))\d F(s)\label{eqn:linear:1},\\
	\lambda  \int_p^{\overbar{v}}G(s)\d s=&-(1-G(p-))\left( p-c(p) \right)+\lambda(\overbar{v}-p)+\int_p^{\overbar{v}}(p-c(s))\d G(s)\label{eqn:linear:2}.
\end{align}
First, by property (iii) above, the LHS of \autoref{eqn:linear:1} equals the LHS of \autoref{eqn:linear:2}. Second, the MPC condition implies $\int_{v}^{\overbar{v}}(F(s)-G(s))\d s\le0$ for all $v$ and it reaches $0$ when $v=p$, it holds that $F(p-)\le G(p-)$. Therefore, the integral term on RHS of \autoref{eqn:linear:1} must be greater than that of \autoref{eqn:linear:2}. Notice that the integral term on either RHS is Seller's profit when offering price $p$. This implies that the profit from offering $p$ given Buyer mean-valuation distribution $G(v)$ is lower than Seller's maximum profit given $F(v)$. On the other hand, by the IPD property, $p$ is an optimal price given $G(v)$. Therefore the optimal profit under valuation distribution $F$ must be no lower than that under $G$. It follows that to minimize Seller's profit, it is without loss to consider only IPDs within the set $D(\mu)$, which is a one-dimensional subspace.

The logic above generalizes that of  \citet{RS17}. Their monopoly-pricing environment with $c(\cdot)=\underline v$ is covered by \autoref{ass:linear} with $\lambda=0$ and $\gamma=\underline v$. The distribution $G$ in \eqref{e:IPDlinear} then reduces to that identified by \citeauthor{RS17}.\footnote{In recent work, \citet{InostrozaTsoy2025} extend \autoref{thm:seller_linear} to a setting in which Seller does not have all the bargaining power.}

If the prior $\mu$ has binary support, then any cost function $c(v)$ is affine. This case permits an explicit solution for $\pius(\Gamma)$.
\begin{corollary}
		\label{coro:binary}
	Assume $\mu$ has binary support: $V=\left\{ v_1,v_2 \right\}$ with $v_1<v_2$. Let $\lambda=(c(v_2)-c(v_1))/(v_2-v_1)$, and let $p$ be the unique solution to
	\begin{equation}
		(p-c(p))^{\frac{1}{\lambda-1}}(p-\E[c(v)])=(v_2-c(v_2))^{\frac{\lambda}{\lambda-1}}
		\label{eqn:binary}.
	\end{equation}
	It holds that $\pius(\Gamma)=\max\left\{ p,v_1 \right\}-\E[c(v)]$.
\end{corollary}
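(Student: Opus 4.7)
The plan is to apply \autoref{thm:seller_linear} directly. With $|V|=2$, any cost function on $V$ is automatically affine with $\lambda=(c(v_2)-c(v_1))/(v_2-v_1)$, so \autoref{ass:linear} holds; it then suffices to identify $p_*\equiv\min\{v_*:G_{v_*}\in D(\mu)\}$ and confirm it equals $\max\{p,v_1\}$ with $p$ defined by \eqref{eqn:binary}. Throughout, recall that $v_*$ ranges over $[v_1,\E[v])$, each $G_{v_*}$ has mean $\E[v]$ by construction, and its support is $[v_*,v^*(v_*)]$.

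I would first characterize the MPC condition on binary support. On $[v_1,v_2]$ the prior CDF is flat at $\mu_1$, so $\int_{v_1}^{v}F(s)\,\d s=\mu_1(v-v_1)$ is linear in $v$, while $\int_{v_1}^{v}G_{v_*}(s)\,\d s$ is \emph{convex} in $v$ because $G_{v_*}$ is nondecreasing; both vanish at $v=v_1$. If $v^*(v_*)\le v_2$, then $G_{v_*}$ is supported in $[v_1,v_2]$, and mean preservation gives $\int_{v_1}^{v_2}G_{v_*}(s)\,\d s=(v_2-v_1)-(\E_{G_{v_*}}[v]-v_1)=\mu_1(v_2-v_1)$, matching the linear right-hand side at the right endpoint as well. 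Since a convex function is at most its secant, the MPC inequality then holds throughout $[v_1,v_2]$. Conversely, if $v^*(v_*)>v_2$, the positive mass of $G_{v_*}$ above $v_2$ combined with equal means forces $\int_{v_1}^{v_2}G_{v_*}(s)\,\d s>\mu_1(v_2-v_1)$, violating MPC at $v=v_2$. Hence $G_{v_*}\in D(\mu)$ if and only if $v^*(v_*)\le v_2$.

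Next, I would invoke the MPC ordering of $\{G_{v_*}\}$ stated right before \autoref{thm:seller_linear}: larger $v_*$ yields a strictly less dispersed ICD, so $v^*(v_*)$ is strictly decreasing in $v_*$. The set $\{v_*\in[v_1,\E[v]):v^*(v_*)\le v_2\}$ is therefore an upper interval, and its left endpoint is either the unique $v_*$ at which $v^*(v_*)=v_2$---call it $p$---or the boundary $v_1$ when $v^*(v_1)\le v_2$ already. In either case $p_*=\max\{p,v_1\}$, so by \autoref{thm:seller_linear} it remains only to derive \eqref{eqn:binary} for the value $p$.

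I would do so from the ICD indifference property evaluated at the two support endpoints of $[p,v_2]$. At the lowest price $p$ all mass trades, so Seller's profit is $p-\E_{G_p}[c(v)]=p-c(\E_{G_p}[v])=p-c(\E[v])=p-\E[c(v)]$, using affineness of $c$ together with mean preservation. At the highest price $v_2$ only the atom at $v_2$ trades; its mass, read off \eqref{e:ICDlinear}, is $((v_2-c(v_2))/(p-c(p)))^{1/(\lambda-1)}$, so the profit there equals $((v_2-c(v_2))/(p-c(p)))^{1/(\lambda-1)}(v_2-c(v_2))$. Equating the two expressions and multiplying through by $(p-c(p))^{1/(\lambda-1)}$ delivers exactly \eqref{eqn:binary}, yielding $\pius(\Gamma)=\max\{p,v_1\}-\E[c(v)]$. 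The main obstacle is the MPC characterization step: that the integrated-CDF condition collapses into the clean support condition $v^*(v_*)\le v_2$ is not obvious a priori and relies on the ``convex below secant'' observation plus careful mean-preservation bookkeeping; once that is in hand, the remainder is the MPC-monotonicity of the family and routine algebra.
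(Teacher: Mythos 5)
Your proposal is correct and follows essentially the same route as the paper: reduce to \autoref{thm:seller_linear}, show that $G_{v_*}\in D(\mu)$ exactly when $v^*(v_*)\le v_2$ (the paper phrases this as the convex integrated CDF of $G_{v_*}$ being unable to cross the piecewise-linear integrated CDF of $F$ in the interior, forcing $v_*=v_1$ or $v^*=v_2$), and read \eqref{eqn:binary} off the ICD indifference between the endpoint prices $p$ and $v_2$. Your explicit ``convex below secant'' verification of the MPC condition and the use of monotonicity of $v^*$ in $v_*$ are just a slightly more detailed rendering of the same argument.
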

\begin{proof}
See \autoref{appendix:coro:binary}.
\end{proof}

\vspace{.5in}

\appendix

\numberwithin{equation}{section}
\numberwithin{proposition}{section}
\numberwithin{lemma}{section}
\numberwithin{example}{section}
\numberwithin{claim}{section}

\allowdisplaybreaks

{\Large \noindent \textbf{Appendices}}

\section{Proof of \autoref{thm:joint}}
\label{appendices}
\label{appendix:B}
\begin{proof}
	We first show that $\left\{ (\pi_b,\pi_s):\pi_b\ge0,\pi_s\ge\piall(\Gamma),\pi_b+\pi_s\le S(\Gamma) \right\}\subset\Pset(\Gamma)$. Consider a trivial information structure $\tau_{0}$ in which both player's signal spaces are singletons. Note that $\tau_0$ has more-informed Buyer, but since we are interested in $\Pset$ rather than $\Pset^*$, we do not require Buyer's belief to be price independent. For any $(\pi_b,\pi_s) \in \Pset(\Gamma)$, define strategies and beliefs as follows. Let $p_l=\pi_s+\E[c(v)] \in [\ubar{v},\E[v]]$ and $p_h=\E[v]>\E[c(v)]$.
	\begin{itemize}
		\item Buyer's strategy:
	\begin{align*}
			\alpha(p_h)=\dfrac{p_l-\E[c(v)]}{p_h-\E[c(v)]}, 
			\alpha(p_l)=1, \text{ and } 
			\alpha(p)=\mathbf{1}_{p\le \underline{v}} \ \  \forall p \notin \{p_l,p_h\},
	\end{align*}
	where $\mathbf 1_{p\leq \underline{v}}$ denotes the indicator function for the set $\{p:p\leq \underline v\}$.
\item  Seller's strategy:
	\begin{align*}
		\sigma(p_l)=\frac{\pi_b}{\E[v]-p_l} \text{ and } \sigma(p_h)=1-\frac{\pi_b}{\E[v]-p_l}.
	\end{align*}
	Note that $\E[v]-p_l=S(\Gamma)-\pi_s\ge \pi_b$ guarantees that $\sigma(p_l),\sigma(p_h)\in[0,1]$.
\item  Beliefs:
	\begin{align*}
		\nu(v|p_l)=\nu(v|p_h)=\mu(v) \text{ and } \nu(v|p)= \delta_{\ubar{v}}(v) \ \ \forall p\notin\{p_l,p_h\}.
	\end{align*}
	\end{itemize}
	It is straightforward that the payoff from this strategy profile is $(\pi_b,\pi_s)$. So we need only verify that $(\sigma,\alpha,\nu)$ constitutes a wPBE. First, Buyer's strategy is optimal given beliefs because $\E[v]-p_h=0$, $\E[v]-p_l\ge 0$, and for any other price, Buyer's belief is a point mass on $\underline{v}$. Second, Seller's strategy is optimal: $\alpha(p_h)$ is defined such that $\alpha(p_h)(p_h-\E[c(v)])=p_l-\E[c(v)]$. So Seller is indifferent between offering $p_l$ and $p_h$. Any other price above $\ubar{v}$ is rejected and so is no better than $p_l$ and $p_h$. Seller's payoff is $p_l-\E[c(v)]=\pi_s\ge \piall(\Gamma)$, so any price below $\ubar{v}$ is also no better. Third, since Seller's strategy is type independent and $\nu=\mu$ on path, Bayes rule is satisfied on path.  

It remains to prove that $\Pset(\Gamma) \subset \left\{ (\pi_b,\pi_s):\pi_b\ge0,\pi_s\ge\piall(\Gamma),\pi_b+\pi_s\le S(\Gamma) \right\}$.
Pick any signal structure and any wPBE with belief $\nu$. Since $\Supp(\nu)\subset V$, sequential rationality implies that Buyer buys with probability one after any price $p<\ubar{v}$ and with probability zero when $p>\overbar{v}$. Therefore, Seller must obtain payoff $\pi_s \geq \underline \pi_s(\Gamma) \equiv \max\{\ubar{v}-\E[c(v)],0\}$. It is straightforward that Buyer's payoff $\pi_b\ge0$ and $\pi_b+\pi_s\le S(\Gamma)$.
\end{proof}

\section{Proof of \autoref{thm:joint:discrete}}
\label{appendix:thm:1:discrete}
We first prove \autoref{prop:joint:discrete} below, which we will use to prove \autoref{thm:joint:discrete}.

\subsection{A related result}
\begin{proposition}
	Fix any $\epsilon >0 $. $\exists$ a finite information structure such that $\forall(\pi_b,\pi_s)\in \{(\pi_b,\pi_s):\pi_b\ge\epsilon \text{, } \pi_s\ge \piall(\Gamma)+\epsilon \text{, and } \pi_b+\pi_s\le S(\Gamma)-\epsilon \}$, $\exists$ a finite price grid defining a game that has a sequential equilibrium with payoffs $(\pi_s,\pi_b)$.
		\label{prop:joint:discrete}
\end{proposition}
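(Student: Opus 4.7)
I would prove \autoref{prop:joint:discrete} by constructing a single finite information structure together with a family of finite price grids, one per target payoff pair, so that each resulting game admits a sequential equilibrium attaining the target exactly. The construction adapts the two-price randomization from the proof of \autoref{thm:joint}, but uses imperfectly-correlated Seller and Buyer signals so that the pessimistic off-path belief concentrating on $v$ near $\underline v$ arises as a consistent limit of trembles rather than as a free off-path choice.

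First, I would fix a discretization $V^\Delta \subset V$ with discretized prior $\mu^\Delta$ close to $\mu$, and specify an information structure with signals $t_s, t_b \in \{L,H\}$, where $L$ is correlated with low $v$ and $H$ with higher $v$, and where the joint conditional distribution $P(t_s,t_b \mid v)$ is chosen so that every pair $(t_s,t_b)$ has positive probability. Let $\delta>0$ parametrize the mismatch rate between $t_s$ and $t_b$, chosen small but positive so that (i) Seller's $t_s$-conditional cost $\E[c(v)\mid t_s]$ is close to $\E[c(v)]$ for both types, (ii) Seller's posterior over Buyer types is nondegenerate for each $t_s$, and (iii) Buyer's posterior expectation of $v$ given $t_b$ is distinct across the two Buyer types but close to the prior. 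The imperfect correlation is what will later free trembles enough to support the off-path beliefs I need.

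Second, given a target $(\pi_b,\pi_s)$ in the $\epsilon$-interior, I would pick $p_l$ satisfying $p_l - \E[c(v)] = \pi_s$ and $p_h$ near $\E[v]$, embed both in a finite price grid of mesh at most $\Delta$, and conjecture the following equilibrium: both Seller types randomize over $\{p_l,p_h\}$ with weights chosen so that the ex ante Buyer payoff equals $\pi_b$; Buyer accepts $p_l$ with probability one, and accepts $p_h$ with signal-dependent probabilities $\alpha(p_h\mid L)$ and $\alpha(p_h\mid H)$ determined by the two Seller-indifference conditions. These two conditions are two linear equations in two unknowns, and for $\delta$ and $\Delta$ small, and for the target bounded away from the triangle's boundary by $\epsilon$, they have solutions in $[0,1]$. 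On path, Buyer's posterior on $v$ after each observed $t_b$ and each on-path price coincides with his signal-conditional posterior, so Buyer's acceptance rule is sequentially rational. Off path, I construct beliefs via trembles: a sequence of completely mixed profiles in which Seller type $L$ deviates to each off-path price with strictly higher probability than type $H$, combined with uniform Buyer trembles, gives limit posteriors at every off-path $(p,t_b)$ that concentrate on type $L$ and hence on low $v$. With the grid chosen so that no off-path price falls below these pessimistic posterior means, Buyer strictly prefers to reject, which delivers consistency and sequential rationality, and hence a sequential equilibrium with payoffs exactly $(\pi_b,\pi_s)$.

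The main obstacle is uniformity: a \emph{single} information structure must serve every target in the $\epsilon$-interior, with only the price grid varying. The noise parameter $\delta$ has to be small enough that Seller's two indifference equations are simultaneously solvable with probabilities in $[0,1]$ for every such target, and large enough that the required trembles in the off-path construction exist. Bounding the joint dependence of these equations on $\delta$ and on the grid mesh $\Delta$ is the delicate step; once that is done, verifying consistency of the trembles (i.e., that the limiting off-path beliefs are induced by fully mixed Seller strategies converging to the candidate without disrupting on-path Bayesian updating) is a routine compactness argument.
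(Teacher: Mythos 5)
Your overall architecture is the right one — two on-path prices, Seller indifference supporting a mixture, and pessimistic off-path beliefs obtained as limits of trembles in which the ``low'' Seller type trembles at a higher order than the ``high'' type — and this is indeed how the paper proceeds. But your specific construction over-constrains the system in a way that breaks it. If both Buyer types accept $p_h$ with interior probabilities $\alpha(p_h\mid L),\alpha(p_h\mid H)\in(0,1)$, sequential rationality forces \emph{both} Buyer types to be exactly indifferent at $p_h$, i.e.\ $\E[v\mid t_b,p_h]=p_h$ for $t_b=L$ and $t_b=H$ simultaneously. You explicitly require the two Buyer types' posterior means to be distinct, so these two equalities cannot both hold; at most one Buyer type can mix at $p_h$. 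Relatedly, having \emph{both} Seller types randomize over $\{p_l,p_h\}$ imposes two indifference equations whose solution need not lie in $[0,1]^2$: the two Seller types have different conditional costs $\E[c(v)\mid t_s]$, and depending on the sign of $c(\ubar v)-\E[c(v)]$ one type may strictly prefer one of the two prices for every feasible Buyer response. The paper's proof avoids both problems by keeping Buyer \emph{uninformed} ($T_b$ a singleton, so only one indifference condition on the Buyer side) and by imposing indifference on only one Seller type (the $h$-type), splitting into the cases $c(\ubar v)>\E[c(v)]$ and $c(\ubar v)\le\E[c(v)]$ to check that the $l$-type's pure strategy is a best response. (The imperfectly correlated Buyer signal you describe is the device the paper uses for the D1 refinement, not for establishing sequential equilibrium here.)

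A second, more repairable gap: your off-path limit belief concentrates on Seller type $L$ and hence on ``low $v$,'' but the resulting posterior mean $\E[v\mid t_b,t_s=L]$ can sit strictly above $p_l$, in which case some off-path prices in $(p_l,\E[v\mid t_b,t_s=L]]$ would be accepted and constitute profitable Seller deviations; you lean on choosing the grid to dodge this range, which is fragile. The paper instead has the signal $l$ \emph{reveal} $v=\ubar v$ (pooling a small mass of valuations near $\ubar v$ first when $\mathbb{P}(\ubar v)=0$), so the consistent off-path belief is exactly $\delta_{\ubar v}$ and every off-path price above $\ubar v$ is rejected regardless of the grid. I would restructure your argument along these lines: uninformed Buyer, a rare Seller signal revealing $v=\ubar v$, one Seller-indifference condition, and the two cost-ordering cases.
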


One aspect of this result is weaker than \autoref{thm:joint:discrete} because the price grid here varies with the equilibrium payoffs $(\pi_b,\pi_s)$. But another aspect is stronger: all payoffs $\epsilon$ away from the boundary of $\Pset(\Gamma)$ are obtained, rather than just an $\epsilon$-net of payoffs.

\begin{proof}
	Let us initially prove the statement assuming $\mathbb{P}(\ubar{v})>0$. First, choose any $\delta\in(0,\frac{1}{2})$ and any $\eta\in(0, \mathbb{P}(v=\ubar{v}))$. For now, we keep $\delta$ and $\eta$ as free parameters and we define the information structure and the corresponding equilibrium. At the end of the proof, we will verify that when $\delta$ and $\eta$ go to zero, the equilibrium payoffs span the target set of payoffs in \autoref{prop:joint:discrete}.
	
	We first define the {information structure}. Buyer is uninformed: $T_b=\left\{ \emptyset \right\}$. Seller gets two signals: $T_s=\left\{ l,h \right\}$, with distribution given by:
	\begin{align*}
		P(l,\emptyset,v)=&\eta\delta_{\ubar{v}}(v),\\
		P(h,\emptyset,v)=&\mu(v)-\eta\delta_{\ubar{v}}(v).
	\end{align*}
That is, $v=\ubar{v}$ is revealed to Seller with probability $\eta$ using signal ``$l$''. In the rest of the proof we omit $t_b$, as Buyer is uninformed. 

We next specify certain prices and a property of the finite price grid.
Choose any $\sigma_h\in[\delta,1-\delta]$. Define $$\overbar{p}_h=\frac{\E[v]-\eta\underline{v}}{1-\eta} \quad \text{and} \quad \ubar{p}_h=\frac{\eta\underline{v}+(1-\eta)\sigma_h\overbar{p}_h}{\eta+(1-\eta)\sigma_h}.$$ That is, $\overbar{p}_h$ is $\E[v|t_s=h]$ and $\ubar{p}_h$ is the expectation of $v$ conditional on the event that pools $\sigma_h$ proportion of $t_s=h$ with all $t_s=l$. Pick any $p_l\in\left[\max\left\{\frac{\E[c(v)]-\eta c(\underline{v})}{1-\eta},\underline{v}\right\},\frac{\eta\underline{v}+(1-\eta)\delta \overbar{p}_h}{\eta+(1-\eta)\delta}\right)$.\footnote{Fixing $\delta>0$, the interval is nonempty when $\eta$ is sufficiently close to $0$.} It holds that $\underline{v}\le p_l<\ubar{p}_h<\E[v]<\overbar{p}_h$. Consider any finite grid of prices that contains $\left\{ p_l,\ubar{p}_h,\overbar{p}_h \right\}$, and is otherwise arbitrary. 
	
Now we specify the strategy profile and beliefs, and verify equilibrium.	
	\begin{itemize}
		\item \emph{Case 1:} $c(\underline{v})>\E[c(v)]$. Seller's and Buyer's strategies, $\sigma$ and $\alpha$, and Buyer's beliefs $\nu$ are respectively:
\begingroup
\addtolength{\jot}{1em} 
\begin{align*}
&\sigma(\ubar{p}_h|h)=\sigma_h, \ \sigma(p_l|h)=1-\sigma_h, \text{ and }\sigma(\ubar{p}_h|l)=1;\\
&\alpha(\ubar{p}_h)=\frac{p_l-\frac{\E[c(v)]-\eta c(\underline{v})}{1-\eta}}{\ubar{p}_h-\frac{\E[c(v)]-\eta c(\underline{v})}{1-\eta}}, \alpha(p_l)=1, \text{ and } \forall p\notin\{p_l,\ubar p_h\}, \  \alpha(p)=\mathbf{1}_{p\le \underline{v}};\\
&\nu(v|\ubar{p}_h)=\frac{\eta(1-\sigma_h)\delta_{\underline{v}}(v)+\sigma_h \mu(v)}{\eta(1-\sigma_h)+\sigma_h}, \ \nu(v|p_l)=\frac{\mu(v)-\eta\delta_{\underline{v}}(v)}{1-\eta}, \text{ and } \forall p\notin\{p_l,\ubar p_h\}, \  \nu(v|p)=\delta_{\underline{v}}(v).
\end{align*}
\endgroup
That is, Seller with signal $h$ randomizes between prices $\ubar{p}_h$ and $p_l$, while after signal $l$ she chooses $\ubar{p}_h$. Buyer randomizes after price $\ubar p_h$, accepts $p_l$, and off-path accepts prices below $\ubar v$ and rejects otherwise.

Let us verify that $(\sigma,\alpha,\nu)$ is a sequential equilibrium. Buyer's sequential rationality is straightforward, as $\E_{\nu}[v|p_l]=\overbar{p}_h>p_l$, $\E_{\nu}[v|\ubar{p}_h]=\ubar{p}_h$ and $\E_{\nu}[v|p]=\underline{v}$ for any other $p$. For Seller, note that by definition of $\alpha(\ubar{p}_h)$, Seller with signal $h$ is indifferent between offering $p_l$ and $\ubar{p}_h$. Since $c(\underline{v})>\E[c(v)]$ by hypothesis, Seller with signal $l$ finds it strictly better offering $\ubar{p}_h$ than $p_l$. Any other price $p$ is worse than offering $p_l$ for both Seller types. Finally, for consistency of Buyer's belief: Bayes rule is straightforward on the equilibrium path. The off-path belief can be derived from the limit of Seller's fully mixed strategy $\widetilde{\sigma}_n(p|h)=\frac{n^2-1}{n^2}\sigma(p|h)+\frac{1}{n^2\times k}$ and $\widetilde{\sigma}_n(p|l)=\frac{n-1}{n}\sigma(p|l)+\frac{1}{n\times k}$, where $k$ is the number of prices in the grid.

\vspace{10pt}

		\item \emph{Case 2:} $c(\underline{v})\le\E[c(v)]$. Now consider:
\begingroup
\addtolength{\jot}{1em} 
\begin{align*}
&\sigma(\overbar{p}_h|h)=\sigma_h, \ \sigma(p_l|h)=1-\sigma_h, \text{ and }\sigma(p_l|l)=1;\\
&\alpha(\overbar{p}_h)=\frac{p_l-\frac{\E[c(v)]-\eta c(\underline{v})}{1-\eta}}{\overbar{p}_h-\frac{\E[c(v)]-\eta c(\underline{v})}{1-\eta}}, \ \alpha(p_l)=1, \text{ and } \forall p\notin\{p_l,\overbar p_h\}, \  \alpha(p)=\mathbf{1}_{p\le \underline{v}};\\
&\nu(v|\overbar{p}_h)=\frac{\mu(v)-\eta\delta_{\underline{v}}(v)}{1-\eta}, \ \nu(v|p_l)=\frac{\eta\sigma_h\delta_{\underline{v}}(v)+(1-\sigma_h)\mu(v)}{\eta\sigma_h+(1-\sigma_h)}, \text{ and } \forall p\notin\{p_l,\overbar p_h\}, \ \nu(v|p)=\delta_{\underline{v}}(v).
\end{align*}
\endgroup

That is, Seller with signal $h$ randomizes between prices $\overbar{p}_h$ and $p_l$, while after signal $l$ she chooses $p_l$. Buyer randomizes after price $\overbar p_h$, accepts $p_l$, and off-path accepts prices below $\ubar v$ and rejects otherwise.

Let us verify that $(\sigma,\alpha,\nu)$ is a sequential equilibrium. Buyer's sequential rationality is straightforward, as $\E_{\nu}[v|p_l]\ge \frac{\eta \underline{v}+(1-\eta)\delta \overbar{p}_h}{\eta+(1-\eta)\delta} >p_l$, \ $\E_{\nu}[v|\overbar{p}_h]=\overbar{p}_h$, and $\E_{\nu}[v|p]=\underline{v}$ for any other $p$. For Seller, note that by definition of $\alpha(\overbar{p}_h)$, Seller with signal $h$ is indifferent between offering $p_l$ and $\overbar{p}_h$. Since $c(\underline{v})\le\E[c(v)]$ by hypothesis, Seller with signal $l$ finds it weakly better offering $p_l$ than $\overbar{p}_h$. Any other price is worse than offering $p_l$ for both Seller types. Finally, for consistency of Buyer's belief: Bayes rule is straightforward on the equilibrium path. The off-path belief can be derived from the limit of Seller's fully mixed strategy $\widetilde{\sigma}_n(p|h)=\frac{n^2-1}{n^2}\sigma(p|h)+\frac{1}{n^2\times k}$ and $\widetilde{\sigma}_n(p|l)=\frac{n-1}{n}\sigma(p|l)+\frac{1}{n\times k}$, where $k$ is the number of prices in the grid.
	\end{itemize}
	Now we calculate the players' payoffs in the above equilibria. 
	\begin{itemize}
		\item \emph{Case 1:} $c(\underline{v})>\E[c(v)]$. In this case it is optimal for Seller to offer $p_l$ after signal $h$ and $\ubar{p}_h$ after signal $l$. Therefore, in equilibrium $$\pi_s=(1-\eta)p_l+\eta\ubar{p}_h-\E[c(v)]-\eta(1-\alpha(\ubar{p}_h))(\ubar{p}_h-c(\underline{v})).$$ Note that $\ubar{p}_h$ depends on $\sigma_h$ but $p_l$ does not. For any $\sigma_h\in[\delta,1-\delta]$, when $p_l=\max\left\{ \frac{\E[c(v)]-\eta c(\underline{v})}{1-\eta},\underline{v} \right\}$, $$\pi_s\le p_l -\E[c(v)]+\eta(\ubar{p}_h-p_l)\le\max\left\{ \frac{\eta}{1-\eta}(\E[c(v)]-c(\underline{v})),\underline{v}-\E[c(v)] \right\}+\eta(\E[v]-\underline{v}).$$ When $p_l=\frac{\eta\underline{v}+(1-\eta)\delta\overbar{p}_h}{\eta+(1-\eta)\delta}$, $$\pi_s\ge p_l-\E[c(v)]-\eta (\E[v]-\E[c(v)])= \frac{\eta\underline{v}+(1-\eta)\delta \overbar{p}_h}{\eta+(1-\eta)\delta}-\E[c(v)]-\eta S(\Gamma).$$
			Therefore, when $p_l$ traverses its domain, $\pi_s$ traverses a set containing the interval $$I_{s}=\Big[ \max\left\{ \frac{\eta}{1-\eta}(\E[c(v)-c(\underline{v})]),\underline{v}-\E[c(v)] \right\}+\eta(\E[v]-\underline{v}), \frac{\eta\underline{v}+(1-\eta)\delta \overbar{p}_h}{\eta+(1-\eta)\delta}-\E[c(v)]-\eta S(\Gamma) \Big).$$ In other words, $\forall\pi_s \in I_s$ and $\forall\sigma_h\in[\delta,1-\delta]$, there exists $p_l(\sigma_h)$ such that Seller's payoff is $\pi_s$. Now consider Buyer's payoff holding $\pi_s$ fixed. When $\sigma_h$ traverses $[\delta,1-\delta]$, $\pi_b$ changes continuously. If $\sigma_h=1-\delta$, then $\pi_b=(1-\eta)\delta(\overbar{p}_h-p_l)\le \delta (\overbar{v}-\underline{v})$. If $\sigma_h=\delta$, then with at most $\eta+\delta-\eta\delta$ probability the offer is rejected and hence $\pi_s+\pi_b\ge S(\Gamma)-(\eta+\delta)(\overbar{v}-\inf c(v))$. 
\vspace{10pt}			
		\item \emph{Case 2:} $c(\underline{v})\le\E[c(v)]$. In this case it is optimal for both Seller to offer $p_l$ no matter her signal, which induces Buyer to accept with probability $1$. Therefore, in equilibrium $\pi_s=p_l-\E[c(v)]$. Buyer is indifferent between accepting the offer or not at $\overbar{p}_h$. So Buyer gets positive payoff only when the price offered is $p_l$ and hence $\pi_b=(\eta\sigma_h \underline{v}+(1-\sigma_h)\E[v])-(\eta\sigma_h+(1-\sigma_h))p_l$.	Similar to Case 1, we can calculate that as $p_l$ traverses its domain, $\pi_s$ traverses the interval $$\Big[ \max\left\{ \frac{\eta}{1-\eta}(\E[c(v)-c(\underline{v})]),\underline{v}-\E[c(v)] \right\}, \frac{\eta\underline{v}+(1-\eta)\delta \overbar{p}_h}{\eta+(1-\eta)\delta}-\E[c(v)] \Big).$$ Holding any $p_l$ fixed, as $\sigma_h$ traverses the interval $[\delta,1-\delta]$, $\pi_b$ traverses $$\Big[ \eta(1-\delta)\underline{v}+\delta\E[v]-(\eta(1-\delta)+\delta)p_l,\eta\delta \underline{v}+(1-\delta)\E[v]-(\eta\delta+(1-\delta))p_l \Big].$$
	\end{itemize}
	It follows that in either case, as $\eta$ and $\delta$ converge to zero (with the order $\eta$ first and $\delta$ second), Seller's payoff that obtain across the family of equilibria we have constructed converges to $(\max\{ 0,\piall(\Gamma) \},S(\Gamma))$. For any such $\pi_s$, Buyer's payoff that obtain converges (uniformly) to $(0,S(\Gamma)-\pi_s)$. This completes the proof of \autoref{prop:joint:discrete} when $\mathbb{P}(\underline{v})>0$.
	
	When $\mathbb{P}(\underline{v})=0$, we first modify the original environment by pooling a small mass of valuations near $v=\underline{v}$ (which is feasible since $\underline v$ is the lowest value in the support of $V$). Call this modified environment $\widetilde{\Gamma}$. Plainly, $S(\widetilde{\Gamma})=S(\Gamma)$ and $\piall(\widetilde{\Gamma})\approx \piall(\Gamma)$. Therefore, $\forall \epsilon>0$, there exists such $\widetilde{\Gamma}$ such that $\Pset(\widetilde{\Gamma})$ covers all payoffs in $\Pset(\Gamma)$ that are more than $\frac{1}{2}\epsilon$ away from the boundary of $\Pset(\Gamma)$. We can now apply the previous argument with a positive probability of the lowest valuation and find an information structure $\widetilde{\tau}$ that implements all payoffs in $\Pset(\widetilde{\Gamma})$ that are more than $\frac{1}{2}\epsilon$ away from the boundary of $\Pset(\widetilde{\Gamma})$. The proof is completed by converting $\widetilde{\tau}$ to an information structure for the original environment $\Gamma$.
\end{proof}

\subsection{Proof of \autoref{thm:joint:discrete}}
\begin{proof}
	We utilize the construction in the proof of \autoref{prop:joint:discrete}. First, $\forall \varepsilon>0$, choose $\delta$ and $\eta$ as the corresponding parameters derived in \autoref{prop:joint:discrete} such that the implementable payoffs cover all points $\epsilon/2$ away from the boundary of $\Pset(\Gamma)$. Then, $\overbar{p}_h=\frac{\E[v]-\eta \underline{v}}{1-\eta}$. Choose grid size $\Delta\in\left( 0,\frac{1}{2}|\overbar{p}_h-\E[v]| \right)$. Construct an arbitrary grid of $[\underline{v},\bar{v}]$ with grid size $\Delta$. By the definition of grid size, there exists an on-grid price $\overbar{p}_h'\in\left[ \overbar{p}_h-\Delta ,\overbar{p}_h \right]$. Now choose $\eta'\le \eta$ s.t. $\overbar{p}_h'=\frac{\E[v]-\eta'\underline{v}}{1-\eta'}$. Note that $\overbar{p}_h'\ge \overbar{p}_h-\Delta$ implies $\frac{1-\eta'}{\eta'(\E[v]-\underline{v})}\le \frac{1}{\frac{\eta}{1-\eta}(\E[v]-\underline{v})-\Delta}$. From now on, we fix $\Delta,\eta',\delta,\bar{p}_h'$ and the grid.\par
	Pick any $(\pi_b,\pi_s)\in \Pset(\Gamma)$ that is $\varepsilon/2$ away from the boundary of $\Pset(\Gamma)$. Note that reducing $\eta$ to $\eta'$ expands the set of implementable payoffs in \autoref{prop:joint:discrete}. Therefore, given $\delta,\eta'$, the construction in \autoref{prop:joint:discrete} defines an information structure s.t. $(\pi_b,\pi_s)$ is an equilibrium payoff pair. Let $(p_l,\sigma_h)$ define the constructed equilibrium.\footnote{All other parameters defining the equilibrium are calculated from $\eta',p_l,\sigma_h$. Recall that the on-path prices are $p_l, \bar{p}_h',\underline{p}_h$.} Now we modify $p_l$ and $\sigma_h$ to ``snap'' the on-path prices onto the grid. Choose $p_l'$ to be the on-grid price no greater than and closest to $p_l$. So $p_l-p_l'<\Delta$. Let $\underline{p}_h'$ be the on-grid price closest to $\underline{p}_h$ such that $\sigma_h'=\frac{\eta'}{1-\eta'}\frac{\ubar{p}_h'-\ubar{v}}{\overbar{p}_h'-\ubar{p}_h'}\in[\delta,1-\delta]$ (note that since $\sigma_h'$ is increasing in $\underline{p}_h'$, this is achieved by one of the two grid points to the left and right of $\underline{p}_h$). It can be easily verified that $p_l'<\underline{p}_h'<\E[v]<\bar{p}_h'$. Observe that
	\begin{align*}\left|\frac{\d \sigma_h}{\d\ubar{p}_h'}\right|=\frac{\eta'}{(1-\eta')^2}\frac{\E[v]-\underline{v}}{(\overbar{p}_h'-\ubar{p}_h')^2}\le\frac{\eta'}{(1-\eta')^2}\frac{\E[v]-\underline{v}}{(\overbar{p}_h'-\E[v])^2}=  \frac{1}{\eta'(\E[v]-\underline{v})}\le \frac{1}{1-\eta'}\frac{1}{\frac{\eta}{1-\eta}(\E[v]-\underline{v})-\Delta},
	\end{align*}
	where the last inequality is from $\frac{1-\eta'}{\eta'(\E[v]-\underline{v})}\le \frac{1}{\frac{\eta}{1-\eta}(\E[v]-\underline{v})-\Delta}$. 
	Therefore, $|\underline{p}_h-\underline{p}_h'|\le \Delta$ implies $|\sigma_h-\sigma_h'|\le\frac{1}{1-\eta'}\frac{\Delta}{\frac{\eta}{1-\eta}(\E[v]-\underline{v})-\Delta}$. \par
	Take the information structure and equilibrium from the proof of \autoref{prop:joint:discrete} corresponding to parameters $\eta'$, $p_l'$ and $\sigma_h'$. Now we calculate the equilibrium payoffs and compare that to $(\pi_b,\pi_s)$. We discuss the two cases separately:
	
	In case 1 ($c(\underline{v})>\E[c(v)]$), we first bound $|\alpha'(\underline{p}_h')-\alpha(\underline{p}_h)|$:
	\begin{align*}
	    &|\alpha'(\underline{p}_h')-\alpha(\underline{p}_h)|\\
	    \le& \frac{|p_l'-p_l|}{\underline{p}_h-\E[c(v)|v>\underline{v}]}+\left|\frac{p_l'-\E[c(v)|v>\underline{v}]}{\underline{p}_h-\E[c(v)|v>\underline{v}]}-\frac{p_l'-\E[c(v)|v>\underline{v}]}{\underline{p}_h'-\E[c(v)|v>\underline{v}]}\right|\\
	    \le&\frac{2\Delta}{\underline{p}_h-\E[c(v)|v>\underline{v}]}.
	\end{align*}
	The second inequality follows from $\underline{p}_h'>p_l'$ and $|p_l'-p_l|,|\underline{p}_h'-\underline{p}_h|<\Delta$. In this case, Seller's payoff is $\alpha'(\underline{p}_h')(\underline{p}_h'-\E[c(v)])$ (note that Seller always finds $\underline{p}_h'$ optimal, which is accepted with probability $\alpha'$). Therefore,
	\begin{align*}
	    |\pi_s-\pi_s'|\le |\alpha'(\underline{p}_h')-\alpha(\underline{p}_h)|(\underline{p}_h-\E[c(v)])+\alpha'(\underline{p}_h')|\underline{p}_h'-\underline{p}_h|\le3\Delta,
	\end{align*}
	where the last inequality uses $\E[c(v)]>\E[c(v)|v>\underline{v}]$. Buyer's payoff is $\pi_b'=\E[v]-\underline{p}_h'+(1-\eta')(1-\sigma_h')(\underline{p}_h'-p_l')$ (note that Buyer always finds accepting the on-path prices optimal). Therefore,
	\begin{align*}
	    |\pi_b-\pi_b'|\le& |\underline{p}_h-\underline{p}_h'|+(1-\eta')\left(|\sigma_h'-\sigma_h|(\underline{p}_h-p_l)+(1-\sigma'_h)(|\underline{p}_h-\underline{p}_h'|+|p_l-p_l'|)\right)\\
	    \le&3\Delta+(\bar{v}-\underline{v})\frac{1}{1-\eta'}\frac{\Delta}{\frac{\eta}{1-\eta}(\E[v]-\underline{v})-\Delta}.
	\end{align*}
	
	In case 2 ($c(\underline{v})\le\E[c(v)]$), Seller finds it optimal to always offer $p_l$; hence, $\pi_s=p_l-\E[c(v)]$. Therefore,
	\begin{align*}
	    |\pi_s-\pi_s'|\le |p_l-p_l'|\le \Delta.
	\end{align*}
	Buyer's payoff is $\pi_b'=\E[v]-p_l'-(1-\eta')\sigma_h'(\bar{p}_h'-p_l')$ (note that Buyer always finds accepting the on-path prices optimal). Therefore,
	\begin{align*}
	    |\pi_b-\pi_b'|\le& |p_l-p_l'|+(1-\eta')\left(|\sigma_h'-\sigma_h|(\underline{p}_h-p_l)+\sigma'_h(|\underline{p}_h-\underline{p}_h'|+|p_l-p_l'|)\right)\\
	    \le&3\Delta+(\bar{v}-\underline{v})\frac{1}{1-\eta'}\frac{\Delta}{\frac{\eta}{1-\eta}(\E[v]-\underline{v})-\Delta}.
	\end{align*}	
	
	In either case, 
	\begin{align*}
	    ||(\pi_b,\pi_s)-(\pi_b',\pi_s')||\le 6\Delta+\frac{2\Delta(\bar{v}-\underline{v})}{\eta(\E[v]-\underline{v})-(1-\eta)\Delta}.
	\end{align*}
	By choosing $\Delta$ sufficiently small, we bound $||(\pi_b,\pi_s)-(\pi_b',\pi_s')||$ above by $\epsilon$. \par
	
	To summarize, $\forall \epsilon>0$, there exist parameters $\delta,\eta,\eta'$, and $\Delta$ such that for an arbitrary grid of $[\ubar{v},\bar{v}]$ with grid size $\Delta$, for any $(\pi_b,\pi_s)\in\Pset(\Gamma)$, we construct an information structure with sequential equilibrium payoff within the $\epsilon$-neighbouthood of $(\pi_b,\pi_s)$. That is, the set of payoffs from sequential equilibria corresponding to some information structure is an $\epsilon-$net of $\Pset(\Gamma)$.
\end{proof}

\section{Proof of \autoref{thm:seller}}
\label{sec:proof_seller}

We prove the theorem via three lemmas.

\begin{lemma}
		\label{lem:Seller:split}
		$\forall (\pi^*_b,\pi_s^*)\in \Pset^*_{us}(\Gamma)$, $\forall \pi_s\in[\pi_s^*,S(\Gamma)] $ and $\pi_b\in[0,S(\Gamma)-\pi_s]$, there exists $\widetilde{\tau}\in\mathbf{T}_{us}$ such that $(\pi_b,\pi_s)\in\Pi^*(\Gamma,\widetilde{\tau})$. 
\end{lemma}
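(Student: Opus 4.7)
The plan is to construct the desired $\widetilde{\tau}$ as an explicit garbling of an information structure $\tau^*\in\mathbf{T}_{us}$ that already implements $(\pi_b^*,\pi_s^*)$, following the geometry in \autoref{fig:thm:seller}. Fix such a $\tau^*$ and an associated wPBE with price-independent beliefs. Because beliefs are price-independent, let $H$ denote the distribution of Buyer's posterior mean $m(t_b)\equiv \E[v\mid t_b]$ induced by $\tau^*$; a Buyer signal is fully summarized by $m(t_b)$ for purposes of his accept/reject decision. Using $H$, the proof passes through two thresholds $z^*\le p^*$ and a pooled signal designed to leave Seller exactly indifferent between price $p^*$ and any deviation.

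First I would choose $z^*\in[\ubar{v},\overbar{v}]$ by the intermediate value theorem so that the trading surplus restricted to $\{m\ge z^*\}$ equals $\pi_b+\pi_s$, which is possible because the restricted surplus is continuous in the threshold (with atoms handled by splitting the atom at $z^*$), ranges over $[0,S(\Gamma)]$, and $\pi_b+\pi_s\le S(\Gamma)$. Then I would define
\[
p^*\equiv\frac{\pi_s+\E[c(v)\mathbf{1}_{m(t_b)\ge z^*}]}{P(m(t_b)\ge z^*)},
\]
so that if Buyers with $m\ge z^*$ all trade at $p^*$, Seller's profit is exactly $\pi_s$. The inequality $p^*\ge z^*$ follows from $\pi_s\ge \pi_s^*$ together with the observation that pricing at $z^*$ in $\tau^*$ already yields at most $\pi_s^*$; the inequality $p^*\le \E[v\mid m\ge z^*]$ follows from $\pi_b\ge 0$. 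These bounds then let me choose $\lambda\in[0,1]$ so that pooling every Buyer signal with $m\in[z^*,p^*]$ together with a fraction $\lambda$ of signals with $m\in(p^*,\overbar{v}]$ gives a new signal with posterior mean exactly $p^*$; the remaining $(1-\lambda)$ fraction of high signals and all signals with $m<z^*$ are preserved. This defines $\widetilde{\tau}\in\mathbf{T}_{us}$ as a valid garbling of $\tau^*$, and hence a bona fide information structure.

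I would then verify that the profile ``Seller posts $p^*$; Buyer accepts iff his posterior mean under $\widetilde{\tau}$ is at least $p^*$, with indifference broken toward acceptance'' is a wPBE of $(\Gamma,\widetilde{\tau})$ with price-independent beliefs, and that its payoffs are $(\pi_b,\pi_s)$. Buyer optimality is immediate. Seller optimality amounts to checking three price regions: (i) for $p<z^*$, the acceptance set in $\widetilde{\tau}$ at price $p$ coincides with that in $\tau^*$ at the same price — pooled-signal Buyers accept since $p<z^*\le p^*$, while preserved low signals accept iff $m\ge p$ — so profit is bounded by $\pi_s^*\le \pi_s$; (ii) for $p\in[z^*,p^*]$, only the set $\{m\ge z^*\}$ accepts, giving profit $\pi_s-(p^*-p)P(m\ge z^*)\le \pi_s$ with equality at $p^*$; (iii) for $p>p^*$, only the $(1-\lambda)$-fraction of preserved high signals with $m\ge p$ accept, yielding profit at most $(1-\lambda)\pi_s^*\le \pi_s$. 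Total trading surplus in equilibrium is $\pi_b+\pi_s$ by choice of $z^*$, so Buyer's payoff is $\pi_b$.

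The main obstacle I anticipate is the low-price region (i): one must verify cleanly that the pooling does not expand the set of Buyer types willing to trade at any $p<z^*$ beyond what was already willing in $\tau^*$, so that Seller's deviation profits are inherited from the original equilibrium bound $\pi_s^*$. The other delicate point is atom handling when $H$ has mass at $z^*$ or at $p^*$, which requires splitting the atom across the pooled and unpooled groups to preserve both Bayes plausibility and the identity $\E[\text{pooled}]=p^*$; this is a bookkeeping issue rather than a conceptual one. The boundary case $\pi_s=S(\Gamma)$ reduces to $z^*=\ubar{v}$ and $\lambda=1$, making $\widetilde{\tau}$ the no-information structure for Buyer, consistent with Seller extracting all surplus at price $\E[v]$.
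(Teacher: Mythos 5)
Your construction is essentially identical to the paper's proof of this lemma: the same threshold $z^*$ chosen by the intermediate value theorem to fix the efficiency loss (with an atom-splitting fraction $\beta$ at $z^*$), the same price $p^*$ defined so that trading with all remaining types yields $\pi_s$, the same pooling of posterior means in $[z^*,p^*]$ with a $\lambda$-fraction of higher ones to create a signal with mean exactly $p^*$, and the same three-region verification of Seller optimality (deviations below $z^*$ bounded by $\pi_s^*$, deviations above $p^*$ bounded by $(1-\lambda)\pi_s^*$, and no posterior means in $(z^*,p^*)$). The two delicate points you flag are handled in the paper exactly as you anticipate, so the plan is correct and complete in substance.
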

In words, this lemma says that the set $\Pset^*_{us}(\Gamma)$ consists of all payoff pairs in $\Pset(\Gamma)$ such that Seller's payoff is above some floor. By definition, the floor is $\pius(\Gamma)$ defined in \eqref{e:ubar_pi}. \citet[Theorem 12, part 2]{hart2019better} implies that the floor is achieved. Hence, the lemma implies part \ref{seller2} of \autoref{thm:seller}. 

\begin{proof}
	Let $\gamma=S(\Gamma)-(\pi_b+\pi_s)$ be the loss of total surplus for payoff pair $(\pi_b,\pi_s)$. We construct an information structure $\widetilde{\tau}$ such that the efficiency loss is $\gamma$, Seller's payoff is $\pi_s$ and Buyer's payoff is $\pi_b$. Let $P(t_b,v)$ be the joint distribution specified by an Seller-uninformed information structure $\tau$ for which $(\pi_b^*,\pi_s^*)\in\Pi^*(\Gamma,\tau)$.
	
	First we determine the types that are not traded. For this, we find a threshold value $z^*$ such that trading all expected valuations strictly below $z^*$ and some fraction of expected valuation $z^*$ generates surplus $\gamma$. Consider the function
	\begin{align*}
		y(z)=\int_{\E[v|t_b]<z}(v-c(v))P(\d t_b,\d v),
	\end{align*}
which is well defined because the domain of integration is measurable. The set $\left\{ t_b|\E[v|t_b]<z \right\}$ expands when $z$ increases. So $y(z)$ is increasing in $z$. Moreover, $y(\infty)=S(\Gamma)$ and $y(-\infty)=0$. So there exists $z^*$ such that $y(z)\leq (\geq) \gamma$ for $z<(>)z^*$. 
By definition, $\left\{ t_b|\E[v|t_b]<z \right\}=\bigcup_{\epsilon>0}\left\{ t_b|\E[v|t_b]<z-\epsilon \right\}$, so $y(z)$ is a left-continuous function. Hence, $y(z^*)\le \gamma$. Define $\beta$ by:
	\begin{align*}
		\gamma=y(z^*)+\beta\int_{\E[v|t_b]=z^*}(v-c(v))P(\d t_b,\d v)
	\end{align*}
	The RHS is $y(z^*)\le \gamma$ when $\beta=0$ and $\lim_{z\to z^{*+}}y(z)\ge \gamma$ when $\beta=1$. So $\beta\in[0,1]$. In words, excluding all $t_b$ that induces $\E[v|t_b]<z^*$ and $\beta$ portion of $t_b$ inducing $\E[v|t_b]=z^*$ leads to efficiency loss $\gamma$.
	
Next, we construct $\widetilde{\tau}\in \mathbf T_{us}$ such that all remaining surplus is realized and Seller gets payoff $\pi_s$. If Seller sells at price $p$ and trades with all remaining types, Seller's payoff is:
	\begin{align*}
		\int_{\E[v|t_b]>z^*}(p-c(v))P(\d t_b,\d v)+(1-\beta)\int_{\E[v|t_b]=z^*}(p-c(v))P(\d t_b,\d v).
	\end{align*}
	Therefore, Seller's payoff is $\pi_s$ when trading with all remaining types at the price\footnote{Note that $\pi_s\le S(\Gamma)-\gamma$ guarantees that $p^*\ge z^*$.}
	\begin{align*}
		p^*=\dfrac{\pi_s+\int_{\E[v|t_b]>z^*}c(v)P(\d t_b,\d v)+(1-\beta)\int_{\E[v|t_b]=z^*}c(v)P(\d t_b,\d v)}{		\int_{\E[v|t_b]>z^*}P(\d t_b,\d v)+(1-\beta)\int_{\E[v|t_b]=z^*}P(\d t_b,\d v)}.
	\end{align*}
	To ensure that all non-excluded Buyer types accept price $p^*$, we construct $\widetilde{\tau}$ by pooling all non-excluded types such that $\E[v|t_b]<p^*$ and a $\lambda$ fraction of those with signal $t_b$ such that $\E[v|t_b]\ge p^*$. The fraction $\lambda $ is determined as follows:
	\begin{align*}
		&\lambda\int_{\E[v|t_b]\ge p^*}(v-p^*)P(\d t_b,\d v)+\int_{z^*<\E[v|t_b]<p^*}(v-p^*)P(\d t_b,\d v)\\
		&+(1-\beta)\int_{\E[v|t_b]=z^*}(v-p^*)P(\d t_b,\d v)=0\\
		\implies&\lambda=\dfrac{\int_{z^*<\E[v|t_b]<p^*}(p^*-v)P(\d t_b,\d v)+(1-\beta)\int_{\E[v|t_b]=z^*}(p^*-z^*)P(\d t_b,\d v)}{\int_{\E[v|t_b]>p^*}(v-p^*)P(\d t_b,\d v)},
	\end{align*}
	where $\lambda\in[0,1]$ follows from the fact that the LHS of the first equality traverses from negative to positive when $\lambda$ traverses $[0,1]$.
	
Let $\widetilde{T}_b=T_b\cup\{t_{\emptyset}\}$, where $t_{\emptyset}$ is topologically disjoint from $T_b$. The information structure $\widetilde \tau\in \mathbf T_{us}$ is given by the following distribution:
	\begin{align*}
			\widetilde{P}(t_b,v)&=
			\begin{dcases}
			(1-\lambda)P(t_b,v)&\forall\ t_b\ s.t.\ \E[v|t_b]\ge p^*\\
			P(t_b,v)&\forall\ t_b\ s.t.\  \E[v|t_b]<z^*\\
			\beta P(t_b,v)&\forall\ t_b\ s.t.\  \E[v|t_b]=z^*;\\
			\end{dcases}\\[10pt]
		\widetilde{P}(t=t_{\emptyset},v)&=\lambda\int_{\E[v|t_b]\ge p^*}P(\d t_b, v)+\int_{z^*<\E[v|t_b]<p^*}P(\d t_b,v)+(1-\beta)\int_{\E[v|t_b]=z^*}P(\d t_b,v).
	\end{align*}
(It can be verified that $\widetilde{P}$ defines a valid information structure.)
 
Now we define Buyer's strategy $\widetilde \alpha$. Let $\alpha$ be Buyer's strategy corresponding to the wPBE of game $(\Gamma,\tau)$. Define $\widetilde{\alpha}(p,t_b)=\alpha(p,t_b)$ when $t_b\neq t_{\emptyset}$, and $\widetilde{\alpha}(p,t_{\emptyset})=\mathbf{1}_{p\le p^*}$. Sequential rationality of $\widetilde \alpha$ is straightforward. 

It remains only to verify that pricing at $p^*$ is optimal for Seller. There is no profitable deviation to any higher price because
$$\sup_{p>p^*}\int \widetilde{\alpha}(p,t_b)(p-c(v))\widetilde{P}(\d t_b,\d v)=(1-\lambda)\int \alpha(p,t_b)(p-c(v)){P}(\d t_b,\d v) \le (1-\lambda)\pi_s^*.$$
There is no profitable deviation to any price lower than $z^*$ because
$$\sup_{p\le z^*}\int \widetilde{\alpha}(p,t_b)(p-c(v))\widetilde{P}(\d t_b,\d v)
		\le \sup_{p\le z^*}\int \alpha(p,t_b)(p-c(v)){P}(\d t_b,\d v)
		\le \pi_s^*.$$
	By construction, there is no $t_b$ that induces a belief with $\E_{\widetilde{\nu}}[v|t_b]\in(z^*,p^*)$. Therefore, it is suboptimal for Seller to post any price in $(z^*,p^*)$. It follows that it is optimal (strictly optimal when $\pi_s>\pi_s^*$) for Seller to offer $p^*$ and get payoff $\pi_s$. Buyer's payoff is all the remaining surplus:	 $S(\Gamma)-\gamma-\pi_s=\pi_b$.
	\begin{remark}
		If $\pi_b=S(\Gamma)-\pi_s$, the market is efficient, $z^*=-\infty$, and $p^*=\inf_{t_b}\E[v|t_b]$. \qedhere
	\end{remark}
\end{proof}

\begin{lemma}
\label{lem:us_triangle}
$\Pset_{us}^*(\Gamma)=\Pset^*(\Gamma)$. 	
\end{lemma}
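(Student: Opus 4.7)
The inclusion $\Pset^*_{us}(\Gamma) \subset \Pset^*(\Gamma)$ is trivial since $\mathbf{T}_{us} \subset \mathbf{T}$, so the substantive task is to show $\Pset^*(\Gamma) \subset \Pset^*_{us}(\Gamma)$. My plan is to take any payoff pair $(\pi_b, \pi_s) \in \Pset^*(\Gamma)$ implemented by a wPBE with price-independent beliefs in some $(\Gamma, \tau)$, to exhibit an uninformed-Seller information structure admitting a wPBE payoff pair $(\pi'_b, \pi'_s)$ with $\pi'_s \leq \pi_s$, and then to splice this into an uninformed-Seller implementation of $(\pi_b, \pi_s)$ itself via \autoref{lem:Seller:split}.

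The key economic idea is that, under price-independent beliefs, Buyer's strategy does not depend on Seller's signal, so Seller's extra information can only ever help her, not hurt her. Concretely, fix $\tau \in \mathbf{T}$ and a wPBE $(\sigma, \alpha, \nu)$ with price-independent beliefs yielding $(\pi_b, \pi_s)$, and define the conditional expected Seller profit
\[
\Phi(p, t_s) \;:=\; \int (p - c(v))\, \alpha(p, t_b)\, P(\d t_b, \d v \mid t_s).
\]
This is well-defined because price-independent beliefs make $\alpha$ a function of $(p, t_b)$ only. Seller's optimality in $\tau$ gives $\sup_p \Phi(p, t_s) = \pi_s(t_s)$ for $P$-a.e.\ $t_s$, with $\pi_s = \int \pi_s(t_s)\, P(\d t_s)$.

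Now construct $\tau' \in \mathbf{T}_{us}$ by marginalizing $\tau$'s joint distribution down to $(t_b, v)$ (Seller's signal is a singleton). Under $\tau'$, the uninformed Seller's expected profit from $p$ is $\Phi'(p) = \int \Phi(p, t_s)\, P(\d t_s)$, so
\[
\sup_p \Phi'(p) \;=\; \sup_p \int \Phi(p, t_s)\, P(\d t_s) \;\leq\; \int \sup_p \Phi(p, t_s)\, P(\d t_s) \;=\; \pi_s.
\]
Reusing $\alpha$ and $\nu(\cdot \mid t_b)$ for Buyer in $\tau'$ (they remain sequentially rational and Bayes-consistent because the $(t_b, v)$-marginal is unchanged and beliefs are price-independent) and letting Seller best respond yields a wPBE of $(\Gamma, \tau')$ with price-independent beliefs and Seller payoff $\pi'_s \leq \pi_s$, so $(\pi'_b, \pi'_s) \in \Pset^*_{us}(\Gamma)$.

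Finally, apply \autoref{lem:Seller:split} with starting pair $(\pi^*_b, \pi^*_s) := (\pi'_b, \pi'_s)$: since $\pi_s \in [\pi'_s, S(\Gamma)]$ and $\pi_b \in [0, S(\Gamma) - \pi_s]$ (both hold because $(\pi_b, \pi_s) \in \Pset(\Gamma)$ and $\pi_s \geq \pi'_s$), the lemma produces $\widetilde\tau \in \mathbf{T}_{us}$ with $(\pi_b, \pi_s) \in \Pi^*(\Gamma, \widetilde\tau) \subset \Pset^*_{us}(\Gamma)$, completing the proof. The main technical obstacle will be ensuring that Seller actually has a best response in $\tau'$ so that an honest wPBE (rather than only an $\varepsilon$-equilibrium) exists there; this can be handled by breaking Buyer's indifference at any supremizing price in Seller's favor, or, failing that, by a short approximation argument showing $\pius(\Gamma) \leq \pi_s$ directly and then applying \autoref{lem:Seller:split} to a pair $(\pi^*_b, \pi^*_s) \in \Pset^*_{us}(\Gamma)$ with $\pi^*_s$ sufficiently close to $\pius(\Gamma)$.
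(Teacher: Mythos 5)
Your proposal is correct and follows essentially the same route as the paper: marginalize the information structure over Seller's signal, use price-independence to carry Buyer's strategy and beliefs over unchanged, bound the uninformed Seller's optimal profit by $\pi_s$ via the ``sup of the average is at most the average of the sups'' step (which in the paper appears as optimality of $\sigma$ against the restricted class of signal-independent strategies), and then invoke \autoref{lem:Seller:split} to recover the exact pair $(\pi_b,\pi_s)$. Your explicit attention to whether the supremum is attained in $\tau'$ is a point the paper handles only tersely (``achievable by a Seller's strategy when $\alpha$ is modified to break tie in favor of Seller''), but the substance is identical.
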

In words, this lemma says that uninformed-Seller information structures implement all payoff pairs implementable with price-independent beliefs under any information structure. As it is trivial that $\Pset^*_{us}(\Gamma)\subset \Pset^*_{mb}(\Gamma)\subset \Pset^*(\Gamma)$, this establishes part \ref{seller1} of \autoref{thm:seller}.
\begin{proof}
	$\Pset_{us}^*(\Gamma)\subset \Pset^*(\Gamma)$ is trivial, so we need only prove the opposite direction. Suppose that under some information structure $\tau$ there is a wPBE $(\sigma,\alpha,\nu)$ with price-independent beliefs and payoffs $(\pi_b,\pi_s)$. Consider an information structure $\tau'\in\mathbf{T}_{us}$ defined by $Q(t_b,v)=P(T_s,t_b,v)$. $\nu$ is a consistent belief system given information structure $\tau$ and strategy $\sigma$. Now we verify that $\forall \sigma'$, $\nu$ is a consistent belief system given $\tau'$ and $\sigma'$. For every measurable rectangle $D\subset \mathbb{R}\times T_b\times V$,
\begin{align*}
	 &\int_D\nu(\d v|p,t_b)\sigma'(\d p) Q(\d t_b, V)\\
	=&\int_{D_p}\sigma'(\d p)\cdot\int_{D_{t_b,v}}\nu(\d v|p,t_b) P(T_s,\d t_b, V)\\
	=&\int_{D_p}\sigma'(\d p)\cdot \int_{D_{t_b,v}}P(T_s,\d t_b,\d v)\\
	=&\int_D\sigma'(\d p) Q(\d t_b,\d v),
\end{align*}
where $D_{p}$ and $D_{t_b,v}$ are the projection of $D$ on dimension $p$ and $t_b,v$ respectively. The first and third equalities use the definition of measure $Q$. The second equality is the definition of price-independent belief. Since the product-sigma-algebra is uniquely defined by the product of sigma-algebras, verifying on all rectangular $D$ guarantees that $\nu$ is a consistent belief system. Therefore, $\alpha$ remains a best response for Buyer. Moreover,
	\begin{align*}
		&\sup_{\sigma'}\int (p-c(v))\alpha (p,t_b)\sigma'(\d p)Q(\d t_b,\d v)\\
		=&\sup_{\sigma'}\int (p-c(v))\alpha (p,t_b)\sigma'(\d p)P(\d t_s,\d t_b,\d v)\\
		\le&\int (p-c(v))\alpha (p,t_b)\sigma(\d p|t_s)P(\d t_s,\d t_b,\d v)=\pi_s.
	\end{align*}
	The first line is achievable by a Seller's strategy when $\alpha$ is modified to break ties in favor of Seller. Therefore, $\pi_s\ge \pius(\Gamma)$ and hence $(\pi_b,\pi_s)\in \Pset^*_{us}(\Gamma)$.
\end{proof}
	
\begin{lemma}
\label{lem:us_unique} For any $(\pi_b,\pi_s)\in \Pset^*_{us}(\Gamma)$ with $\pi_s>\pius(\Gamma)$, there is $\tau\in \mathbf{T}_{us}$ with $\Pi(\Gamma,\tau)=\left\{ (\pi_b,\pi_s) \right\}$.
\end{lemma}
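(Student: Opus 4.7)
My plan is to take the information structure $\widetilde{\tau}\in\mathbf{T}_{us}$ constructed in the proof of \autoref{lem:Seller:split}---a garbling of the $\tau^*$ that achieves $\pius(\Gamma)$, which pools all Buyer posterior-mean signals below $p^*$ together with a $\lambda$-fraction of signals with posterior mean above $p^*$ into a single signal $t_{\emptyset}$ whose posterior mean is exactly $p^*$---and show directly that \emph{every} wPBE of $(\Gamma,\widetilde{\tau})$ yields payoffs $(\pi_b,\pi_s)$ when $\pi_s>\pius(\Gamma)$. Existence of such a wPBE is already established in \autoref{lem:Seller:split}, so only uniqueness needs to be argued.

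The core of the argument is a taxonomy of Seller's expected profit $f(p)$ from each pure price $p$, given an arbitrary equilibrium Buyer strategy. Price-independent beliefs collapse Buyer's optimal behavior to: accept if $\E[v\mid t]>p$, reject if $\E[v\mid t]<p$, and arbitrary at indifference. Let $q\in[0,1]$ denote Buyer's acceptance probability at the pooled signal $t_{\emptyset}$ when the price is $p^*$. I would establish four bounds. First, for $p\le z^*$ the trading set under $\widetilde{\tau}$ coincides with that under $\tau^*$ at price $p$, so $f(p)$ equals the $\tau^*$-profit at $p$, hence $f(p)\le\pius(\Gamma)<\pi_s$. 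Second, for $p>p^*$ only the revealed high-type fraction $1-\lambda$ can accept, giving $f(p)\le(1-\lambda)\pius(\Gamma)<\pi_s$. Third, for $p\in(z^*,p^*)$ the trading set is exactly $\{t_{\emptyset}\}$ together with the revealed high types, so $f(p)$ is affine and strictly increasing, with supremum $\pi_s^{(\emptyset)}+\pi_s^{(B)}=\pi_s$ approached as $p\uparrow p^*$ but never attained in the open interval. Fourth, $f(p^*)=q\,\pi_s^{(\emptyset)}+\pi_s^{(B)}$, where $\pi_s^{(\emptyset)}\equiv p^*\Pr(t_{\emptyset})-\E[c(v)\bm{1}_{t_{\emptyset}}]$ and $\pi_s^{(B)}$ is defined analogously for the revealed high types; hence $f(p^*)=\pi_s$ iff $q=1$ and $f(p^*)<\pi_s$ otherwise.

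Combining these bounds, $\sup_p f(p)=\pi_s$ in any wPBE, and this supremum is attained somewhere if and only if $q=1$ and Seller plays the pure price $p^*$; otherwise $f(p^*)<\pi_s$ while prices in $(z^*,p^*)$ only approach $\pi_s$, so Seller has no best response and no wPBE exists. Residual Buyer indifferences at revealed signals with $\E[v\mid t_b]=p^*$ (if any) must likewise resolve in favor of trade by the same supremum-attainment logic, and Seller cannot mix because every pure price other than $p^*$ yields strictly less than $\pi_s$. Hence in every wPBE Seller posts $p^*$, Buyer accepts at every weakly preferred signal, and the payoff pair is exactly $(\pi_b,\pi_s)$.

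The main obstacle is ruling out the degenerate case $\pi_s^{(\emptyset)}=0$, in which $q$ drops out of $f(p^*)$ and the argument above collapses. But the hypothesis $\pi_s>\pius(\Gamma)$ precludes this: if $\pi_s^{(\emptyset)}=0$ then $\pi_s=\pi_s^{(B)}\le(1-\lambda)\pius(\Gamma)\le\pius(\Gamma)$, a contradiction. A secondary technicality is to verify that the taxonomy handles the boundary contributions of $\tau^*$-signals with $\E[v\mid t_b]=p^*$ correctly; these are either absorbed into $t_{\emptyset}$ or, if revealed, strictly weakly prefer to accept at $p^*$, so they do not disturb the supremum-attainment conclusion.
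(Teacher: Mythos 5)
Your overall strategy is the same as the paper's: take the garbling $\widetilde\tau$ of the payoff-minimizing structure $\tau^*$ from \autoref{lem:Seller:split} (with $\pi_s^*=\pius(\Gamma)$), show every price other than $p^*$ yields strictly less than $\pi_s$ under any sequentially rational Buyer behavior, and conclude via the attainment-of-the-supremum logic that Seller must post $p^*$ and Buyer must accept at every weakly preferred signal. Your taxonomy of $f(p)$, the treatment of prices in $(z^*,p^*)$ as an affine function whose supremum $\pi_s$ is approached but never attained, and the explicit handling of the degenerate case $\pi_s^{(\emptyset)}=0$ are all sound, and in places more explicit than the paper's own write-up.

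There is, however, one step you assert without proof, and it is exactly the step to which the paper devotes the second half of its argument: the bounds $f(p)\le\pius(\Gamma)$ for $p\le z^*$ and $f(p)\le(1-\lambda)\pius(\Gamma)$ for $p>p^*$ (your degenerate-case inequality $\pi_s^{(B)}\le(1-\lambda)\pius(\Gamma)$ has the same dependence). What you know from $\tau^*$ attaining the infimum is only that \emph{in the particular equilibrium of $(\Gamma,\tau^*)$} no price yields Seller more than $\pius(\Gamma)$. The wPBE of $(\Gamma,\widetilde\tau)$ you are analyzing may resolve Buyer indifference differently on $T_p=\{t_b:\E[v|t_b]=p\}$, and since $v\ge c(v)$ every indifferent type trades at a nonnegative margin, so accepting more of them can only raise Seller's profit at $p$ above its value in the $\tau^*$-equilibrium. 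What you actually need is $\int_{\E[v|t_b]\ge p}(p-c(v))P(\d t_b,\d v)\le\pius(\Gamma)$ for every $p$. The paper closes this with a left-limit argument: at price $p-\varepsilon$ all types in $T_p$ strictly accept, so $\int_{\E[v|t_b]\ge p}(p-c(v))P$ is the limit as $\varepsilon\downarrow 0$ of Seller's profit from $p-\varepsilon$ under the $\tau^*$-equilibrium strategy; if it exceeded $\pius(\Gamma)$, some price $p-\varepsilon$ would already beat $\pius(\Gamma)$ there, contradicting Seller optimality in that equilibrium. With that short patch inserted, your proof is complete and coincides with the paper's.
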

This ``unique implementation'' lemma corresponds to part \ref{seller3} of \autoref{thm:seller}.
\begin{proof}
	We have established that $\pius(\Gamma)$ is achieved in an equilibrium. Use $\pius(\Gamma)$ as the $\pi_s^*$ in the proof of \autoref{lem:Seller:split} and construct the corresponding information structure. Note that given the information structure, Seller's payoff from any deviation to a price other than $p^*$ is bounded above by $\pi_s^*<\pi_s$. As a result $p^*$ is the uniquely optimal price given Buyer's best response $\widetilde{\alpha}$.\par
	
Now we show that for any other $\alpha'$ that is sequentially rational, Seller's payoff is still bounded above by $\pi^*_s$. Suppose not, to contradiction. Then there is $p$ such that $\int \alpha'(p,t_b)(p-c(v))P(\d t_b,\d v)>\pi^*_s$. Let $T_p$ be the subset of all Buyer's signals $t_b$ for which $\E[v|t_b]=p$ --- signals making Buyer indifferent between buying or not. Note that any two sequentially rational Buyer strategies differ only on $T_p$. We have:
 \begin{align*}
	 &\lim_{p'\to p^-}\int\tilde\alpha(p',t_b)(p'-c(v))P(\d t_b,\d v)\\
 &=\int_{\E[v|t_b]\ge p} (p-c(v))P(\d t_b,\d v)\\
 &=\int\alpha'(p,t_b)(p-c(v)) P(\d t_b,\d v )+\int_{t_b\in T_p}(1-\alpha'(p,t_b))(p-c(v)) P(\d t_b,\d v )\\
 &\ge \int\alpha'(p,t_b)(p-c(v)) P(\d t_b,\d v )>\pi_s^*.
 \end{align*}
 The first two equalities are from the fact that $\tilde\alpha$ and $\alpha'$ differ from $\mathbf{1}_{\E[v|t_b]\ge p}$ only on $T_p$. The inequality is from $\E[v|t_b]=p$ on $T_p$, $v\ge c(v)$ and $\alpha'\le 1$. This implies that there exists $p'<p$ giving Seller payoff strictly above $\pi^*_s$, which is a contradiction.

Therefore, when $\pi_s>\pi_s^*$, the information structure constructed in \autoref{lem:Seller:split} implements the unique equilibrium payoff pair $(\pi_b,\pi_s)$. The result follows from choosing $\pi_s^*=\pius(\Gamma)$.
\end{proof}

\section{Proof of \autoref{thm:buyer}}\label{sec:proof:fb}
\begin{proof}
	We first introduce some notations. Given the continuous cost function $c(v)$, any Buyer belief $\nu\in \Delta(V)$, and any price $p\in V$, define
	\begin{align*}
		\pi(c,\nu,p)&=\int_{v\ge p}(p-c(v))\nu(\d v),\\
		\pi^*(c,\nu)&=\max_{p\in V}\pi(c,\nu,p),\\
		\sigma^*(c,\nu)&=\argmax_{p\in V}\pi(c,\nu,p).
	\end{align*}
	In words, $\pi$ is Seller's payoff from offering an arbitrary price $p$, $\pi^*$ is Seller's payoff from an optimal price, and $\sigma^*$ is the set of optimal prices. Our assumption that $v-c(v)\ge0$ implies that $\pi(c,\nu,p)$ is a left-continuous function of $p$ that only jumps up, and hence it is upper semi-continuous. Therefore, $\pi^*$ is well-defined and $\sigma^{*}(c,\nu)$ is nonempty and compact.
	
	We prove \autoref{thm:buyer} in 4 steps. In step 1, we define a discretized environment for a grid size $d$. In step 2, we construct a distribution of IPDs for the discretized environment. In step 3, we show that as $d \to 0$, the distributions in step 2 converge to a distribution of IPDs whose expectation is the prior $\mu$. In step 4, we construct an information structure and equilibrium for the original environment utilizing the distribution derived in step 3.
	
	\emph{Step 1}. We discretize the problem. Pick any $d>0$. Discretize the support $V$ to a grid $V'=\left\{ v_1,\dots,v_n \right\}$ such that $v_{i+1}-v_i<d$, $v_1\le \underline{v}$ and $v_n>\overbar{v}$.  Let $p^*$ be an element of $\sigma^*(c,\mu)$ and include $p^*$ in $V'$. Define
		\begin{align*}
			\mu'_i&=\int\mathbf{1}_{v_i\le v<v_{i+1}}\mu(\d v),\\
			c'(v_i)&=\int\mathbf{1}_{v_i\le v<v_{i+1}}c(v)\mu(\d v).
	\end{align*}
	Now consider a new environment $\Gamma'=(c',\mu')$ with the discrete support $V'$. $\Gamma'$ augments $\Gamma$ by grouping all Buyer types in interval $[v_i,v_{i+1})$ and assuming Buyer behaves as if the valuation is $v_i$. A key property of the environment $\Gamma'$ is that  $\forall v_i\in V'$, $\pi(\Gamma',v_i)=\pi(\Gamma,v_i)$, that is, Seller's payoff from offering on-grid prices is invariant under the environment change. Since $p^*\in V'$, $\pi^*(\Gamma')=\pi^*(\Gamma)$.\par
	\emph{Step 2}. The following lemma---whose proof is provided after the current proof is completed---implies that that there exist IPDs $\left\{ \nu_j \right\}_{j=1}^J$ and $\left\{ p_j \right\}\in \Delta(J)$ such that $\sum p_j\nu_j=\mu'$ and $\sigma^*(c',\mu')\subset \cap\sigma^*(c',\nu^j)$.

\begin{lemma}\label{lem:idIPD}
 When $\Supp(\mu)$ is finite, there exists IPDs $\{\nu_j\}_{j=1}^J$ and $\{q_j\}\in \Delta(J)$ such that $\sum q_j \nu_j=\mu$ and $\sigma^*(c,\mu)\subset \sigma^*(c,\nu_j)$.
\end{lemma}

	\emph{Step 3}. For each $d_n=\frac{1}{2n}$, go through Steps 1--2 and construct a collection $\left\{ p^j,\nu^j \right\}$. This collection  resembles a probability measure $P_n\in \Delta^2(V)$. By construction, any $\nu\in \Supp(P_n)$ is an IPD satisfying $p^*\in \sigma^*(c,\nu)$. 	We use the following lemma---whose proof is provided after the current proof is completed---to construct a measure $P^*$ whose support contains only those IPDs such that $p^*$ is an optimal price ( $P^*$ is a limit point of $P_n$ under the weak topology.
 
\begin{lemma}\label{lem:IPD:weak:converge}
    Suppose the sequence $(P_n)$ in $\Delta^2(V)$ satisfies $\int \nu P_n(\d \nu)\xrightarrow[]{w} \mu$ and $\forall \nu\in\Supp(P_n)$, $\nu$ is an IPD satisfying $p^*\in \sigma^*(c,\nu)$. Then, $\exists P^*\in \Delta^2(V)$ s.t. $\int \nu P^*(\d \nu)=\mu$ and $\forall \nu\in\Supp(P^*)$, $\nu$ is an IPD satisfying $p^*\in \sigma^*(c,\nu)$.
\end{lemma}
	Then,
	\begin{align}
		\int \pi^*(c,\nu)P^*(\d \nu)=\int \pi(c,\nu,p^*)P^*(\d \nu)=\pi^*(c,\mu).
		\label{eqn:thm:3:1}
	\end{align}

	\emph{Step 4}. Now we define a fully-informed Buyer information structure that implements any $(\pi_b,\pifb(\Gamma))\in \Pset(\Gamma)$. Let $\beta\in[0,1]$ satisfy $\pi_b=\beta (S(\Gamma)-\pifb(\Gamma))$. Take the signal space $T_s=\Delta(V)$ and define the signal distribution by $\int_D P(\d t_s,\d v)=\int_D t_s(\d v)P^*(\d t_s)$. That is, the information structure $\tau=(T_s,P) \in \mathbf{T}_{fb}$ induces Seller's belief $\nu$ according to distribution $P^*(\nu)$.
	
	Buyer's strategy is $\alpha(v,p)=\mathbf{1}_{v\ge p}$, which is obviously optimal. Seller's strategy is $\sigma(p|t_s=\nu)=\beta \delta_{p=\min\Supp(\nu)}+(1-\beta)\delta_{p=\max\Supp(\nu)}$. Then $\forall \sigma'$:
	\begin{align*}
		\int (p-c(v))\mathbf{1}_{v\ge p}\sigma'(\d p|t_s)P(\d t_s,\d v)=&\int (p-c(v))\mathbf{1}_{v\ge p}\sigma'(\d p|\nu)\nu(\d v)P^*(\d \nu)\\
		=&\int \pi(c,\nu,p)\sigma'(\d p|\nu) P^*(\d \nu)\\
		\le&\int \pi^*(c,\nu)P^*(\d \nu)  \qquad \text{\small{because $\pi(c,\nu,p)\le \pi^*(c,\nu)$}},
	\end{align*}
	where the equalities are accounting identities. Meanwhile, Seller's payoff using strategy $\sigma$ is
	\begin{align*}
		\int (p-c(v))\mathbf{1}_{v\ge p}\sigma(\d p|t_s)P(\d t_s,\d v)=&\int \beta\pi(c,\nu,\min\Supp(\nu))+(1-\beta)\pi(c,\nu,\max\Supp(\nu)) P^*(\d \nu)\\
		=&\int \pi^*(c,\nu)P^*(\d \nu)=\pifb(\Gamma),
	\end{align*}
	where the second equality is from $\nu$ being an IPD and the third equality is from \autoref{eqn:thm:3:1}. Therefore, $\sigma$ is optimal for Seller and Seller's equilibrium payoff is $\pifb$. Buyer's payoff is:
	\begin{align*}
		&\int (v-p)\mathbf{1}_{v\ge p}\sigma(\d p|t_s)P(\d t_s,\d v)\\
		=&\int\beta (v-\min\Supp(\nu))\mathbf{1}_{v\ge \min\Supp(\nu)}+(1-\beta)(v-\max\Supp(\nu))\mathbf{1}_{v=\max\Supp(\nu)}\nu(\d v)P^*(\d \nu)\\
		=&\int \beta (v-c(v)-(\min\Supp(\nu)-c(v)))\nu(\d v)P^*(\d \nu)\\
		=&\beta(S(\Gamma)-\int \pi^*(c,\nu)P^*(\d \nu))=\pi_b,
	\end{align*}
	where second equality is from $v-\max\Supp(\nu)\mathbf{1}_{v=\max\Supp(\nu)}=0$, the third equality is from $\nu\in\Supp(P^*)$ being an IPD, and the last equality is from \autoref{eqn:thm:3:1}.
	
	To sum up, we construct $\tau\in \mathbf{T}_{fb}$ such that $(\pi_b,\pifb(\Gamma))\in \Pi^*(\Gamma,\tau)$. Since $(0,S(\Gamma))$ can be implemented by perfect revealing $v$ to Seller, any other $(\pi_b,\pi_s)$ in $\Pset^*_{fb}(\Gamma)$ can be implemented by public randomization, which means Buyer is better informed in the strong sense that his information is a refinement of Seller's.
\end{proof}

\begin{proof}[Proof of \autoref{lem:idIPD}]
    We prove the result by induction. When $|\Supp(\mu)|=1$, the statement is trivially true. Now we assume by induction that the statement is true for $|\Supp(\mu)|\le n$ and prove it for $|\Supp(\mu)|=n+1$. Let $V=\Supp(\mu)=\{v_1,\ldots,v_{n+1}\}$. We discuss two cases separately: \par
  \begin{itemize}  
    \item\emph{Case 1}: $v_i>c_i$ for all $i\le n$. Define $\hat{\nu}_{n+1}=1$ and recursively define
    $$\hat{\nu}_i=\frac{\sum_{j=i+1}^{n+1}\hat{\nu}_j\cdot(v_{i+1}-v_i)}{v_i-c_i}$$
    for $i=n\ldots 1$. Normalize $\{\hat{\nu}_i\}$ to a probability vector $\nu=\frac{1}{\sum_i \hat{\nu}_i}\hat{\nu}$. Then, it is easy to verify that $\nu\in\Delta V$ and $\nu$ is an IPD:
    \begin{align*}
        \pi(c,\nu,v_{i+1})-\pi(c,\nu,v_{i})=\sum_{j=i+1}^{n+1}\nu_j\cdot(v_{i+1}-v_i)-\nu_i\cdot(v_i-c_i)=0,\ \forall i.
    \end{align*}
     Therefore, $\sigma^*(c,\nu)\supset V$.

    \item\emph{Case 2}: $v_{i}=c_{i}$ for some $i\le n$. Let $i_0$ be the smallest $i$ such that this is true. Define $\hat{\nu}_{i_0}=1$ and recursively define $\hat{\nu}_i=\frac{\sum_{j=i+1}^{n+1}\hat{\nu}_j\cdot(v_{i+1}-v_i)}{v_i-c_i}$ for $i=1,\ldots, i_0$. Normalize $\{\hat{\nu}_i\}$ to $\nu=\frac{1}{\sum_i \hat{\nu}_i}\hat{\nu}$. Then, the exactly same argument as in Case 1 implies that $\nu\in\Delta V$ and $\nu$ is an IPD. Moreover, since $v_{i_0}=c_{i_0}$, $\pi^*(c,\nu)=0$. Therefore, $\sigma^*(c,\nu)\supset \Supp(\nu)\cup [v_{i_0},+\infty)\supset V$.
    \end{itemize}
     Next, we ``remove $\nu$ from $\mu$'' to reduce its support size. Let $q=\min \left\{ \frac{\mu_i}{\nu_i}\right\}$ and $\hat{\mu}=\mu-q\cdot \nu$. By definition, $|\Supp(\hat{\mu})|\le n$. Normalize $\hat{\mu}$ to $\mu'=\frac{1}{\sum_i\hat{\mu}_i}\hat{\mu}$. Then, $\forall i,i'\in(1,\ldots,n+1)$,
    \begin{align*}
        \left(\sum_{\ell}\hat{\mu}_{\ell}\right)\left(\pi(c,\mu',v_i)-\pi(c,\mu',v_{i'})\right)=&\pi(c,\mu,v_i)-\pi(c,\mu,v_{i'})-q\cdot \left(\pi(c,\nu,v_i)-\pi(c,\nu,v_{i'})\right)\\
        =&\pi(c,\mu,v_i)-\pi(c,\mu,v_{i'})\\
        \implies \sigma^*(c,\mu')=\sigma^*(c,{\mu}).&
    \end{align*}
    The first equality is from the linearity of $\pi$ and the second equality is from $\sigma^*(c,\nu)\supset V$. Then, by induction, there exists IPDs $\nu_j$ and $q_j$ s.t. $\sum q_j\nu_j=\mu'$ and $\sigma^*(c,\mu)=\sigma^*(c,\mu')\subset\sigma^*(c,\nu_j)$. Therefore, the statement is proved by appending $\nu$ to $(\nu_j)$, and normalizing the probability to $\left(q_j\cdot \sum\hat{\mu_i},q \right)$.
\end{proof}

\begin{proof}[Proof of \autoref{lem:IPD:weak:converge}]
    
	By Prokhorov's theorem, there exists a convergent subsequence of $P_n$; without loss we suppose $P_n\xrightarrow{w}P^*$ (i.e., weak convergence, which is implied by convergence in the Prokhorov metric). Let $\mu_n=\int \nu P_n(\d \nu)$. By assumption, $\mu_n\xrightarrow[]{w}\mu$. It follows that $\int \nu P^*(\d \nu)=\mu$.\footnote{\label{fn:seller:bayes}For any continuous $h(v)$, $\nu\mapsto \int h(v)\nu(\d v)$ is a bounded an continuous function on $\Delta(V)$ under the Prokhorov metric. Therefore, since $\mu_n \xrightarrow{w}\mu$ and $P_n\xrightarrow{w}P^*$,
	\begin{align*}
		\int h(v)\mu_n(\d v)\to \int h(v)\mu(\d v) \quad \text{ and } \quad \iint h(v)\nu(\d v)P_n(\d \nu)\rightarrow\iint h(v)\nu(\d v) P^*(\d \nu).
	\end{align*}
Since $\int h(v)\mu_n(\d v) = \iint h(v)\nu(\d v)P_n(\d \nu)$, it follows that
$\int \nu P^*(\d \nu)=\mu$.
}

Now we show that $\forall \nu\in\Supp(P^*)$, $\nu$ is an IPD. First, \autoref{lem:convergence:support} shows that there exists a sub-sequence $n_k\to\infty$ and  $\nu_{n_k}\in\Supp(P_{n_k})$ such that $\nu_{n_k}\xrightarrow{w}\nu$. Then, $\forall p\in \Supp(\nu)$, there exists a sub-sequence $n_{k_{s}}\to \infty$ and $p_{n_{k_{s}}}\in\Supp(\nu_{n_{k_{s}}})$ such that $p_{n_{k_{s}}}\to p$. \autoref{lem:converge:1} proves that $\pi(c,\nu,p)\ge \varlimsup\pi(c,\nu_{n_{k_{s}}},p_{n_{k_{s}}})=\varlimsup\pi^*(c,\nu_{n_{k_{s}}})$. The equality is from $\nu_{n_{k_{s}}}$ being IPD and $p_{n_{k_s}}$ being in its support. \citet[Theorem 12, part 1]{hart2019better} proves that $\pi^*(c,\nu)\le\varliminf \pi^*(c,\nu_{n_{k_{s}}})$. Therefore, $\pi(c,\nu,p)=\pi^*(c,\nu)$ and hence $\nu$ is an IPD. 
	
	In the previous analysis, if we pick $p=p^*$, then since $p^*\in \Supp(\nu_{n_k})$, it follows that $p_{n_k}= p^*$ and hence trivially $p_{n_k}\to p^*$. Therefore $\pi(c,\nu,p^*)=\pi^*(c,\nu)$.
\end{proof}

\begin{lemma}
	\label{lem:convergence:support}
	Let $(S,\rho)$ be a separable metric space, $\left\{ P_n \right\}\subset \Delta(S)$ and $P_n\xrightarrow{w}P$. Then $\forall s\in\Supp(P)$, $\exists$ sequence $s_{n_k}\in \Supp(P_{n_k})$ s.t. $n_k\to \infty$ and $s_{n_k}\xrightarrow{\rho}s$. 
\end{lemma}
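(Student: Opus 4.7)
The plan is to combine the Portmanteau characterization of weak convergence with the fact that, in a separable metric space, every open set of positive measure meets the support of the measure; a diagonal argument then produces the required subsequence.

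First, fix $s\in\Supp(P)$. By definition of the support, every open ball $B(s,1/k)$ satisfies $P(B(s,1/k))>0$ for each integer $k\geq 1$. Second, invoke the Portmanteau theorem: since $P_n\xrightarrow{w}P$, for every open $U\subset S$,
\[
\liminf_{n\to\infty} P_n(U)\ \geq\ P(U).
\]
Applied to $U=B(s,1/k)$, this gives $\liminf_n P_n(B(s,1/k))\geq P(B(s,1/k))>0$. Hence for each $k$ there is $N_k\in\mathbb{N}$ with $P_n(B(s,1/k))>0$ for all $n\geq N_k$, and without loss I can take $N_1<N_2<\cdots$.

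Third, I would use the standard fact that for any Borel probability measure $Q$ on $S$ and any open $U$ with $Q(U)>0$, we have $U\cap\Supp(Q)\neq\emptyset$. This is where separability plays its role: in a separable (hence second countable) space, $S\setminus\Supp(Q)$ can be written as a countable union of open balls of $Q$-measure zero, and therefore itself has $Q$-measure zero; if $U$ were disjoint from $\Supp(Q)$ it would be contained in this null set, contradicting $Q(U)>0$. Consequently, for each $k$ and each $n\geq N_k$ the set $B(s,1/k)\cap\Supp(P_n)$ is nonempty.

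Finally, construct the subsequence diagonally: set $n_k:=N_k$ (strictly increasing, so $n_k\to\infty$) and choose any $s_{n_k}\in B(s,1/k)\cap\Supp(P_{n_k})$. Then $\rho(s_{n_k},s)<1/k\to 0$, so $s_{n_k}\xrightarrow{\rho}s$, as desired. I expect no real obstacle here; the only subtlety is verifying the open-set-meets-support property, which is precisely where the separability hypothesis is used.
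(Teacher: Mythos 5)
Your proof is correct and rests on the same two ingredients as the paper's: the Portmanteau lower bound $\liminf_n P_n(U)\ge P(U)$ for open balls $U$ around $s$, and the fact that in a separable metric space an open set of positive measure must meet the support. The only difference is presentational—you construct the subsequence directly, whereas the paper argues by contradiction (assuming the supports eventually miss a ball $B_\epsilon(s)$ and deriving $P(B_\epsilon(s))=0$)—so the two arguments are logically equivalent.
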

\begin{proof}
	For any $s\in \Supp(P)$, suppose towards contradiction that the statement is not true. Then we claim that $\exists \epsilon>0$, $N\in\mathbb{N}$ s.t. $\forall n\ge N$ $\Supp(P_n)\bigcap B_{\epsilon}(s)=\emptyset$. Otherwise, $\forall \epsilon>0$, $N\in\mathbb{N}$ exists $n\ge N$ s.t. $\Supp(P_n)\bigcap B_{\epsilon}(s)\neq\emptyset$ $\implies$ pick any $N=k$ and $\epsilon=\frac{1}{k}$, there exists $n_k\ge k$ and $s_{n_k}\in \Supp(P_{n_k})$ s.t. $\rho(s,s_{n_k})<\frac{1}{k}$ and hence the assumption is not true.
	
	Since $\exists \epsilon>0$ and $N$ s.t. $\forall n\ge N$ $\Supp(P_n)\bigcap B_{\epsilon}(s)=\emptyset$, this implies $\varliminf P_n(B_{\epsilon}(s))=0\ge P(B_{\epsilon}(s))$ (by the Portmanteau theorem). This contradicts the assumption that $s\in \Supp(P)$.
\end{proof}

\begin{lemma}
	\label{lem:converge:1}
	Let $c\in C(V)$, $\left\{ \nu_n \right\}\subset \Delta(V)$, $\left\{ p_n \right\}\subset V$. If $\nu_n\xrightarrow{w}\nu$ and $p_n\to p$, then
	\begin{align*}
		\pi(c,\nu,p)\ge \varlimsup \pi(c,\nu_n,p_n).
	\end{align*}
\end{lemma}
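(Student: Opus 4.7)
The plan is to recognize this as a joint upper-semicontinuity statement for the integrand and to reduce it to the Portmanteau theorem applied on the product space $V \times V$. Define $f(v, p) := \mathbf{1}_{v \ge p}(p - c(v))$, so that $\pi(c, \nu_n, p_n) = \int_{V \times V} f\, \d(\nu_n \otimes \delta_{p_n})$ and $\pi(c, \nu, p) = \int_{V \times V} f\, \d(\nu \otimes \delta_p)$.

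First I would verify that $f$ is jointly upper semicontinuous on the compact space $V \times V$. Away from the diagonal $\{v = p\}$, the function $f$ locally equals either the continuous function $p - c(v)$ (on $\{v > p\}$) or the constant $0$ (on $\{v < p\}$), so it is continuous there. At a diagonal point $(p_0, p_0)$ the value is $f(p_0, p_0) = p_0 - c(p_0) \ge 0$ by the maintained assumption $c(v) \le v$; and for any sequence $(v_n, p_n) \to (p_0, p_0)$ the value $f(v_n, p_n)$ is either $0$ (when $v_n < p_n$) or $p_n - c(v_n) \to p_0 - c(p_0)$ (when $v_n \ge p_n$), both of which are at most $f(p_0, p_0)$. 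The crucial role of $c(v) \le v$ is precisely to force the jump of $f$ across the diagonal to be upward, so that USC (and not merely LSC) holds there.

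Next I would note that the product measures $\nu_n \otimes \delta_{p_n}$ converge weakly to $\nu \otimes \delta_p$ on $V \times V$. This is a standard consequence of weak convergence of each marginal together with the fact that $\delta_{p_n}$ is a Dirac, and can be verified directly against bounded continuous test functions using uniform continuity on the compact space $V \times V$. Finally, since $V$ is compact and $c$ is continuous, $f$ is bounded, so the Portmanteau theorem applied to the bounded USC function $f$ yields
\begin{align*}
\varlimsup_{n \to \infty} \pi(c, \nu_n, p_n) = \varlimsup_{n \to \infty} \int f\, \d(\nu_n \otimes \delta_{p_n}) \le \int f\, \d(\nu \otimes \delta_p) = \pi(c, \nu, p),
\end{align*}
as desired. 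The main obstacle I anticipate is the USC verification on the diagonal, since an analogous statement fails without the trading-surplus assumption $c(v) \le v$; the remaining steps are routine applications of standard weak-convergence machinery and fit naturally with the Portmanteau-based arguments used already in the proof of \autoref{lem:seller:bound}.
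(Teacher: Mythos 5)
Your proof is correct, but it takes a genuinely different route from the paper's. The paper sandwiches the indicator $\mathbf{1}_{v\ge p}$ between explicit continuous truncations $h_{\delta,p}$ (the same device used in \autoref{lem:seller:bound} and \autoref{lem:converge:2}) and then runs a two-parameter chain of inequalities in $\eta>\delta>0$ to absorb the moving price $p_n$, finally sending $\eta\to 0$. You instead encode the price as a Dirac mass, view $\pi(c,\nu_n,p_n)=\int f\,\d(\nu_n\otimes\delta_{p_n})$ on the product space $V\times V$, verify that $f(v,p)=\mathbf{1}_{v\ge p}(p-c(v))$ is jointly upper semicontinuous, and invoke the USC half of the Portmanteau theorem. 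All three of your steps check out: the USC verification at the diagonal is exactly where $c(v)\le v$ is needed (it forces the jump across $\{v=p\}$ to be upward), the convergence $\nu_n\otimes\delta_{p_n}\xrightarrow{w}\nu\otimes\delta_p$ follows from uniform continuity of test functions on the compact product, and $f$ is bounded since $V$ is compact and $c$ continuous. Your argument is shorter and more conceptual: it isolates the single structural fact (joint USC of the integrand) driving the result and makes the role of the gains-from-trade assumption transparent. The paper's argument is more elementary and self-contained---it avoids appealing to the USC formulation of Portmanteau and to product-measure convergence---and it recycles the mollifier $h_{\delta,p}$ already introduced for the neighboring lemmas, at the cost of a less transparent bookkeeping of the two small parameters.
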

\begin{proof}
	Define
	\begin{align}
		h_{\delta,p}(v)=\frac{v-p+\delta}{\delta}\wedge [0,1],\label{eqn:h1}
	\end{align}
	where $\cdot\wedge[0,1]$ is the truncation functional on $[0,1]$. Then $h_{\delta,p}$ is a continuous and bounded function and $\mathbf{1}_{v\ge p}\le h_{\delta,p}(v)\le \mathbf{1}_{v\ge p-\delta}$.
Then $\forall \eta>\delta>0$:
	\begin{align*}
		\int_{p-\eta}^{\infty}(p-c(v))\nu(\d v)\ge&\int h_{\delta,p-\eta+\delta}(v)(p-c(v)\nu(\d v))\\
		=&\lim_{n\to \infty} \int h_{\delta,p-\eta+\delta}(v)(p-c(v))\nu_n(\d v)\\
		\ge&\varlimsup_{n\to \infty} \int_{p-\eta+\delta}^{\infty}(p-c(v))\nu_n(\d v)\\
		\ge&\varlimsup_{n\to \infty} \int_{p-\eta+\delta}^{\infty}(p+\eta-c(v))\nu_n(\d v)-\eta\\
		\ge&\varlimsup_{n\to \infty} \int_{p_n}^{\infty}(p+\eta-c(v))\nu_n(\d v)-\eta\\
		\ge&\varlimsup_{n\to \infty} \int_{p_n}^{\infty}(p_n-c(v))\nu_n(\d v)-\eta.
	\end{align*}
	The first inequality above is because $h_{\delta,p-\eta+\delta}(v)\le \mathbf{1}_{v\ge p-\eta}$ and $\forall v\in [p-\eta,p-\eta+\delta]$, $p>v\ge c(v)$. The first equality is from $\nu^n\xrightarrow{w}\nu$ and the integrand being continuous and bounded. The second inequality is from $h_{\delta,p-\eta+\delta}(v)\ge \mathbf{1}_{v\ge p-\eta+\delta}$ and $\forall v\in[p-\eta,p-\eta+\delta]$ $p>v\ge c(v)$. The third inequality is straightforward. The fourth inequality is from $\lim p_n> p-\eta+\delta$. The last inequality is from $\lim p_n<p+\eta$. 
	
	Letting $\eta\to 0$, we obtain $\pi(c,\nu,p)\ge \varlimsup \pi(c,\nu_n,p_n)$.
\end{proof}

\section{Proof of \autoref{prop:negative:frontier}}
\label{appendix:negative}
\begin{proof}
	We first show that $\Pset(\Gamma)$ is included in the set defined in \autoref{prop:negative:frontier}. $\forall (\pi_b,\pi_s)\in \Pset(\Gamma)$, it is clear that $\pi_b\ge0$ and $\pi_s\ge \piall(\Gamma)$. Now we prove that the inequality $\lambda\pi_b+\pi_s\le S_{\lambda}(\Gamma)$ is satisfied for any $\lambda\in[1,\infty)$. Let $\tau$ be the information structure and $(\sigma,\alpha,\nu)$ be an equilibrium  with payoff $(\pi_b,\pi_s)$. We define the following Borel measure $\beta$: for any Borel set $V'\subset \mathbb{R}$,
	\begin{align*}
		\beta(V')=\int_{v\in V'} \alpha(p,t_b)\sigma(\d p|t_s)P(\d t_b,\d t_s,\d v).
	\end{align*}
	In words, $\beta$ calculates the trading probability for a given set of types $V'$. By definition,
	\begin{align*}
	\pi_b&=\int (v-\E[p|v])\beta(\d v),\\
	\pi_s&=\int (\E[p|v]-c(v))\beta(\d v),
    \end{align*}
and hence
\begin{align*}
    \lambda\pi_b+\pi_s&=\int (\E[p|v]-c(v)+\lambda(v-\E[p|v]))\beta(\d v)\\
				&=\int(\lambda v-c(v)-(\lambda-1)\E[p|v])\beta(\d v)\\
				&\le \int (\lambda v-c(v)-(\lambda-1)\underline{v})\beta(\d v)\\
				&\le S_{\lambda}(\Gamma),
	\end{align*}
where the first inequality uses any on-path price being no lower than $\underline{v}$ and $\lambda\ge 1$.\par
Now we show that all payoff pairs $(\pi_b,\pi_s)$ satisfying the inequality constraints can be implemented by equilibrium payoffs in $\Pset(\Gamma)$. $\forall \lambda\in[1,\infty)$ and $\alpha\in[0,1]$, define $\beta_{\alpha}(v)=\mathbf{1}_{\lambda v+\underline{v}> c(v)+\lambda \underline{v}}+\alpha\mathbf{1}_{\lambda v+\underline{v}=c(v)+\lambda \underline{v}}$. Let us ignore the individual rationality constraint $\pi_s\ge 0$ for now. \par
Construct the following information structure: a public signal is sent to both players indicating whether $\beta_{\alpha}(v)=0$. Following the positive signal, construct an information structure as in \autoref{thm:joint} that induces Seller selling with probability one at $p=\underline{v}$ almost surely in the subgame.\footnote{In the subgame following $\beta_{\alpha}(v)>0$, the support of $v$ might not contain $\underline{v}$. This does not affect the consistency of off-path beliefs as we use wPBE as the equilibrium notion (without imposing subgame perfection).} When $\beta_{\alpha}(v)=0$, trading surplus is non-positive, as $c(v)\ge v+(\lambda-1)(v-\underline{v})$ and so an equilibrium with no trade exists. The realized weighted total surplus is exactly $S_{\lambda}(\Gamma)$, with Buyer's payoff $\int \beta_{\alpha}(v)(v-\underline{v})\mu(\d v)$ and  Seller's the remaining $\int \beta_{\alpha}(v)(\underline{v}-c(v))\mu(\d v)$. 
Note that Buyer's payoff is continuously increasing in $\alpha$. Therefore, $\forall \lambda\in[1,\infty)$, we can implement an interval (possibly degenerate) on the frontier $S_{\lambda}(\Gamma)$ defined by $\left\{ (\pi_b^{\alpha},\pi_s^{\alpha})= \left(\int \beta_{\alpha}(v)(v-\underline{v})\mu(\d v),\int \beta_{\alpha}(v)(\underline{v}-c(v))\mu(\d v)\right)\right\}_{\alpha\in[0,1]}$. Since we construct the equilibria explicitly, this interval satisfies all other constraints.\par
Next, we show two key properties of the interval $\left\{ (\pi_b^{\alpha},\pi_s^{\alpha}) \right\}_{\alpha\in[0,1]}$.
\begin{itemize}
	\item $\forall \lambda\ge 1$, $\forall \delta>0$, $(\pi_b',\pi_s')=(\pi_b^1+\delta,\pi_s^1-\lambda\delta)$ violates some frontier $S_{\lambda'}(\Gamma)$ with $\lambda'>\lambda$. It is straightforward to calculate
	\begin{align*}
		&\lambda'\pi_b'+\pi_s'=(\lambda'-\lambda)(\pi_b^1+\delta)+S_{\lambda}(\Gamma),
    \end{align*}
    and hence,
	\begin{align*}
\frac{\lambda'\pi_b'+\pi_s'-S_{\lambda}(\Gamma)}{\lambda'-\lambda}=(\pi_b^1+\delta).
	\end{align*}
	Now we calculate $S_{\lambda'}(\Gamma)$:
	\begin{align*}
		\frac{S_{\lambda'}(\Gamma)-S_{\lambda}(\Gamma)}{\lambda'-\lambda}=&\frac{\int_{\lambda(v-\underline{v})+\underline{v}-c(v)\ge 0}(\lambda'-\lambda)(v-\underline{v})\mu(\d v)}{\lambda'-\lambda}\\
																																			&+\frac{\int_{\lambda(v-\underline{v})+\underline{v}-c(v)\in[(\lambda'-\lambda)(\underline{v}-v),0)}(\lambda'(v-\underline{v})+\underline{v}-c(v))\mu(\d v)}{\lambda'-\lambda}
\\
\to&\int\beta_1(v)(v-\underline{v})\mu(\d v)=\pi_b^1\ \text{when }\lambda'\to \lambda.
\end{align*}
The limit is derived by canceling out $(\lambda'-\lambda)$ in the first line and observing the integrand is bounded by $(\lambda'-\lambda)(v-\underline{v})$ in the second line. Since $\delta>0$, we have that when $\lambda'-\lambda$ is sufficiently small, $\lambda'\pi_b'+\pi_s'>S_{\lambda'}(\Gamma)$, violating the frontier $S_{\lambda'}(\Gamma)$.
\item $\forall \lambda> 1$, $\forall \delta>0$, $(\pi_b',\pi_s')=(\pi_b^0-\delta,\pi_s^0+\lambda\delta)$ violates some frontier $S_{\lambda'}(\Gamma)$ with $\lambda'<\lambda$. The argument is symmetric.
\end{itemize}\par
Any point on the frontier $S_1(\Gamma)$ between $(0,S_1(\Gamma))$ and $(\pi_b^1,\pi_s^1)$ can be implemented by public randomization. Therefore, any payoff pair on the envelope of all frontiers is implementable (while ignoring Seller's individual rationality constraint). Then we can just truncate below by the extra constraint $\pi_s\ge 0$. The implementation of $(0,\piall(\Gamma))$ is trivial. Then public randomization implements all other points in the set. 
\end{proof}

\section{Proof of \autoref{thm:seller_linear}}
\label{appendix:seller:linear}
\begin{proof}
	As discussed in the main text, it is sufficient to show the existence of cdf $G(v)$ such that 1) $G\in D(\mu)$, 2) $G$ is an IPD, 3) $\exists p\in\Supp(G)$ s.t. $\int_{\ubar{v}}^pG(s)\d s=\int_{\ubar{v}}^pF(s)\d s$. \par
	First, we show that $\forall v_*\in [\ubar{v},\E_{\mu}[v]]$ s.t. $v_*>c(v_*)$ and $v_*\ge c(\E[v])$, IPD $G_{v_*}$ exists.\par
	\emph{Case 1}: $\lambda\neq 1$. The indifference condition of IPD is equivalent to:
	\begin{align*}
		&\frac{\d}{\d v}\int_v^{v^*}(v-c(s))\d G_{v_*}(s)=0\\
		\iff&-(v-c(v))g_{v_*}(v)+(1-G_{v_*}(v))=0\\
			\iff&(c(v)-v)\d \log\left( 1-G_{v_*}(v) \right)=1\\
				\iff& G_{v^*}(v)=1-C \left( c(v)-v \right)^{\frac{1}{\lambda-1}}.
	\end{align*}
	Using condition $G_{v_*}(v_*)=0$, we can pin down $C$:
	\begin{align*}
		G_{v_*}(v)=1-\left( \frac{v-c(v)}{v_*-c(v_*)} \right)^{\frac{1}{\lambda-1}}.
	\end{align*}
	Lastly, $v^*$ can be pinned down using the following condition:\footnote{It is easy to verify that the condition is equivalent to $\int v\d G_{v_*}(v)=\E_{\mu}[v]$.}
	\begin{align*}
		&(1-G_{v_*}(v^*))(v^*-c(v^*))=v_*-c(\E_{\mu}[v])\\
		\implies&(1-\lambda)v^*-\gamma=\left( v_*-c(\E[v]) \right)^{\frac{\lambda-1}{\lambda}}\left(  (1-\lambda)v_*-\gamma \right)^{\frac{1}{\lambda}}.
	\end{align*}
	Note that if $v_*\to\E[v]$, then $v^*\to\E[v]$. One can also verify that $v^*$ increases when $v_*$ decreases:
	\begin{align*}
		\frac{\d v^*}{\d v_*}=(v_*-\E[v])(v_*-c(\E[v]))^{-\frac{1}{\lambda}}( (1-\lambda)v_*-\gamma)^{\frac{1-\lambda}{\lambda}}\le 0.
	\end{align*}\par
	\emph{Case 2}: $\lambda=1$ (hence $\gamma<0$). The indifference condition of IPD is equivalent to:
\begin{align*}
	&\frac{\d}{\d v} \log\left(1- G_{v_*}(v) \right)=\frac{1}{\gamma}\\
	\iff& G_{v_*}(v)=1-C\cdot e^{\frac{v}{\gamma}}.
\end{align*}
We pin down $C$ and $v^*$ using the CDF at $v_*$ and the mean-preserving-spread condition:
\begin{align*}
	\begin{dcases}
		C=e^{-\frac{v_*}{\gamma}}\\
		v^*=v_*+\gamma\log\left( \frac{v_*-c(\E_{\mu}[v])}{-\gamma} \right).
	\end{dcases}
\end{align*}\par

Second, we show that there exists $v_*$ s.t the corresponding IPD $G_{v_*}(v)$ satisfies condition 1) and 3). It is trivial that if $v_*=\E_{\mu}[v]$, then $G_{v_*}$ has unit mass at $\E_{\mu}[v]$, which is included in $D(\mu)$. By the linearity of $c$, $v=c(v)$ can only happen on the boundaries of $V$. Therefore, $v_*=\inf\left\{ v'\in(\max\{\ubar{v},c(\E[v])\},\E_{\mu}[v]]\big| G_{v'}\in D(\mu) \right\}$ is well defined. We now consider three cases separately:
\begin{itemize}
	\item \emph{Case 1}: $v_*>\max\{c(v_*),c(\E[v])\}$. In this case $\nu_{v_*}$ is well defined. By the formula of $G_{v'}(v)$, it is continuous in $v'$ for each $v$. Obviously, CDFs are uniformly bounded. So by dominated convergence theorem, $\forall q$:
		\begin{align*}
			\int_{\ubar{v}}^{v}G_{v_*}(s)\d s=\lim_{v'\to v_*^+}\int_{\underline{v}}^vG_{v'}(s)\d s.
		\end{align*}
		This implies $G_{v_*}\in D(\mu)$. Now we claim that there exists $p\in [v_*,v^*]$ such that $\int_{\ubar{v}}^p F(s)\d s=\int_{\ubar{v}}^p G_{v_*}(d)\d s$. If not, this implies $\int_{\ubar{v}}^p F(s)\d s<\int_{\ubar{v}}^p G_{v_*}(d)\d s$ $\forall p\in[v_*,v^*]$. Then choosing $v_*$ slightly smaller, $G_{v_*}$ is still in $D(\mu)$, contradiction.
	\item \emph{Case 2}: $v_*=c(v_*)\ge c(\E[v])$. We show that this case is never possible. Since $c$ is linear, this can happen only when $v_*=\ubar{v}$ and $\lambda\le0$. Consider $v_*'=\ubar{v}+\epsilon$ where $\epsilon>0$. Then $G_{v_*'} \in D(\mu)$ when $\varepsilon$ is very small. However, $v^*$ is pinned down by:
		\begin{align*}
			(1-\lambda)v^*=\gamma+(v_*'-c(\E[v]))^{\frac{\lambda-1}{\lambda}}( (1-\lambda)\epsilon )^{\frac{1}{\lambda}}.
		\end{align*}
		When $\epsilon\to 0$, $v^*\to \infty$, so $G_{v_*'}\not\in D(\mu)$ for sufficiently small $\epsilon$, contradiction.
	\item \emph{Case 3}: $v_*=c(\E[v])>\ubar{v}$. In this case, $G_{v_*}$ gives Seller zero profit and $G_{v^*}\in D(\mu)$. Then $\pius(\Gamma)=0$. So the proof of \autoref{thm:seller_linear} is already done. \qedhere
\end{itemize}
\end{proof}

\section{Proof of \autoref{coro:binary}}
\label{appendix:coro:binary}
\begin{proof}
	It straightforward to verify that the solution to \autoref{eqn:binary} is unique; denote it by $p$.
	When $V$ is binary, $c(v)$ is trivially affine. Let $\lambda=\frac{c(v_2)-c(v_1)}{v_2-v_1}$ and $\gamma=c(v_i)-\lambda v_i$. Then, \autoref{ass:linear} is satisfied and \autoref{thm:seller_linear} applies. Let $v_*$ be the corresponding parameter defining profit minimizing IPD. Since $V$ is binary, $\int_{\ubar{v}}^vF(s)\d s$ is a piecewise linear function with two kinks at $v_1,v_2$. Meanwhile, $\int_{\ubar{v}}^vG_{v_*}(s)\d s$ is strictly convex on its support $(v_*,v^*)$. Therefore, $\int_{\ubar{v}}^vF(s)\d s$ can not intersect $\int_{\ubar{v}}^vG_{v_*}(s)\d s$ at any $v\in (v_*,v^*)$. So either ($v_*=v_1$ and $v^*\le v_2$) or ($v^*=v_2$ and $v_*\ge v_1$). 
    
    We begin with the conjecture that $v^*=v_2$. This implies:
	\begin{align*}
		&(1-G_{v_*}(v_2))(v_2-c(v_2))=v_*-\E_{\mu}[c(v)]\\
		\implies&		(v_*-c(v_*))^{\frac{1}{\lambda-1}}(v_*-\E_{\mu}[c(v)])=(v_2-c(v_2))^{\frac{\lambda}{\lambda-1}},
	\end{align*}
    i.e., $v_*$ solves \autoref{eqn:binary} and $v_*=p$. Therefore, only when $p\ge v_1$ the conjecture is valid, in which case $p$ is an optimal price and $\pius(\Gamma)=p-\E[c(v)]$. 
    
    Otherwise, if $p<v_1$, the conjecture $v^*=v_2$ is not valid, so $v_1$ is an optimal price and $\pius(\Gamma)=v_1-\E[c(v)]$. To sum up, $\pius(\Gamma)=\max\{p,v_1\}-\E[c(v)]$.
\end{proof}

\bibliographystyle{ecta}
\bibliography{KZ}

\end{document}